\def\BibTeX{{\rm B\kern-.05em{\sc i\kern-.025em b}\kern-.08em
    T\kern-.1667em\lower.7ex\hbox{E}\kern-.125emX}}
\newtheorem{Theorem}{Theorem}
\newtheorem{Definition}{Definition}
\newtheorem{Proposition}{Proposition}
\newtheorem{Remark}{Remark}
\newtheorem{Lemma}{Lemma}
\newtheorem{Corollary}{Corollary}
\DeclareMathAlphabet\mathbfcal{OMS}{cmsy}{b}{n}
\newcommand{\argmin}{\mathop{\rm argmin}\limits}
\newcommand{\argmax}{\mathop{\rm argmax}\limits}
\let\mybibitem\bibitem
\renewcommand{\bibitem}[1]{%
    \ifboolexpr{
    test {\ifstrequal{#1}{ac1993}}
  }
    {\color{black}\mybibitem{#1}}
    {\color{black}\mybibitem{#1}}%
}
\def\A{\mathbf{A}}
\def\N{\mathbf{N}}
\def\H{\mathbf{H}}
\def\tH{\Tilde{\mathbf{H}}}
\def\RG{\mathcal{R}_{\rm G}}
\def\RC{\mathcal{R}_{\rm C}}
\def\tX{\Tilde{\mathbf{X}}}
\def\tx{\Tilde{\mathbf{x}}}
\def\tZ{\Tilde{\mathbf{Z}}}
\def\tz{\Tilde{\mathbf{z}}}
\def\Y{\mathbf{Y}}
\def\y{\mathbf{y}}
\def\Z{\mathbf{Z}}
\def\bSigma{{\bf \Sigma}}
\begin{document}

\title{Secret-Key Generation from Private Identifiers \\ under Channel Uncertainty}

\author{Vamoua~Yachongka$^{\orcidlink{0000-0001-7862-0005}}$,~\IEEEmembership{Member,~IEEE,} and R\'emi A. Chou$^{\orcidlink{0000-0003-4431-3175}}$,~\IEEEmembership{Member,~IEEE,}
\thanks{V.\ Yachongka and R.\ Chou are with the Department of Computer Science and Engineering, The University of Texas at Arlington, Arlington, TX 76019 USA. Corresponding author: Vamoua Yachongka (e-mail: va.yachongka@ieee.org).}
\thanks{A shorter version of this paper was presented at the 2024 IEEE Information Theory Workshop (ITW 2024) \cite{va-remi-itw}. This work was supported in part by NSF grant CCF-2425371.}
}
\markboth{IEEE TRANSACTIONS ON INFORMATION FORENSICS AND SECURITY,}%
{Shell \MakeLowercase{\textit{et al.}}: A Sample Article Using IEEEtran.cls for IEEE Journals}

\maketitle

\begin{abstract}
This study investigates secret-key generation for device authentication using physical identifiers, such as responses from physical unclonable functions (PUFs). The system includes two legitimate terminals (encoder and decoder) and an eavesdropper (Eve), each with access to different measurements of the  identifier. From the device identifier, the encoder generates a secret key, which is securely stored in a private database, along with helper data that is saved in a public database accessible by the decoder for key reconstruction. Eve, who also has access to the public database, may use both her own measurements and the helper data to attempt to estimate the secret key and identifier. Our setup focuses on authentication scenarios where channel statistics are uncertain, with the involved parties employing multiple antennas to enhance signal reception. Our contributions include deriving inner and outer bounds on the optimal trade-off among secret-key, storage, and privacy-leakage rates for general discrete sources, and showing that these bounds are tight for Gaussian sources.
\end{abstract}

\begin{IEEEkeywords}
Capacity region, compound channels, multiple outputs, key generation, privacy leakage, PUFs.
\end{IEEEkeywords}

\section{Introduction}
\IEEEPARstart{T}{he} Internet of Things (IoT) is a rapidly growing technology that enables numerous sensors and small-chip devices to interact and exchange information over the internet. However, ensuring security and privacy in IoT communications presents significant challenges compared to conventional networks due to the diverse range of applications and resource constraints of these devices \cite{Mohamad-2022}. To help address these difficulties, recent efforts have focused  on developing security protocols at the physical layer for authenticating devices.

Secret-key generation using physical identifiers, such as responses from physical unclonable functions (PUFs), is a promising protocol for device authentication because it offers several advantages, including simple designs, low costs, and eliminating the need to save the secret key on the device \cite{gunlue2020}. A PUF is defined as a physical function that for a given input (challenge), provides an output (response) that serves as a unique identifier for each device \cite{pappu2001,puf2002}.
Some examples of PUFs are static random access memory (SRAM) PUFs and ring-oscillator (RO) PUFs.

\begin{figure}[!t]
\centering
   \vspace{-1mm}
\includegraphics[scale=0.64]{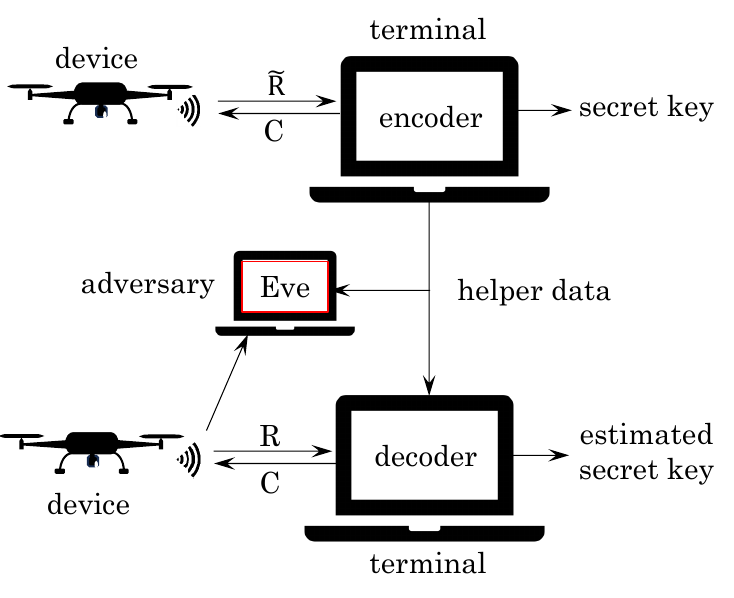}
   \vspace{-2mm}
   \caption{An authentication scheme based on secret-key generation with PUFs. The eavesdropper (Eve) is a passive adversary who is interested in learning the secret key and the source identifier, but does not interfere with the communication mechanism of the system.}
   \vspace{-3mm}
   \label{intro-fig}
\end{figure}

Secret-key authentication using PUFs is illustrated in Fig.~\ref{intro-fig}~\cite{suh-2007}\footnote{\cite{suh-2007} does not consider the presence of Eve, but is included in the figure to facilitate understanding of our system model in Section \ref{sect2}.} and consists of an enrollment phase and an authentication phase. During the enrollment phase, the terminal (encoder) challenges the device, i.e., the PUF embedded in the device, with a challenge C and gets a response $\widetilde{\rm R}$, from which the encoder generates a secret key and helper data. The secret key is securely stored in a private database, while the helper data are saved in a public database, which can be accessed by both the decoder and Eve. In the authentication phase, the terminal (decoder) challenges the device by sending the same challenge C, which produces a different response R due to noise effects. The decoder reconstructs the secret key based on both the response and helper data from the public database and then compares it with the one saved in the private database. If they match, the device is successfully authenticated; otherwise, the authentication fails.

In subsequent discussions, the response of a PUF unaffected by noise is referred to as the {\em source identifier}. The responses of a PUF observed at the terminals and Eve through communication channels  are called the {\em observed identifiers}. It is worth noting that the source identifiers are assumed to be fixed and thus the secret-key generation model considered in this paper corresponds to the source-type model, unlike the channel-type models, where the source distribution can be controlled \cite{ac1993}.

\subsection{Motivations}\label{secm}
We study the capacity region of secret-key generation from source identifiers for a setup involving compound authentication channels with multiple outputs. The motivation for considering compound-channel settings is to capture a situation in which device authentication takes place in environments where the channel statistics may not be perfectly known. This contrasts with most previous studies, which assume that the encoder and decoder have perfect knowledge of the source and channel statistics of the systems.

For example, as shown in Fig. \ref{intro-fig}, consider a situation where the decoder needs to authenticate a flying drone. As the channel state information (CSI) of the channel from the drone to the decoder may fluctuate, it makes it difficult for the decoder to obtain the exact CSI. Additionally, Eve is unlikely to share her CSI with the legitimate terminals. Thus, compound channels are used to model the channels to the decoder and Eve. In this setting, the encoder and decoder do not possess precise CSI of the relevant channels but are aware that these channels belong to certain predefined sets.

Additionally, we consider multiple outputs for the channels to the encoder, decoder, and Eve to capture the circumstance where these parties may deploy multiple antennas to enhance signal reception. Note that having more antennas can increase the correlation between Eve's observation and the source identifier, giving her an advantage in learning the secret key and the source identifier. Hence, in this setting,  we want to quantify the potential leakage to Eve from both security and privacy perspectives.

Finally, in practice, certain types of PUFs produce continuous-value identifiers. For example, the source of RO PUFs can be modeled as a Gaussian distribution \cite{gebali-22}. Additionally, a number of communication channels are sometimes approximated as additive white Gaussian noise (AWGN) channels. This motivates us to study setups with Gaussian sources and AWGN channels.

\subsection{Related Work} \label{secrw}

Secret-key generation using PUFs\footnote{PUF and biometric identifiers share similar characteristics, and thus, the theoretical results developed for one can be applied to the other as well \cite{gunlue2020}.} has been studied from information-theoretic perspectives in \cite{itw3,lhp}. Later, several extensions of this model were found in \cite[Ch. 4]{iw-book} for Gaussian sources, \cite{chou2019} for separated and combined enrollments, and \cite{kuster2019} for multiple rounds of enrollments and authentications. Limited storage rate was introduced to the model in \cite{KY}. Furthermore, the fundamental limits among secret-key, storage, and privacy-leakage rates when Eve also has a correlated sequence of the source were characterized in \cite{kc2015} for discrete sources and \cite{vyo2022-isit} for Gaussian sources. This model is similar to the key-agreement problem with forward communication only studied in \cite{cn2000}, \cite{wataoha-tifs}, \cite{chou2015}, but an additional privacy constraint is imposed in the problem formulation to limit information leakage on the source identifier.

More recent works have considered a setup that incorporates a noisy channel in the enrollment phase \cite{gunlu2018,gksc2018,vamoua-tifs}. The channel is modeled to account for the noise introduced to the source identifier during the enrollment process, providing a more general framework, as signals generated by a PUF are inherently affected by noise. Further progress in this setting has been investigated in \cite{itw-tit-2015,kitti-2016,vy2020,zhou2022}, addressing user identification.

Secret-key generation with PUFs for compound sources has been studied in \cite{tavangaran2017,Grigorescu2017} for  the generated-secret (GS) model and the chosen-secret (CS) model. In the GS model, the secret key is generated using the observed identifier at the encoder. In contrast, the CS model assumes that the secret key is independently and uniformly chosen in advance. Relevant applications of the GS model include field-programmable gate array (FPGA) based key generation with PUFs \cite{colombier-2017,ANA2021} and that of the CS model can be seen in key-binding biometric authentication \cite{Jain-2008} and fuzzy commitment schemes \cite{Ignatenko-2010,gunlu-icc}. Some extensions on this setting are explored in \cite{vu-2021,zhou-allerton} to incorporate user identification. Similar problems can be found in \cite{Remi-2013-GlobalSIP,tavangaran2017-tifs,tavangaran-2018,sultana-2021} for key generation where the privacy constraint is not imposed and in \cite{liang-compound,Bjelakovic-2013,campello-2020,remi-2022-tifs} for compound wiretap channels. We note that the works \cite{liang-compound,Bjelakovic-2013,campello-2020,remi-2022-tifs} focus on compound channels under the channel-type model, whereas our work addresses compound structure in the source-type model.

\subsection{Main Challenges and Contributions}

We begin by explaining challenges of proving the achievability part for general discrete sources. In \cite{tavangaran2017-tifs}, key generation for compound sources without the privacy constraint is investigated. While \cite{tavangaran2017-tifs} derives a single-letter inner bound for discrete sources  and a single-letter outer bound for degraded sources, we establish single-letter  inner and outer bounds for  discrete sources and also characterize the capacity region for Gaussian sources, which require different approaches from the discrete case. The key differences for inner-bound derivations between the work \cite{tavangaran2017-tifs} and ours are twofold.

First, the techniques used for analyzing the secret-key uniformity and secrecy-leakage constraints are distinct. In \cite[Th.~1]{tavangaran2017-tifs}, the secret key is derived from the shared randomness between encoder and decoder, and the analyses of the two constraints rely on extending the method proposed in \cite{CK-book} for non-compound sources. Our approach, in contrast, aligns with \cite{gksc2018}, where the secret key is generated through index mapping, and the analyses of the constraints build upon the technique used in \cite{liang-compound} for analyzing the secrecy constraint in compound wiretap channels.

Second, the privacy-leakage constraint is not considered in \cite{tavangaran2017-tifs}. In our problem formulation, as in \cite{gksc2018,vamoua-tifs}, this constraint is imposed and quantified by the mutual information between the source identifier and the helper data, conditioned on Eve's observation. Its analysis is not straightforward because the helper data does not have an independent and identically distributed (i.i.d.) structure: although the encoder observes an i.i.d. identifier sequence, it generates the helper data based on the entire block rather than on individual symbols.

In the converse part, for a given channel state, the proofs of the GS and CS models mirror the ones in \cite[Th.~3~and~4]{gksc2018} with a proper modification for the privacy-leakage analysis as the definition is distinct. These results are then generalized to the compound-channel settings by taking the intersection over all possible channel states to establish the outer bounds. As a result, the inner region first involves an optimization carried over the distributions of auxiliary random variables, and then a minimization of the index pair for channel states. In contrast, the order of these two operations is reversed in the outer region. This leads to a gap between the inner and outer bounds, similar to the conclusion drawn in \cite{liang-compound} for compound wiretap channels.

However, we show that, for a noiseless enrollment channel, our inner and outer regions coincide for Gaussian sources, providing a complete capacity characterization. The main challenging aspect arises in proving the converse part. Given the multiple-antenna settings at the legitimate terminals and Eve, the vector-form observations are not stochastically degraded in general \cite{weingarten-2006}. We use sufficient statistics \cite{cover} to convert the vector problem into a scalar one. However, after this conversion, showing that all constraints of the original problem definition are preserved is challenging, and it is unclear whether the same expressions of the outer bounds for general discrete sources also hold for the scalar variables. Therefore, we cannot directly apply the technique in \cite[Appx. B]{wataoha2010} to eliminate the second auxiliary random variable. In this paper, we instead derive new single-letter expressions of outer bounds for the Gaussian case using  scalar random variables.

Another difficulty arises in proving the converse for the parametric expression of Gaussian sources. In the analysis of the model without side information at Eve \cite[Appx. D]{iw-book}, \cite{vy2}, the conditional entropy power inequality (EPI) plays an important role. However, the EPI is insufficient to prove the converse for all possible values of the optimization parameter in our problem. To overcome this issue, we adopt a distinct method introduced in \cite[Sect. IV-C]{ekrem-2014}, using Fisher information. This approach enables us to derive the outer region that coincides with the inner one for any value of the optimization parameter.

Our main contributions are summarized as follows:
\begin{itemize}
    \item We derive inner and outer bounds on the capacity regions of secret-key, storage, and privacy-leakage rates of the GS and CS models for discrete sources.
    \item We provide complete characterizations of the capacity regions of the GS and CS models for Gaussian sources by demonstrating the existence of a saddle point at which the inner and outer bounds coincide.
    \item {We conduct numerical calculations for the Gaussian case to illustrate how the change of the number of antennas at the decoder and Eve affects the secret-key and storage rates. The results show that increasing the number of antennas at the decoder leads to a higher secret-key rate, while increasing antennas at Eve reduces the secret-key rate. Nevertheless, {even if Eve has more antennas}, a positive secret-key rate is still achievable as long as the worst channel power gain at the decoder is greater than the best channel power gain at Eve. Moreover, we compare the secret-key and privacy-leakage rates between the GS and CS models under the same values of storage rates. The results reveal that in the low storage-rate regime, the GS model outperforms {the CS model} in terms of secret-key rate, whereas the CS model provides better privacy-leakage performance. In the high storage-rate regime, the CS model is {better suited as it can achieve the same secret-key rate as the GS model but with lower privacy leakage}}.
\end{itemize}

 Our results recover, as special cases,  results derived in previous works. For discrete sources, when only the channel to the decoder is compound and all channels have a single output, the inner and outer bounds match the preliminary result given in \cite[Props. 1 and 2]{va-remi-itw}. Additionally, the inner and outer bounds are tight for single-output and non-compound channels, and recover \cite[Th. 3 and 4]{gksc2018} without action cost. For Gaussian sources, as detailed in Section~\ref{continuous-sect}, our results recover as special cases the capacity regions derived in \cite[Ch. 4]{iw-book} and \cite{vamoua-tifs} for single-output and non-compound channels.

\subsection{{Modeling Assumptions}}
{In general, PUF responses from devices are correlated. However, techniques such as transform coding-based algorithms \cite{gunlue2020} and principal component analysis \cite{kelkboom-2010} can be applied to convert these responses into a sequence with almost independent symbols. Therefore, we assume that each symbol in the source and observed identifier sequences is i.i.d. generated. Additionally, we assume that the database that stores helper data is public, e.g., in the cloud, and accessible to both the decoder and Eve. These modeling assumptions are consistent with those used in prior works \cite{itw3,lhp,gunlu2018,gksc2018,vamoua-tifs}}.

\subsection{Notation and Paper Organization}
$\mathbb{R}_+$ is the set of non-negative real numbers. For any $a,b \in \mathbb{R}$, define $[a:b] \triangleq [\lfloor a \rfloor,\lceil b \rceil] \cap \mathbb{N}$. Italic uppercase $X$ and lowercase $x$ denote a random variable and its realization, respectively. Boldface letters $\bf X$ and $\bf x$ represent a collection of random variables and its realization. $X^n$ denotes the vector $ (X_1,{\dots},X_n)$ and $X_t$ represents the $t$-th element in the  vector.  $X_k^t$ stands for a partial sequence $(X_k,\dots,X_t)$ for any $[k:t] \subseteq [1:n]$.  $\sigma^2_X$ and $\bSigma_{\Y}$ denote the variance of $X$ and the covariance matrix of ${\bf Y}$. $\mathcal{N}(0,\sigma^2)$ denotes the Gaussian distribution with zero mean and variance $\sigma^2$.
$\mathcal{T}^n_{\delta}(X)$ denotes the set of $\delta$-strongly typical sequences according to $P_X$ \cite{GK} and the random variable inside the parentheses is omitted, e.g., $\mathcal{T}^n_{\delta}$, when it is clear from the context. Additionally, the set of conditionally $\delta$-typical sequences is denoted as $\mathcal{T}^n_{\delta}(XY|z^n) \triangleq \{(x^n,y^n): (x^n,y^n,z^n) \in \mathcal{T}^n_{\delta}\}$ {for a given $z^n \in \mathcal{Z}^n$}.

The remainder of the paper is organized as follows. In Section~\ref{sect2}, we state the problem definitions for the GS and CS models. We present our main results in Section \ref{sect3}. Proofs of our main results are available in the appendices. Finally, we provide concluding remarks and some future directions in Section~\ref{sect4}.

\section{Problem Statement} \label{sect2}
The source identifier $X^n$ is i.i.d. according to $P_X$. The terminals do not have direct access to this identifier but can only observe its noisy versions.
The encoder, decoder, and Eve are equipped with $(\Omega_{\Tilde{X}},\Omega_Y,\Omega_Z) \in \mathbb{N}^3$ receiver antennas, respectively. Furthermore, there are {$|\mathcal{K}|$} possible states for the channel to the decoder $P_{\Y_k|X}$ with $k \in \mathcal{K}$, and {$|\mathcal{L}|$} possible states for the channel to Eve $P_{\Z_l|X}$ with $l \in \mathcal{L}$ in the authentication phase.\footnote{Considering multiple states for the enrollment channel is unnecessary since its state could be estimated at the encoder and shared with the decoder through the helper data with a negligible cost.}  When the channels are in State $(k,l)$, the setting is depicted in Fig.~\ref{system-model}. The vector-form random variables $\tX^n \triangleq [\Tilde{X}^n_1,\Tilde{X}^n_2,\cdots,\Tilde{X}^n_{\Omega_{\Tilde{X}}}]^\intercal$, $\Y^n_k \triangleq [Y^n_{k1},\cdots,Y^n_{k\Omega_Y}]^\intercal$, and $\Z^n_l \triangleq [Z^n_{l1},\cdots,Z^n_{l\Omega_Z}]^\intercal$ denote the outputs of the source identifier $X^n$ via the channel to the encoder, $(\mathcal{X},P_{\tX|X}, \Tilde{\mathbfcal{X}})$, and the channels to the decoder and Eve $(\mathcal{X},P_{\Y_k\Z_l|X}, \mathbfcal{Y}_k\times \mathbfcal{Z}_l)$, respectively. The joint distribution of the system is
\begin{align}
    P_{\Tilde{\bf X}^nX^n{\bf Y}^n_k{\bf Z}^n_l}
    &\triangleq \textstyle \prod_{t=1}^n P_{\Tilde{\bf X}_t|X_t}\cdot P_{X_t} \cdot P_{\Y_{k,t}\Z_{l,t}|X_t}. \label{jointd}
\end{align}

Secret-key generation strategies  are formally defined below. Let $\mathcal{S} \triangleq [1:2^{nR_S}]$ and $\mathcal{J} \triangleq [1:2^{nR_J}]$.

\begin{figure}[!t]
\centering
\includegraphics[scale=0.77]{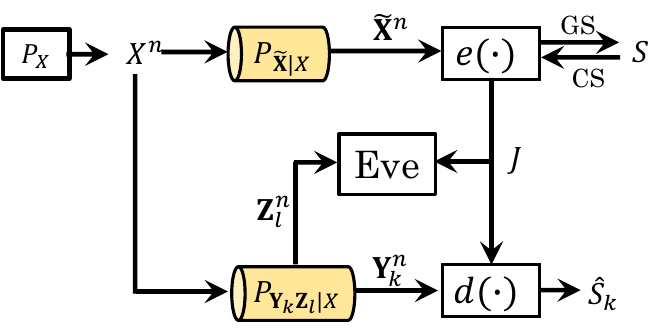}
\caption{{Illustration of the system model in State $(k,l)$}.}
\label{system-model}
\end{figure}

\begin{Definition}[GS model] \label{enc-dec-gs}
 For the GS model, a $(2^{nR_S},n,R_J,R_L)$ secret-key generation strategy  consists of:
\begin{itemize}
    \item Encoding mapping $e:\Tilde{\mathbfcal{X}}^n \rightarrow \mathcal{J}\times \mathcal{S}$;
    \item Decoding mapping $d:\mathbfcal{Y}_k^n\times\mathcal{J} \rightarrow \mathcal{S}$;
\end{itemize}
and operates as follows:
\begin{itemize}
    \item The encoder observes  $\tX^n$ and generates the helper data $J \in \mathcal{J}$ and a secret key $S \in \mathcal{S}$, as $(J,S) \triangleq e(\tX^n)$;
    \item The helper data $J$ is stored in a public database, accessible to anyone;
    \item From $\Y^n_k$ and  $J$, retrieved from the public database, the decoder estimates $S$   as $\hat{S}_k \triangleq d(\Y^n_k,J)$.
\end{itemize}
\end{Definition}
\begin{Definition}[CS model] \label{enc-dec-cs}
For the CS model, a $(2^{nR_S},n,R_J,R_L)$ secret-key generation strategy consists of:
\begin{itemize}
    \item A secret key {$S$}, chosen uniformly at random in $\mathcal{S}$ and independently of $(\tX^n,X^n,\Y^n_k,\Z^n_l)$;
    \item Encoding mapping $e:\Tilde{\mathbfcal{X}}^n{}\times \mathcal{S} \rightarrow \mathcal{J}$;
    \item Decoding mapping $d:\mathbfcal{Y}^n_k\times\mathcal{J} \rightarrow \mathcal{S}$;
\end{itemize}
and operates as follows:
\begin{itemize}
    \item  From $\tX^n$ and  $S$, the encoder generates  $J \triangleq e(\tX^n,S)$;
    \item The helper data $J$ is saved in a public database, accessible to anyone;
    \item From $\Y^n$ and $J$, the decoder  estimates $S$  as $\hat{S}_k \triangleq d(\Y^n_k,J)$.
\end{itemize}
\end{Definition}

In the following, we write $(\displaystyle \max_{k \in \mathcal{K}},\displaystyle \min_{k \in \mathcal{K}})$ and $(\displaystyle \max_{l \in \mathcal{L}},\displaystyle \min_{l \in \mathcal{L}})$ as $(\displaystyle \max_{k},\displaystyle \min_{k})$ and $(\displaystyle \max_{l},\displaystyle \min_{l})$, respectively, for simplicity.

\begin{Definition}[GS model] \label{def1}
A tuple of secret-key, storage, and privacy-leakage rates $(R_S,R_J,R_L)\in \mathbb{R}^3_+$ is achievable for the GS model if, for sufficiently small $\delta > 0$ and large enough $n$, there exist pairs of encoders and decoders satisfying
\begin{align}
\max_{k}\mathbb{P}\{\widehat{S}_k \neq S\} &\leq  \delta,\label{errorp} \\
H(S) + n\delta &\ge \log |\mathcal{S}| \ge  n(R_S - \delta), \label{secretk} \\
\log{|\mathcal{J}|} &\leq n(R_J + \delta), \label{storage} \\
\max_{l}I(S;J,\Z^n_l) &\leq n\delta, \label{secrecy} \\
\max_{l}I(X^n;J|\Z^n_l) &\leq n(R_L + \delta). \label{privacy}
\end{align}
$\RG$ is defined as the closure of the set of all achievable rate tuples for the GS model, and it is called the capacity region.
\end{Definition}
\begin{Definition}[CS model] \label{def2}
A tuple of secret-key, storage, and privacy-leakage rates $(R_S,R_J,R_L)\in \mathbb{R}^3_+$ is achievable for the CS model if, for sufficiently small $\delta > 0$ and large enough $n$, there exist pairs of encoders and decoders that satisfy all the conditions \eqref{errorp}--\eqref{privacy} with replacing \eqref{secretk} by
$
\log |\mathcal{S}| \ge  n(R_S - \delta).
$
Let $\RC$ be the capacity region of the CS model.
\end{Definition}

In Definition \ref{def1}, \eqref{errorp} denotes the reliability constraint, \eqref{secretk} is the uniformity requirement of the generated secret key, \eqref{storage} is the storage rate constraint, \eqref{secrecy} is the secrecy-leakage constraint, evaluating the information about the secret key leaked to Eve, and \eqref{privacy} is the privacy-leakage constraint, quantifying the amount of information leaked to Eve regarding the source identifier via the helper data given Eve's side information.

\section{Main Results} \label{sect3}
This section presents the inner and outer bounds for the GS and CS models with discrete sources, followed by the tight bounds for Gaussian sources and numerical results for both models.

\subsection{Discrete Sources} \label{sec:resd}

\begin{Proposition}[Inner bounds] \label{inner-bound}
We have
\begin{align}
    \RG
    &\supseteq \bigcup_{P_{V|U},P_{U|\tX}}\Big\{(R_S,R_J,R_L)\in \mathbb{R}^3_+: \nonumber \\
    R_S &\le \min_kI(\Y_k;U|V) - \max_lI(\Z_l;U|V), \nonumber \\
    R_J &\ge \max_kI(\tX;U|V,\Y_k) + \max_kI(\tX;V|\Y_k), \nonumber \\
    R_L &\ge \max_kI(\tX;U|V,\Y_k) + \max_kI(\tX;V|\Y_k) \nonumber \\
    &~-I(\tX;U|X)+ \min_kI(\Y_k;V)-\min_lI(\Z_l;V) \Big\},
    \label{inner-gs} \displaybreak[0]\\
    \RC
    &\supseteq \bigcup_{P_{V|U},P_{U|\tX}}\Big\{(R_S,R_J,R_L)\in \mathbb{R}^3_+: \nonumber \\
    R_S &\le \min_kI(\Y_k;U|V) - \max_lI(\Z_l;U|V), \nonumber \\
    R_J &\ge \max_kI(\tX;U|V,\Y_k) + \max_kI(\tX;V|\Y_k) \nonumber \\
    &~+\min_kI(\Y_k;U|V) - \max_lI(\Z_l;U|V), \nonumber \\
    R_L &\ge \max_kI(\tX;U|V,\Y_k) + \max_kI(\tX;V|\Y_k) \nonumber \\
    &~-I(\tX;U|X)+ \min_kI(\Y_k;V)-\min_lI(\Z_l;V) \Big\},
    \label{inner-cs}
\end{align}
where auxiliary random variables $V$ and $U$ satisfy the Markov chain $V-U-\tX-X-(\Y_k,\Z_l)$ for all $k \in \mathcal{K}$ and $l \in \mathcal{L}$, and $|\mathcal{V}| \le |\Tilde{\mathbfcal{X}}| + 6$ and $|\mathcal{U}| \le (|\Tilde{\mathbfcal{X}}| + 2(|\mathcal{K}|+|\mathcal{L}|)+1)(|\Tilde{\mathbfcal{X}}| + |\mathcal{K}|+|\mathcal{L}|) +1)$.
\end{Proposition}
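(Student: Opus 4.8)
The plan is to establish the two inner bounds via a random coding argument that combines a two-layer superposition codebook (matching the auxiliary chain $V-U-\tX$) with index mapping for the secret key, following the strategy of \cite{gksc2018} for the layering and \cite{liang-compound} for handling the compound-channel secrecy analysis. First I would fix a test channel $P_{V|U}P_{U|\tX}$ and generate $2^{n\tilde R_V}$ sequences $v^n$ i.i.d.\ from $P_V$; for each, generate $2^{n\tilde R_U}$ sequences $u^n$ i.i.d.\ from $P_{U|V}$ (superposition). The encoder observes $\tX^n$, finds a jointly typical $(v^n,u^n)$ pair with $\tX^n$ (possible with high probability once $\tilde R_V > I(\tX;V)$ and $\tilde R_U - \tilde R_V > I(\tX;U|V)$, both slightly perturbed), and declares the bin indices of that pair as part of the helper data. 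For the GS model the secret key $S$ is the output of a (two-layer) random index mapping applied to the chosen $(v^n,u^n)$; for the CS model, a one-time-pad $J' = S \oplus (\text{index of }(v^n,u^n))$ is additionally appended to $J$, which accounts for the extra $\min_k I(\Y_k;U|V) - \max_l I(\Z_l;U|V)$ term in $R_J$ in \eqref{inner-cs}.

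Next I would carry out the four constraint analyses. \emph{Reliability} \eqref{errorp}: the decoder, knowing the bin indices and observing $\Y^n_k$, decodes $v^n$ then $u^n$ by joint typicality; this succeeds for every $k$ provided the codebook rates net against $I(\Y_k;V)$ and $I(\Y_k;U|V)$ for the \emph{worst} $k$, which is why $\min_k$ appears in the secret-key rate and $\max_k$ in the storage/privacy rates. \emph{Secrecy} \eqref{secrecy}: here I would invoke a compound-source version of the soft-covering/privacy-amplification lemma — the key point, as in \cite{liang-compound}, is that a single codebook realization must simultaneously randomize $S$ against all $\Z^n_l$, so the rate sacrificed is governed by $\max_l I(\Z_l;U|V)$ (and $\max_l I(\Z_l;V)$ at the $V$-layer); a selection/expurgation argument over the random codebook then gives one fixed code that works for all $l$. \emph{Key uniformity} \eqref{secretk} follows from the same amplification argument. \emph{Storage} \eqref{storage} is immediate from the declared bin-index rates plus (CS only) the pad. \emph{Privacy} \eqref{privacy}: I would expand $I(X^n;J|\Z^n_l)$ using the chain rule and the Markov structure; the helper data consists of the $v^n$-bin index and the $u^n$-bin index, so $I(X^n;J|\Z^n_l) \le [R_J\text{-terms}] - I(X^n;V,U|\Z^n_l) + (\text{slack})$, and the identity $I(\tX;U,V) - I(X;U,V) = I(\tX;U|X)$ (from $V-U-\tX-X$) together with $I(X;V,U|\Z_l)\ge \min_l I(\Z_l;V) + (\text{the }U\text{-layer leakage})$ produces the displayed $R_L$ expression. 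The non-i.i.d.\ structure of $J$ noted in the introduction is precisely what forces the conditional-leakage bound to be argued at the block level rather than symbol-wise, so I would handle it with a conditional typicality / AEP estimate uniform over $z^n_l \in \mathcal{T}^n_\delta(\Z_l)$.

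The main obstacle I anticipate is the \emph{simultaneous} secrecy-and-privacy analysis over the compound set: a standard single-channel argument produces, for each $(k,l)$, a good codebook with high probability, but the inner region requires one codebook good for \emph{all} $k$ and \emph{all} $l$ at once, and the privacy constraint is not of the usual mutual-information-security form handled by off-the-shelf lemmas. I would address this by (i) proving the expected-value bounds $\mathbb{E}[I(S;J,\Z^n_l)] \le n\delta_l$ and $\mathbb{E}[I(X^n;J|\Z^n_l)] \le n(R_L+\delta_l)$ with $\delta_l\to 0$, uniformly in $l$ since $|\mathcal{L}|$ is finite, (ii) taking a union bound / Markov inequality over the finitely many states to extract a single realization satisfying all of \eqref{errorp}--\eqref{privacy} simultaneously up to $\delta$, exactly as the finite-state compound structure permits. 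Finally, the cardinality bounds $|\mathcal{V}| \le |\Tilde{\mathbfcal{X}}|+6$ and the stated bound on $|\mathcal{U}|$ follow by the usual Carath\'eodory / support-lemma argument: $V$ must preserve $P_{\tX}$ (that is $|\Tilde{\mathbfcal{X}}|-1$ constraints) plus the finitely many mutual-information functionals appearing in \eqref{inner-gs}--\eqref{inner-cs} (one per $k$ and per $l$ for each of the $\Y$- and $\Z$-type terms), and then $U$ is bounded conditionally on each value of $V$ by an analogous count, giving the product form; I would defer the bookkeeping of exactly which functionals are preserved to the appendix.
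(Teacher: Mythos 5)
Your overall architecture---two-layer superposition codebook for $(V^n,U^n)$, secret key taken as a bin index, one-time pad for the CS model, worst-case $k$ governing the reliability/storage terms, block-level privacy analysis, and support-lemma cardinality bounds---coincides with the paper's proof. The substantive divergence is in how secrecy and key uniformity are enforced for every eavesdropper state $l$ at once, and here your proposal omits the one genuinely new ingredient of the paper's argument: an ``enhanced eavesdropper.'' For each $l$, the paper constructs (via a continuity/time-sharing argument over pairs of states, extending \cite[Lemma A.1]{liang-compound} to conditional mutual information) a variable $\A$ such that $\tZ \triangleq (\Z_l,\A)$ satisfies both $I(\tZ;U|V)=\max_{l'}I(\Z_{l'};U|V)$ and the Markov chain $V-U-\tX-\tZ$. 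Because $\tZ$ contains $\Z_l$, the single equivocation bound $H(J_{u_2}\mid J_{u_1},J_{v_1},\tZ^n,\mathcal{C}_n)\ge n(R_S-\xi_n)$ delivers uniformity and secrecy for all $l$ simultaneously, and the computation closes precisely because $\tZ$ is strong enough to decode the randomization index $J_{u_3}$, whose rate is set just below $\max_l I(\Z_l;U|V)$. Your substitute---per-state expected bounds followed by a union bound over the finite set $\mathcal{L}$---is a viable alternative in this finite-state setting (and the expurgation step you describe is indeed how the paper extracts one good codebook at the end), but the step you delegate to ``a compound-source version of the soft-covering/privacy-amplification lemma'' is exactly where the difficulty sits: with the randomization rate pinned at $\max_l I(\Z_l;U|V)-\delta$, a state $l$ with strictly smaller $I(\Z_l;U|V)$ cannot resolve $J_{u_3}$, so the textbook equivocation computation does not close as stated. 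You would either have to move to a genuine soft-covering regime (randomization rate slightly \emph{above} the max, which costs nothing asymptotically) and prove resolvability separately for each $l$, or reproduce the enhancement trick. As written, the proposal names the destination but not the bridge for this one step; the remainder matches the paper's route.
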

\begin{proof}
The proof is available in Appendix \ref{proof-inner}, where the random codebook is constructed based on two layered random coding techniques. The first layer consists of the auxiliary sequences $V^n$ generated by $P_V$ and the second layer consists of the auxiliary sequences $U^n$ associated with $P_{U|V}$. The main challenge in the proof is to ensure that the secret-key uniformity \eqref{secretk} and the secrecy-leakage constraint \eqref{secrecy} are satisfied for all possible receiver-eavesdropper states. To prove these constraints, the key idea is to introduce a random variable $\tZ^n$ that jointly satisfies the equality $I(\tZ;U|V) = \max_{l}I(\Z_l;U|V)$ and {the Markov chain $V-U-\tX-\tZ$}. The random variable $\tZ^n$ plays a {central role in analyzing} the two constraints. This technique is not seen in the existing works \cite{itw3,gksc2018,vamoua-tifs} that study the secret-key generation with PUFs without compound channels.
\end{proof}

In Proposition \ref{inner-bound}, how each term in the constraints defining the regions $\RG$ and $\RC$ arises can be explained as follows. {We begin with the region $\RG$. In the secret-key rate constraint, the term $\min_k I(\Y_k; U | V)$ represents the minimum rate required for reliably estimating the sequence $U^n$ across all indices $k$, which in turn enables reliable reconstruction of the secret key since the key is extracted from $U^n$. On the other hand, the term $\max_lI(\Z_l;U|V)$ is the maximum rate at which Eve can gain information about $U^n$ over all indices $l$. Therefore, the achievable secret-key rate is given by the difference $\min_kI(\Y_k;U|V) - \max_lI(\Z_l;U|V)$, similar to the one derived in \cite[Th.~1]{tavangaran2017-tifs} for compound sources}. The terms $\max_kI(\tX;V|\Y_k)$ and $\max_kI(\tX;U|V,\Y_k)$ in the storage-rate constraint represents the rates of the bin indices at the first and second layers, respectively. {In each layer, the maximum rate across all indices $k$ must be shared between the encoder and decoder to ensure reliable reconstruction of the secret key at the decoder}. For the privacy-leakage rate, note that we can expand the mutual information $\frac{1}{n}\max_{l}I(X^n;J|\Z^n_l)$ as $\frac{1}{n}H(J) - \frac{1}{n}H(J|X^n) - \frac{1}{n}\min_{l}I(\Z^n_l;J)$ by using the Markov chain $J-X^n-\Z^n_l$. In the constraint of the privacy-leakage rate, the first and second terms in the right-hand side represent the upper bound on the entropy $\frac{1}{n}H(J)$, the third term represents the upper bound on the conditional entropy $-\frac{1}{n}H(J|X^n)$, and the forth and fifth terms represent the upper bound on the mutual information $-\frac{1}{n}\min_{l}I(\Z^n_l;J)$.

For the region $\RC$, the codebook and coding scheme developed for proving the region $\RG$ are employed as a subsystem to prove the achievability part. One-time pad operation is applied to conceal the chosen secret key in the CS model by adding the secret key generated in the subsystem \cite[Appx.~B-C]{itw3}, which leads to the same achievable secret-key rate. However, the storage rate is different because the masked information must be saved in the public database together with the helper data generated by the subsystem, so that the chosen secret key can be reliably estimated at the decoder. Therefore, the storage rate of the CS model is the sum of the storage rate for the GS model (the subsystem) and the secret-key rate. Moreover, the privacy-leakage rate remains unchanged because the concealed information reveals no extra leakage to {Eve after} applying the one-time pad addition. Similar behaviors {are reflected} in the outer bound derived below in Proposition \ref{discrete-result-outer}.

A special case of the GS model considered in this paper was investigated in \cite {va-remi-itw}.  One can check that, when  $\Omega_{\Tilde{X}}=\Omega_Y=\Omega_Z=|\mathcal{L}|=1$, the region in \eqref{inner-gs} reduces to \cite[Prop.~1]{va-remi-itw}.

\begin{Proposition} [Outer bounds]\label{discrete-result-outer}
 We have
\begin{align}
    \RG
    \subseteq &\bigcap_{k \in \mathcal{K}}\bigcap_{l \in \mathcal{L}}\bigcup_{P_{V|U},P_{U|\tX}}\Big\{(R_S,R_J,R_L)\in \mathbb{R}^3_+:~\nonumber \\
    R_S &\le I(\Y_k;U|V) - I(\Z_l;U|V), \nonumber \\
    R_J &\ge I(\tX;U|\Y_k), \nonumber \\
    R_L &\ge I(X;U|\Y_k) + I(\Y_k;V)-I(\Z_l;V)\Big\},
    \label{descrete-outer-gs} \displaybreak[0]\\
    \RC
    \subseteq &\bigcap_{k \in \mathcal{K}}\bigcap_{l \in \mathcal{L}}\bigcup_{P_{V|U},P_{U|\tX{}}}\Big\{(R_S,R_J,R_L)\in \mathbb{R}^3_+:~\nonumber \\
    R_S &\le I(\Y_k;U|V) - I(\Z_l;U|V), \nonumber \\
    R_J &\ge I(\tX;U|\Y_k) + I(\Y_k;U|V) - I(\Z_l;U|V), \nonumber \\
    R_L &\ge I(X;U|\Y_k) + I(\Y_k;V)-I(\Z_l;V)\Big\},
    \label{descrete-outer-cs}
\end{align}
where $V$ and $U$ satisfy the Markov chain $V-U-\tX-X-(\Y_k,\Z_l)$, and $|\mathcal{V}| \le |\Tilde{\mathbfcal{X}}| + 2(|\mathcal{K}|+|\mathcal{L}|) + 1$, $|\mathcal{U}| \le (|\Tilde{\mathbfcal{X}}| + 2(|\mathcal{K}|+|\mathcal{L}|)+1)(|\Tilde{\mathbfcal{X}}| + |\mathcal{K}|+|\mathcal{L}| + 1)$.
\end{Proposition}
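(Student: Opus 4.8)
The plan is to prove both outer bounds by a single-state converse followed by an intersection over the channel states. Fix an arbitrary pair $(k,l)\in\mathcal K\times\mathcal L$. Any rate tuple $(R_S,R_J,R_L)$ achievable in the sense of Definition~\ref{def1} satisfies \eqref{errorp}--\eqref{privacy} for every state, and hence in particular for this $(k,l)$; it therefore suffices to show that the tuple lies in $\bigcup_{P_{V|U},P_{U|\tX}}\{\cdots\}$ for each $(k,l)$, and intersecting over $(k,l)$ then yields \eqref{descrete-outer-gs} and \eqref{descrete-outer-cs}. For the fixed state I would follow the single-state converse technique of \cite{gksc2018}: apply Fano's inequality to \eqref{errorp} to get $H(S|\Y^n_k,J)\le n\epsilon_n$ with $\epsilon_n\to0$; keep \eqref{secretk}, \eqref{storage}, and \eqref{secrecy} as they stand; and, because the privacy definition here differs from the one in \cite{gksc2018}, rewrite \eqref{privacy} using the Markov chain $J-X^n-\Z^n_l$ implied by \eqref{jointd} as $n(R_L+\delta)\ge I(X^n;J)-I(\Z^n_l;J)$.

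For the GS model the natural starting points are $n(R_S-\delta)\le H(S)\le I(S;\Y^n_k,J)+n\epsilon_n$ for the key rate, $n(R_J+\delta)\ge\log|\mathcal J|\ge H(J)\ge H(S,J|\Y^n_k)-n\epsilon_n=I(\tX^n;S,J|\Y^n_k)-n\epsilon_n$ for the storage rate (using that $(S,J)$ is a function of $\tX^n$), and the rewritten inequality above for the privacy-leakage rate, which I would further decompose as $I(X^n;J)-I(\Z^n_l;J)=I(X^n;J|\Y^n_k)+[I(\Y^n_k;J)-I(\Z^n_l;J)]$. I would then choose the single-letter auxiliaries as nested genie-aided variables: $V_t$ built from the helper data $J$ together with side sequences --- future outputs $\Y^n_{k,t+1}$, past outputs $\Z^{t-1}_{l,1}$, and a block of $\tX$ --- oriented so that the wiretap-type mutual-information differences telescope via Csisz\'ar's sum identity, and $U_t$ obtained by further appending the key $S$. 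After introducing a time-sharing variable $Q$ uniform on $[1:n]$, folding it into $U$ and $V$, and using the i.i.d./memoryless structure of \eqref{jointd} to reduce multi-letter conditional entropies to single-letter ones, the three chains collapse, up to $O(\delta+\epsilon_n)$ terms, to $R_S\le I(\Y_k;U|V)-I(\Z_l;U|V)$, $R_J\ge I(\tX;U|\Y_k)$, and $R_L\ge I(X;U|\Y_k)+I(\Y_k;V)-I(\Z_l;V)$. The Markov chain $V-U-\tX-X-(\Y_k,\Z_l)$ is verified directly from \eqref{jointd}: $V$ is a function of $U$; $U$ depends on $X_t$ only through $\tX_t$, because $S$ and $J$ are functions of $\tX^n$ while the side sequences involve only coordinates other than $t$; and $\tX_t-X_t-(\Y_{k,t},\Z_{l,t})$ is built into the model. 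The cardinality bounds on $|\mathcal U|$ and $|\mathcal V|$ then follow from the Fenchel--Eggleston--Carath\'eodory support lemma, preserving the source marginal and the functionals that enter the three bounds.

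For the CS model only the storage bound changes. Since $S$ is uniform and independent of $(\tX^n,X^n,\Y^n_k,\Z^n_l)$ and $J=e(\tX^n,S)$, we have $H(J)\ge H(J|\Y^n_k)\ge H(S|\Y^n_k)+H(J|S,\Y^n_k)-n\epsilon_n=\log|\mathcal S|+H(J|S,\Y^n_k)-n\epsilon_n$; single-letterizing $H(J|S,\Y^n_k)=I(\tX^n;J|S,\Y^n_k)$ produces the term $I(\tX;U|\Y_k)$, while the secrecy constraint \eqref{secrecy} forces $\tfrac1n\log|\mathcal S|$ to lie within a vanishing gap of $I(\Y_k;U|V)-I(\Z_l;U|V)$, producing the extra term in \eqref{descrete-outer-cs}; the key-rate and privacy-leakage bounds are derived exactly as for the GS model. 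I expect the main obstacle to be the auxiliary identification in the second paragraph: because the helper data $J$ is not i.i.d.\ and two nested auxiliaries must be carried at once, one has to exhibit a \emph{single} pair $(U_t,V_t)$ --- in particular, a single orientation of the genie sequences --- for which all three single-letter bounds emerge with exactly the stated conditioning while the Markov chain $V-U-\tX-X-(\Y_k,\Z_l)$ still holds; the privacy term $I(X^n;J|\Y^n_k)$ is the most delicate, since it must single-letterize to the very same $U$ already pinned down by the key-rate and storage bounds.
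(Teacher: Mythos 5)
Your top-level strategy is exactly the paper's: fix a state pair $(k,l)$, prove a single-state converse, and intersect over $(k,l)$. The paper, however, does not re-derive the single-state converse at all --- it invokes \cite[Th.~3 and 4]{gksc2018} verbatim for the secret-key and storage constraints and only reworks the privacy-leakage term, expanding $I(X;U,\Y_k)-I(X;\Y_k|V)+I(X;\Z_l|V)=I(X;U|\Y_k)+I(\Y_k;V)-I(\Z_l;V)+I(X;\Z_l)$ and subtracting $I(X^n;\Z^n_l)$ to account for the conditional privacy definition \eqref{privacy}. Your rewriting $I(X^n;J|\Z^n_l)=I(X^n;J)-I(\Z^n_l;J)$ is equivalent to that step, and your GS sketch (Fano, nested genie auxiliaries $V_t=(J,\Y^n_{k,t+1},\Z^{t-1}_l)$, $U_t=(V_t,S)$, Csisz\'ar sum identity, time sharing) is the machinery that underlies the cited theorems, so the GS part is sound as a sketch, modulo the single-letterization details you yourself flag. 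One small imprecision there: the starting point $H(S)\le I(S;\Y^n_k,J)+n\epsilon_n$ cannot produce the $-I(\Z_l;U|V)$ term; you need $H(S)\le H(S|J,\Z^n_l)-H(S|J,\Y^n_k)+n(\delta+\epsilon_n)=I(S;\Y^n_k|J)-I(S;\Z^n_l|J)+n(\delta+\epsilon_n)$, which folds \eqref{secrecy} and Fano in from the outset.

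The genuine gap is in your CS storage bound. You claim that \eqref{secrecy} ``forces $\tfrac1n\log|\mathcal S|$ to lie within a vanishing gap of $I(\Y_k;U|V)-I(\Z_l;U|V)$.'' It does not: the secrecy and reliability constraints yield only the one-sided bound $\tfrac1n\log|\mathcal S|=\tfrac1nH(S)\le I(\Y_k;U|V)-I(\Z_l;U|V)+o(1)$, and nothing prevents a CS code from using a much smaller key alphabet (e.g., $|\mathcal S|=1$ satisfies every constraint while the right-hand side, evaluated at the $(U,V)$ your construction produces, can be strictly positive). With only $\log|\mathcal S|\ge n(R_S-\delta)$ available, your chain $H(J)\ge\log|\mathcal S|+H(J|S,\Y^n_k)-n\epsilon_n$ delivers $R_J\ge R_S+I(\tX;U|\Y_k)$, which is strictly weaker than the stated $R_J\ge I(\tX;U|\Y_k)+I(\Y_k;U|V)-I(\Z_l;U|V)$ whenever $R_S$ sits below its cap --- and since the outer bound requires exhibiting one pair $(U,V)$ satisfying all three inequalities simultaneously, you cannot repair this by appealing to a different $(U,V)$ for the storage constraint. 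The correct derivation (as in \cite[Th.~4]{gksc2018}, which the paper cites for precisely this step) must extract the $I(\Y_k;U|V)-I(\Z_l;U|V)$ term directly from $H(J)\ge H(J|\Z^n_l)$ using the independence of $S$ from $(\tX^n,\Y^n_k,\Z^n_l)$ and Fano's inequality, not by substituting the key-rate bound for $\log|\mathcal S|$.
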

\begin{proof}
The proof is provided in Appendix \ref{proof-outer}. For a fixed state $(k,l)$, the proof is the same as that of \cite[Th.~3~and~4]{gksc2018} without considering the action cost. Therefore, we make use of the result of those theorems to derive Proposition \ref{discrete-result-outer} for the compound channel setting. However, due to the difference in the definition of the privacy-leakage rate, appropriate modifications are required.
\end{proof}

The region in \eqref{descrete-outer-gs} matches \cite[Prop. 2]{va-remi-itw} when $\Omega_{\Tilde{X}}=\Omega_Y=\Omega_Z=|\mathcal{L}|=1$. Moreover, in the non-compound settings, i.e., when $|\mathcal{K}|=1=|\mathcal{L}|$, the bounds in Propositions \ref{inner-bound} and  \ref{discrete-result-outer} match, yielding the following corollary.


\begin{Corollary}[Capacity regions] \label{coro-capacity}
When $|\mathcal{K}|=1=|\mathcal{L}|$, we~have
\begin{align}
    \RG
    = &\bigcup_{P_{V|U},P_{U|\tX}}\Big\{(R_S,R_J,R_L)\in \mathbb{R}^3_+:~\nonumber \\
    R_S &\le I(\Y;U|V) - I(\Z;U|V), \nonumber \\
    R_J &\ge I(\tX;U|\Y), \nonumber \\
    R_L &\ge I(X;U|\Y) + I(\Y;V)-I(\Z;V)\Big\},
    \label{coro-gs} \displaybreak[0] \\
    \RC
    = &\bigcup_{P_{V|U},P_{U|\tX{}}}\Big\{(R_S,R_J,R_L)\in \mathbb{R}^3_+:~\nonumber \\
    R_S &\le I(\Y;U|V) - I(\Z;U|V), \nonumber \\
    R_J &\ge I(\tX;U|\Y) + I(\Y;U|V) - I(\Z;U|V), \nonumber \\
    R_L &\ge I(X;U|\Y) + I(\Y;V)-I(\Z;V)\Big\},
    \label{coro-cs}
\end{align}
where $(U,V)$ satisfy the same conditions as in Proposition~\ref{discrete-result-outer}.
\end{Corollary}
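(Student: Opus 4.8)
The plan is to derive Corollary~\ref{coro-capacity} as an immediate consequence of Propositions~\ref{inner-bound} and~\ref{discrete-result-outer} by specializing both to the non-compound case $|\mathcal{K}|=|\mathcal{L}|=1$. With a single channel state to the decoder and a single state to Eve, every operator $\max_k$, $\min_k$, $\max_l$, $\min_l$ collapses to evaluation at the unique state, so I would simply drop all these operators and write $\Y$ for $\Y_1$ and $\Z$ for $\Z_1$. After this substitution, the inner bound \eqref{inner-gs} for the GS model becomes $R_S \le I(\Y;U|V)-I(\Z;U|V)$, $R_J \ge I(\tX;U|V,\Y)+I(\tX;V|\Y)$, and $R_L \ge I(\tX;U|V,\Y)+I(\tX;V|\Y)-I(\tX;U|X)+I(\Y;V)-I(\Z;V)$, while the outer bound \eqref{descrete-outer-gs} becomes $R_S \le I(\Y;U|V)-I(\Z;U|V)$, $R_J \ge I(\tX;U|\Y)$, $R_L \ge I(X;U|\Y)+I(\Y;V)-I(\Z;V)$.

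The core of the argument is then to verify that the inner and outer descriptions define the same region. The secret-key constraint is already literally identical. For the storage constraint, I would use the chain rule together with the Markov chain $V-U-\tX$: since $V-U-\tX$ and the decoder side information $\Y$ satisfies $\Y-X-\tX-U-V$, we have $I(\tX;U|\Y)=I(\tX;U,V|\Y)=I(\tX;V|\Y)+I(\tX;U|V,\Y)$, so the two expressions for the lower bound on $R_J$ coincide. For the privacy constraint, I would show $I(\tX;U|V,\Y)+I(\tX;V|\Y)-I(\tX;U|X)=I(X;U|\Y)$. Expanding, the left side equals $I(\tX;U|\Y)-I(\tX;U|X)$ by the storage identity just established; then using $I(\tX;U|\Y)=I(\tX,X;U|\Y)=I(X;U|\Y)+I(\tX;U|X,\Y)$ and the Markov relation $\Y-X-(\tX,U)$, which gives $I(\tX;U|X,\Y)=I(\tX;U|X)$, the difference collapses to $I(X;U|\Y)$, as required. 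The CS-model statements \eqref{coro-cs} follow by the same substitutions, since the only difference from the GS case is the extra additive term $I(\Y;U|V)-I(\Z;U|V)$ in the $R_J$ bound, which is already in identical form in \eqref{inner-cs} and \eqref{descrete-outer-cs}; and the cardinality bounds for $U,V$ in Proposition~\ref{discrete-result-outer} are the ones quoted in the corollary.

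I would then note that Proposition~\ref{inner-bound} gives $\supseteq$ and Proposition~\ref{discrete-result-outer} gives $\subseteq$ for both $\RG$ and $\RC$; since the two bounding regions have been shown to be equal under $|\mathcal{K}|=|\mathcal{L}|=1$, equality of $\RG$ (resp.\ $\RC$) with the common region follows. The main obstacle, such as it is, is purely the bookkeeping of mutual-information identities: one must be careful to invoke the correct Markov chains ($V-U-\tX-X-(\Y,\Z)$ and its consequences, e.g.\ $\Y-X-\tX-U-V$) at each step so that the two forms of the $R_J$ and $R_L$ bounds genuinely match, rather than merely being inequalities in one direction. No new coding-theoretic machinery is needed; the corollary is a specialization plus a short sequence of chain-rule manipulations.
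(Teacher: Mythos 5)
Your proposal is correct and follows essentially the same route as the paper: specialize Propositions~\ref{inner-bound} and~\ref{discrete-result-outer} to $|\mathcal{K}|=|\mathcal{L}|=1$ and use the Markov chains $V-U-\tX-\Y$ and $U-\tX-X-\Y$ to show $I(\tX;U|V,\Y)+I(\tX;V|\Y)=I(\tX;U|\Y)$ and $I(\tX;U|\Y)-I(\tX;U|X)=I(X;U|\Y)$, so the two descriptions coincide. The paper states these two identities directly (steps (a) and (b) in its proof), whereas you expand them via the chain rule; the content is identical.
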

\begin{proof} We only   sketch the proof of \eqref{coro-gs}, as that of  \eqref{coro-cs} follows similarly. When $|\mathcal{K}|=|\mathcal{L}| = 1$ and by dropping the indices $k$ and $l$, the constraints in Proposition \ref{inner-bound} become
\begin{align}
    R_S &\le  I(\Y;U|V) - I(\Z;U|V), \label{rs-coro} \displaybreak[0]\\
    R_J &\ge  I(\tX;U|V,\Y) + I(\tX;V|\Y)\overset{\rm (a)}= I(\tX;U|\Y), \label{rj-coro} \displaybreak[0]\\
    R_L & \ge I(\tX;U|V,\Y) + I(\tX;V|\Y) \nonumber \\
    &~~~-I(\tX;U|X)+ I(\Y;V)-I(\Z;V) \nonumber \displaybreak[0]\\
    &=I(\tX;U|\Y)-I(\tX;U|X) + I(\Y;V)-I(\Z;V) \nonumber \\
    &\overset{\rm (b)}=I(X;U|\Y) + I(\Y;V)-I(\Z;V), \label{rl-coro}
\end{align}
where (a) and (b) hold by the Markov chains $V-U-\tX-\Y$ and $U-\tX-X-\Y$, respectively. As $|\mathcal{K}|=|\mathcal{L}| = 1$, \eqref{rs-coro}--\eqref{rl-coro} match \eqref{descrete-outer-gs} in Proposition \ref{discrete-result-outer}, and thus \eqref{coro-gs} is proved.
\end{proof}
\begin{Remark}
The privacy-leakage rate in \cite[Th.~3 and 4]{gksc2018} without action cost is bounded as
\begin{align}
    R_L &\ge I(X;U,Y) - I(X;Y|V) + I(X;Z|V) \nonumber \\
    &=I(X;U|Y) + I(Y;V)-I(Z;V) + I(X;Z),
\end{align}
where the equality holds by the Markov chain $V-X-(Y,Z)$.
Compared to the privacy-leakage rate in \eqref{coro-gs} and \eqref{coro-cs}, there is an extra term $I(X;Z)$,  because the privacy-leakage rate constraint in \cite[Th. 3 and 4]{gksc2018} is defined as $\frac{1}{n}I(X^n;J,Z^n) = \frac{1}{n}I(X^n;J|Z^n) + I(X;Z)$. Therefore,  \eqref{coro-gs} and \eqref{coro-cs} coincide with \cite[Th. 3 and 4]{gksc2018} (without action cost) if \cite[eq. (5)]{gksc2018} is replaced by \eqref{privacy}.
\end{Remark}

In Propositions \ref{inner-bound} and \ref{discrete-result-outer}, the orders of the optimization (union) over the test channels $P_{V|U}, P_{U|\tX}$ and the minimization (intersection) over the channel states $(k,l)$ are {reversed}. Specifically, Proposition \ref{inner-bound} requires one to choose test channels $P_{V|U}, P_{U|\tX}$ that work simultaneously for all $(k,l)$ pairs, whereas Proposition \ref{discrete-result-outer} allows one to choose different test channels 
$P_{V|U}, P_{U|\tX}$ for a channel state pair \((k,l)\), and then only keeps the intersection over what is achievable per channel state pair. Therefore, Proposition \ref{inner-bound} imposes stronger requirements, and as a result, the regions in Proposition \ref{discrete-result-outer} may be potentially larger.


In the next subsection, we demonstrate that the regions in Propositions \ref{inner-bound} and   \ref{discrete-result-outer} match for Gaussian sources.

\subsection{Gaussian Sources} \label{continuous-sect}

In this subsection, we limit our discussion to a special case of setup in Section \ref{sec:resd} where  the enrollment channel is noiseless, i.e., $\tX = X$. We consider $P_{X\Y_k\Z_l}$ the joint distribution of zero-mean Gaussian random variables with a non-singular covariance matrix. Suppose that the source $X \sim \mathcal{N}(0,\sigma^2_X)$, then it suffices to model the channels to the decoder and Eve as follows.
\begin{Lemma} \label{lemma-111} Without loss of generality, one can write
\begin{align}
\mathbf{Y}_k = \H_kX  + \mathbf{N}_{\Y_k},~\mathbf{Z}_l = \tH_lX + \mathbf{N}_{\Z_l}, \label{channel-eq1}
\end{align}
where $\H_k \in \mathbb{R}^{\Omega_Y\times 1}$, $\tH_l \in \mathbb{R}^{\Omega_Z\times 1}$, and $\mathbf{N}_{\Y_k} \sim {\bf \mathcal{N}}({\bf 0}, \mathbf{I}_{\Omega_Y})$, and $\mathbf{N}_{\Z_l} \sim {\bf \mathcal{N}}({\bf 0},\mathbf{I}_{\Omega_Z})$ are independent of $X$. Here, $\mathbf{I}$ denotes the identity matrix.
\end{Lemma}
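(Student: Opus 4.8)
The plan is to reduce the general jointly-Gaussian model $P_{X\mathbf{Y}_k\mathbf{Z}_l}$ to the canonical additive form \eqref{channel-eq1} by a sequence of invertible (hence information-lossless) linear transformations applied separately to the blocks $\mathbf{Y}_k$ and $\mathbf{Z}_l$, together with a whitening step. Since $X$, $\mathbf{Y}_k$, and $\mathbf{Z}_l$ are jointly Gaussian with zero mean, the conditional expectation $\mathbb{E}[\mathbf{Y}_k \mid X]$ is linear in $X$, so we can write $\mathbf{Y}_k = \mathbf{a}_k X + \mathbf{W}_k$ where $\mathbf{a}_k \triangleq \mathbb{E}[\mathbf{Y}_k X]/\sigma_X^2 \in \mathbb{R}^{\Omega_Y\times 1}$ and the residual $\mathbf{W}_k \triangleq \mathbf{Y}_k - \mathbf{a}_k X$ is Gaussian, zero-mean, and uncorrelated with $X$, hence independent of $X$; similarly $\mathbf{Z}_l = \mathbf{b}_l X + \mathbf{V}_l$ with $\mathbf{V}_l$ independent of $X$. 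Let $\bSigma_{\mathbf{W}_k} \triangleq \mathbb{E}[\mathbf{W}_k\mathbf{W}_k^\intercal]$ and $\bSigma_{\mathbf{V}_l} \triangleq \mathbb{E}[\mathbf{V}_l\mathbf{V}_l^\intercal]$; by the non-singularity assumption on the full covariance matrix, these noise covariances are positive definite, so $\bSigma_{\mathbf{W}_k}^{-1/2}$ and $\bSigma_{\mathbf{V}_l}^{-1/2}$ exist. Defining $\mathbf{Y}_k' \triangleq \bSigma_{\mathbf{W}_k}^{-1/2}\mathbf{Y}_k$ and $\mathbf{Z}_l' \triangleq \bSigma_{\mathbf{V}_l}^{-1/2}\mathbf{Z}_l$ yields $\mathbf{Y}_k' = \mathbf{H}_k X + \mathbf{N}_{\mathbf{Y}_k}$ and $\mathbf{Z}_l' = \tilde{\mathbf{H}}_l X + \mathbf{N}_{\mathbf{Z}_l}$ with $\mathbf{H}_k \triangleq \bSigma_{\mathbf{W}_k}^{-1/2}\mathbf{a}_k$, $\tilde{\mathbf{H}}_l \triangleq \bSigma_{\mathbf{V}_l}^{-1/2}\mathbf{b}_l$, and $\mathbf{N}_{\mathbf{Y}_k}\sim\mathcal{N}(\mathbf{0},\mathbf{I}_{\Omega_Y})$, $\mathbf{N}_{\mathbf{Z}_l}\sim\mathcal{N}(\mathbf{0},\mathbf{I}_{\Omega_Z})$ independent of $X$, which is exactly \eqref{channel-eq1}.

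The second ingredient is to argue that this change of variables does not alter the capacity regions, i.e., that replacing $(\mathbf{Y}_k,\mathbf{Z}_l)$ by $(\mathbf{Y}_k',\mathbf{Z}_l')$ is "without loss of generality." For each fixed state $(k,l)$, the maps $\mathbf{y}\mapsto\bSigma_{\mathbf{W}_k}^{-1/2}\mathbf{y}$ and $\mathbf{z}\mapsto\bSigma_{\mathbf{V}_l}^{-1/2}\mathbf{z}$ are bijections, so the decoder can apply the first map internally before running any decoding rule, and Eve can (equivalently) apply the second; consequently every mutual-information quantity appearing in the rate constraints \eqref{errorp}--\eqref{privacy} — in particular $I(S;J,\mathbf{Z}_l^n)$, $I(X^n;J|\mathbf{Z}_l^n)$, and the error probability $\mathbb{P}\{\widehat S_k\neq S\}$ — is invariant, because $\mathbf{Z}_l^n$ and $\mathbf{Z}_l'^n$ generate the same $\sigma$-algebra (and likewise for $\mathbf{Y}_k^n$). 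Hence the set of achievable $(R_S,R_J,R_L)$ is identical for the two models, and it suffices to prove all subsequent Gaussian results for the canonical form. I would phrase the lemma's proof as: (i) Gaussian conditioning gives the affine decomposition with independent noise; (ii) positive-definiteness of the conditional noise covariances from non-singularity of the joint covariance; (iii) whitening via the inverse square root; (iv) invertibility of all transformations implies the operational problem is unchanged.

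The only genuinely delicate point is step (ii)–(iv): one must verify that $\bSigma_{\mathbf{W}_k}$ and $\bSigma_{\mathbf{V}_l}$ are strictly positive definite (so that whitening is legitimate) — this follows from the assumption that the joint covariance of $(X,\mathbf{Y}_k,\mathbf{Z}_l)$ is non-singular, since a singular conditional covariance of $\mathbf{Y}_k$ given $X$ would force a deterministic linear relation among the components of $(X,\mathbf{Y}_k)$ and hence a singular joint covariance — and that the per-state invertible transformations are compatible with the compound setting, i.e., applied state-by-state they do not require the terminals to know the true state (the decoder only needs the transformation for the true $k$, which it may absorb into its decoding map, and the region is defined via a worst-case over states regardless). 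I expect writing (ii) cleanly to be the main obstacle; everything else is routine Gaussian linear algebra. A remark could also note that the scalar source $X\sim\mathcal{N}(0,\sigma_X^2)$ can itself be normalized to unit variance by absorbing $\sigma_X$ into $\mathbf{H}_k,\tilde{\mathbf{H}}_l$, though we keep $\sigma_X^2$ explicit for later use.
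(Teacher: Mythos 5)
Your proposal is correct and follows essentially the same route as the paper's proof: Gaussian conditioning gives the affine decomposition $\mathbf{Y}_k=\bSigma_{\mathbf{Y}_kX}\sigma_X^{-2}X+\mathbf{N}_{\mathbf{Y}_k}$ with independent noise, positive definiteness of the conditional noise covariance follows from the Schur complement of the non-singular joint covariance, and the noise is whitened by an invertible linear map (the paper uses a Cholesky factor where you use the symmetric square root, an immaterial difference). Your invariance argument via invertibility and generated $\sigma$-algebras matches the paper's observation that the constraints depend only on the marginals of $(X,\mathbf{Y}_k)$ and $(X,\mathbf{Z}_l)$.
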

\begin{proof}
    See Appendix \ref{proof-lemma111}.
\end{proof}

\begin{Remark} \label{lemma-txneqx}
In the case where the enrollment channel is noisy, i.e., $\tX \neq X$, the noise covariance matrices of the involved channels are not positive definite in general. This prevents the use of Cholesky decomposition to normalize them to identity matrices, and thus the channel models described in Lemma~\ref{lemma-111} may no longer be applicable.
\end{Remark}

Note that the single-letter expressions characterized in Propositions \ref{inner-bound} and \ref{discrete-result-outer} can be extended to the channel model  \eqref{channel-eq1}. To derive a closed-form analytical (parametric) expression for Gaussian sources, we directly leverage Proposition~\ref{inner-bound} to show the achievability. In the converse, we transform the problem in \eqref{channel-eq1} into a scalar Gaussian problem using sufficient statistics \cite[Ch. 2]{cover}, which helps avoid the complexity of working with vector random variables. However, after the transformation, it is unclear whether all constraints in Definition~\ref{def1}, particularly \eqref{errorp}, remain preserved under the scalar random variables. As a result, Proposition \ref{discrete-result-outer} may not hold when the vector random variables are replaced with scalar ones. To this end, as shown in the proof of Theorem~\ref{th1-gauss}, we derive new outer regions for the channel model \eqref{channel-eq1} using scalar variables to establish the converse part of Theorem \ref{th1-gauss}.

In the sequel, we define
\begin{align}
    &k^* \in \argmin_{k \in \mathcal{K}}\{\H^\intercal_{k}\H_{k}\},~~~l^* \in \argmax_{l \in \mathcal{L}}\{\tH^\intercal_{l}\tH_{l}\}. \label{kl-star}
\end{align}

To simplify the presentation of the results for Gaussian sources, we define the following rate constraints, where $\alpha \in (0,1]$ serves as a tuning parameter that adjusts the variance of the auxiliary Gaussian random variable. For further details, the reader is referred to \eqref{x-u-phi}.
\begin{align}
R_S &\le\frac{1}{2}\log\left(\frac{(\sigma^2_X\H^\intercal_{k^*}\H_{k^*} + 1)(\alpha\sigma^2_X\tH^\intercal_{l^*}\tH_{l^*} + 1)}{(\alpha\sigma^2_X\H^\intercal_{k^*}\H_{k^*} + 1)(\sigma^2_X\tH^\intercal_{l^*}\tH_{l^*} + 1)}\right), \label{rs-gauss-th1} \\
R_J &\ge \frac{1}{2}\log\left(\frac{\alpha\sigma^2_X\H^\intercal_{k^*}\H_{k^*} + 1}{\alpha(\sigma^2_X\H^\intercal_{k^*}\H_{k^*} + 1)}\right), \label{rj-gauss-th1}\\
R_J &\ge \frac{1}{2}\log\left(\frac{\alpha\sigma^2_X\tH^\intercal_{l^*}\tH_{l^*} + 1}{\alpha(\sigma^2_X\tH^\intercal_{l^*}\tH_{l^*} + 1)}\right), \label{rj-gauss-th1-2} \\
R_L &\ge \frac{1}{2}\log\left(\frac{\alpha\sigma^2_X\H^\intercal_{k^*}\H_{k^*} + 1}{\alpha(\sigma^2_X\H^\intercal_{k^*}\H_{k^*} + 1)}\right). \label{rl-gauss-th1}
\end{align}

\begin{Theorem}[Capacity regions] \label{th1-gauss} If $\H^\intercal_{k^*}\H_{k^*} \ge \tH^\intercal_{l^*}\tH_{l^*}$, then the  capacity regions of the GS and CS models are 
\begin{align}
\RG=\bigcup_{0 < \alpha \le 1}\{(R_S,R_J,R_L)\in \mathbb{R}^3_+:~&{\rm \eqref{rs-gauss-th1},~\eqref{rj-gauss-th1},~and~\eqref{rl-gauss-th1}}\nonumber \\
&~{\rm are~satisfied}\} \label{gs-cp-gauss}, \displaybreak[0]\\
\RC=\bigcup_{0 < \alpha \le 1}\{(R_S,R_J,R_L)\in \mathbb{R}^3_+:~&{\rm \eqref{rs-gauss-th1},~\eqref{rj-gauss-th1-2},~and~\eqref{rl-gauss-th1}}\nonumber \\
    &~{\rm are~satisfied}\} \label{cs-cp-gauss}.
\end{align}
If $\H^\intercal_{k^*}\H_{k^*} < \tH^\intercal_{l^*}\tH_{l^*}$, then   
\begin{align}
\RG = \RC = \{(R_S,R_J,R_L):
    R_S = 0, R_J \ge 0, R_L \ge 0\}.
\end{align}
\end{Theorem}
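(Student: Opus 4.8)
The strategy is to read achievability off Proposition~\ref{inner-bound} and to obtain the converse by reducing the vector channels \eqref{channel-eq1} to scalar ones through sufficient statistics and then invoking a Fisher-information argument; the degenerate case is treated separately. Write $g_k\triangleq\H_k^\intercal\H_k$ and $g_l'\triangleq\tH_l^\intercal\tH_l$, so by \eqref{kl-star} $k^*$ minimizes $g_k$ and $l^*$ maximizes $g_l'$.

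\emph{Achievability.} After extending Proposition~\ref{inner-bound} to continuous alphabets by the usual truncation/quantization argument, note that $\H_k^\intercal\Y_k$ is a sufficient statistic for $X$ given $\Y_k$ (and $\tH_l^\intercal\Z_l$ for $X$ given $\Z_l$); equivalently $\Y_k$ decomposes into the scalar $\bar Y_k\triangleq\sqrt{g_k}\,X+N_Y$, $N_Y\sim\mathcal N(0,1)$, and a component independent of $(X,U,V)$, so $I(\Y_k;U|V)=I(\bar Y_k;U|V)$ and $I(\Z_l;U|V)=I(\bar Z_l;U|V)$ for every $(U,V)$ admissible in Proposition~\ref{inner-bound}. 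Take $V$ constant and $U$ Gaussian with $X=U+\Phi$, $\Phi\sim\mathcal N(0,\alpha\sigma^2_X)$ independent of $U$ (this is \eqref{x-u-phi}). A short monotonicity check shows that every mutual-information term in \eqref{inner-gs}--\eqref{inner-cs} is monotone in the relevant power gain, so the inner $\min_k$ and $\max_l$ are attained at $k^*$ and $l^*$; evaluating the scalar expressions gives exactly \eqref{rs-gauss-th1} and \eqref{rl-gauss-th1}, together with \eqref{rj-gauss-th1} for the GS model and \eqref{rj-gauss-th1-2} for the CS model. If $g_{k^*}<g_{l^*}'$ then \eqref{rs-gauss-th1} is negative for all $\alpha\in(0,1]$, so only $R_S=0$ is admissible; as $(0,R_J,R_L)$ is trivially achievable for all $R_J,R_L\ge0$ (e.g.\ by ignoring the observations), the degenerate region then follows once the converse $R_S\le0$ is established below.

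\emph{Converse.} One cannot simply plug the scalar sufficient statistics into Proposition~\ref{discrete-result-outer}, since it is not evident that the reliability constraint \eqref{errorp}, with a decoding map fixed across states, survives the substitution. Instead we rerun the single-letterization of \cite[Th.~3~and~4]{gksc2018}, passing the reliability, secrecy, and privacy constraints to the scalar statistics via the sufficiency identities $H(S|\Y_k^n,J)=H(S|\bar Y_k^n,J)$, $I(S;J,\Z_l^n)=I(S;J,\bar Z_l^n)$ and $I(X^n;J|\Z_l^n)=I(X^n;J|\bar Z_l^n)$, which hold because $(X^n,J)$ and $\Y_k^n$ (resp.\ $\Z_l^n$) are conditionally independent given $\bar Y_k^n$ (resp.\ $\bar Z_l^n$). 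Specializing to the state $(k^*,l^*)$ and using that $g_{k^*}\ge g_{l^*}'$ makes $\bar Z_{l^*}$ a degraded version of $\bar Y_{k^*}$ --- which permits taking the second auxiliary $V$ of Proposition~\ref{discrete-result-outer} constant, since every term containing it is then extremized there --- one is left with single-letter bounds of the form $R_S\le I(\bar Y_{k^*};U)-I(\bar Z_{l^*};U)$, $R_J\ge I(X;U|\bar Y_{k^*})$ (plus the extra $I(\bar Y_{k^*};U)-I(\bar Z_{l^*};U)$ for the CS model) and $R_L\ge I(X;U|\bar Y_{k^*})$, over scalar $U$ with $U-X-(\bar Y_{k^*},\bar Z_{l^*})$. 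When $g_{k^*}<g_{l^*}'$, $\bar Y_{k^*}$ is degraded with respect to $\bar Z_{l^*}$, so $I(\bar Y_{k^*};U)\le I(\bar Z_{l^*};U)$ for every such $U$, forcing $R_S\le0$ and closing the degenerate case.

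\emph{Main obstacle.} What remains is the technical heart: showing the Gaussian $U$ of the achievability is optimal, i.e.\ that $I(\bar Y_{k^*};U)-I(\bar Z_{l^*};U)$ is maximized, for each prescribed value of $I(X;U|\bar Y_{k^*})$, by a Gaussian $U$; re-expressing that maximizer through its variance returns \eqref{rs-gauss-th1}--\eqref{rl-gauss-th1} for some $\alpha\in(0,1]$ and matches the inner region. The conditional entropy power inequality, which suffices when Eve has no side information \cite[Appx.~D]{iw-book}, is too weak here because of the subtracted $I(\bar Z_{l^*};U)$ and covers only a sub-range of $\alpha$. Following \cite[Sect.~IV-C]{ekrem-2014}, the plan is to view $\bar Y_{k^*}$ and $\bar Z_{l^*}$ as two points along a Gaussian-channel evolution of $X$, express both mutual informations through the corresponding MMSE / Fisher information, and use Fisher-information inequalities together with the Cram\'er--Rao bound to show that any departure from Gaussianity strictly lowers the objective at fixed $I(X;U|\bar Y_{k^*})$. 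Making this estimate tight for \emph{every} admissible $\alpha$, while staying compatible with the degradedness used to eliminate $V$, is where essentially all of the difficulty lies.
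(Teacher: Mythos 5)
Your proposal is correct and follows essentially the same route as the paper: achievability from Proposition~\ref{inner-bound} with $V$ constant and the Gaussian test channel $X=U+\Theta$, and a converse that passes to scalar sufficient statistics, re-derives a fresh single-letter outer bound rather than substituting into Proposition~\ref{discrete-result-outer}, eliminates $V$ via the degradedness implied by $\H^\intercal_{k^*}\H_{k^*} \ge \tH^\intercal_{l^*}\tH_{l^*}$, and closes the inner--outer gap with the Ekrem--Ulukus Fisher-information argument (the paper's Lemmas on the Fisher-information inequality and the de Bruijn-type integral identity) in place of the EPI, exactly as you plan. The only cosmetic differences are that the paper computes the vector mutual informations in the achievability via the Weinstein--Aronszajn identity rather than through sufficiency, and derives the parametric outer bound for every pair $(k,l)$ before intersecting and invoking monotonicity to land at $(k^*,l^*)$, whereas you specialize to $(k^*,l^*)$ directly; both are valid.
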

\begin{proof}
The proof is provided in Appendix \ref{th1-proof} and includes the achievability and converse parts. For the achievability, {we} set the test channel $P_{U|X}$ to be an AWGN channel and then apply Weinstein–Aronszajn Identity \cite[Appx. B]{Pozrikidis2014} to calculate the mutual information with vector random variables. Finally, we use Lemma \ref{lemma-fx} to show that the optimal inner region is achieved when the indices of the channels to the decoder and Eve are $k^*$ and $l^*$, respectively. {For the converse, we begin} by invoking the sufficient statistics \cite{cover} to convert vector variables to scalar ones. Next, we derive a single-letter characterization of the outer bound using these scalar variables, which is then used to determine the parametric expressions for the Gaussian case. The proof employs a technique based on Fisher information, introduced in \cite{ekrem-2014}. In the final step, we again apply Lemma \ref{lemma-fx} to derive the outer region valid for an arbitrary pair $(k,l)$, which is obtained when the decoder and Eve observe the channels indexed by $k^*$ and $l^*$ as well, coinciding with the optimal inner bound.
\end{proof}

In Theorem \ref{th1-gauss}, the condition $\H^\intercal_{k^*}\H_{k^*} \ge \tH^\intercal_{l^*}\tH_{l^*}$ indicates that the channel power gain of the worst link to the decoder is at least as large as that of the best link to Eve. In the single-antenna case, i.e., $|\mathcal{K}|=1=|\mathcal{L}|$, this condition corresponds to physically degraded channels, where the channel to Eve is physically degraded with respect to the channel to the decoder.

Unlike the discrete sources, the inner and outer bounds for the Gaussian sources coincide. This is because, in the outer bound, the variable involved in the optimization is a scalar parameter, and rate constraints are given by logarithmic functions of the optimization parameter, $\alpha$, and the values of channel power gains $\H^\intercal_{k}\H_{k}$ and $\tH^\intercal_{l}\tH_{l}$. These functions are monotonic with respect to the channel power gains for an arbitrary $\alpha$. As a result, the order of intersection and union does not matter and can be swapped, which enable us to take the intersection over channel states $(k,l)$ for each rate constraint and determine the saddle point $(k^*,l^*)$ at which the outer bound matches the inner bound.

As a special case, when $\Omega_Y$, $\Omega_Z$, $|\mathcal{K}|$, and $|\mathcal{L}|$ are all one (let $\H=h$ and $\tH=\Tilde{h}$), the AWGN channels to the decoder and Eve reduce to $Y = hX + N_Y$ and $Z = \Tilde{h}X + N_Z$, respectively, with $N_Y \sim \mathcal{N}(0,1)$ and $N_Z \sim \mathcal{N}(0,1)$. In this case, using the correlation coefficients of $(X,Y)$, $\rho^2_{XY} = \sigma^2_Xh^2/(\sigma^2_Xh^2 + 1)$ and that of $(X,Z)$, $\rho^2_{XZ} = \sigma^2_X\Tilde{h}^2/(\sigma^2_X\Tilde{h}^2 + 1)$, the regions \eqref{gs-cp-gauss} and \eqref{cs-cp-gauss} can be transformed as
\begin{align}
\RG
    &=\bigcup_{0 < \alpha \le 1}\Big\{(R_S,R_J,R_L)\in \mathbb{R}^3_+: \nonumber \\
    &R_S \le\frac{1}{2}\log\frac{\alpha\rho^2_{XZ} + 1 - \rho^2_{XZ}}{\alpha\rho^2_{XY} + 1 - \rho^2_{XY}}, \nonumber \\
    &R_J \ge \frac{1}{2}\log\frac{\alpha\rho^2_{XY} + 1 - \rho^2_{XY}}{\alpha}, \nonumber \\
    &R_L \ge \frac{1}{2}\log\frac{\alpha\rho^2_{XY} + 1 - \rho^2_{XY}}{\alpha}\Big\} \label{special-1}, \displaybreak[0]\\
\RC
    &=\bigcup_{0 < \alpha \le 1}\Big\{(R_S,R_J,R_L)\in \mathbb{R}^3_+: \nonumber \\
    &R_S \le\frac{1}{2}\log\frac{\alpha\rho^2_{XZ} + 1 - \rho^2_{XZ}}{\alpha\rho^2_{XY} + 1 - \rho^2_{XY}}, \nonumber \displaybreak[0]\\
    &R_J \ge \frac{1}{2}\log\frac{\alpha\rho^2_{XZ} + 1 - \rho^2_{XZ}}{\alpha}, \nonumber \displaybreak[0]\\
    &R_L \ge \frac{1}{2}\log\frac{\alpha\rho^2_{XY} + 1 - \rho^2_{XY}}{\alpha}\Big\} \label{special-2}.
\end{align}
The regions in \eqref{special-1} and \eqref{special-2} align with \cite[Cor. 1]{vamoua-tifs} when \cite[eq. (5)]{vamoua-tifs} is replaced with \eqref{privacy} to eliminate the quantity $I(X;Z)$. Moreover, when the storage rate is not considered, i.e., \eqref{storage} is not imposed, and Eve has no side information, i.e., $\rho_{XZ} = 0$, the regions in \eqref{special-1} and \eqref{special-2} simplify to \cite[Th. 4.1 and 4.2]{iw-book}.

\subsection{Numerical Examples} \label{num-section}

{We begin by presenting numerical calculations that illustrate the relationship between the secret-key and storage rates in the GS model, and then proceed to compare the secret-key and privacy-leakage rates of the GS and CS models under the same storage rate}.

For investigating the relation of the secret-key and storage rates, we consider three cases, with the parameters summarized as follows:
1.~$\Omega_Y=\Omega_Z=1$ with $\H^\intercal_{k^*} = 0.95$ and $\tH^\intercal_{l^*} = 0.8$, 2.~$\Omega_Y=3$ and $\Omega_Z=1$ with $\H^\intercal_{k^*} = [0.95~0.95~0.95]$ and $\tH^\intercal_{l^*} = 0.8$, and 3.~$\Omega_Y=3$ and $\Omega_Z=4$ with $\H^\intercal_{k^*} = [0.95~0.95~0.95]$ and $\tH^\intercal_{l^*} = [0.8~0.8~0.5~0.5]$.
Moreover, we fix the variance of the source identifier as $\sigma^2_{X} = 5$ for all cases.

For a given $\alpha$, define the optimal storage rate $R_J(\alpha) = \frac{1}{2}\log\frac{\alpha\sigma^2_X\H^\intercal_{k^*}\H_{k^*} + 1}{\alpha(\sigma^2_X\H^\intercal_{k^*}\H_{k^*} + 1)},$ from which one can express $\alpha$ as
\begin{align}
\alpha = \frac{1}{2^{2R_J(\alpha)} +(2^{2R_J(\alpha)}-1)\sigma^2_X\H^\intercal_{k^*}\H_{k^*}}. \label{alpah-rj}
\end{align}
Substituting \eqref{alpah-rj} into the right-hand side of \eqref{rs-gauss-th1}, the optimal secret-key rate based on $R_J(\alpha)$ is given by
$
R_S(R_J(\alpha)) = \frac{1}{2}\log\left(\frac{\sigma^2_X\H^\intercal_{k^*}\H_{k^*}(1-2^{-2{R_J(\alpha)}}) + \sigma^2_X\tH^\intercal_{l^*}\tH_{l^*}2^{-2{R_J(\alpha)}} + 1}{\sigma^2_X\tH^\intercal_{l^*}\tH_{l^*} + 1 }\right)
$.
Note that if $R_J(\alpha) \rightarrow \infty$, $R^*_S(R_J(\alpha)) \rightarrow \frac{1}{2}\log\left(\frac{\sigma^2_X\H^\intercal_{k^*}\H_{k^*} + 1}{\sigma^2_X\tH^\intercal_{l^*}\tH_{l^*} + 1 }\right)$.

Figure \ref{fig-rjrs}(a) depicts the relation of $(R_J(\alpha),R_S({R_J(\alpha)}))$. In this figure, Case~2 (blue curve) shows a high secret-key rate compared to the other cases. This is due to an increase in the number of antennas at the decoder, which enhances the correlation between the source and observations at the terminal. On the other hand, in {Case~3} (red curve), as the number of antennas at Eve increases, the secret-key rate drops compared to Case~2 because the stronger correlation with Eve reduces the key-generation rate. Also, Case~3 shows that even when Eve has more antennas, a positive secret-key rate is still achievable as long as $\H^\intercal_{k^*}\H_{k^*} \ge \tH^\intercal_{l^*}\tH_{l^*}$.

{Figure \ref{fig-rjrs}(b) presents the secret-key and storage rates for a given $\alpha$, focusing on Case~3, where the maximum secret-key rate reaches 0.2771 (cf.~Fig.~\ref{fig-rjrs}(a)). As $\alpha \to 0$, the storage rate grows unbounded, reflecting the absence of encoding, while the secret-key rate is maximized. In contrast, as $\alpha \to 1$, both rates approach zero. According to \eqref{x-u-phi}, this is because $U$ is highly correlated with $X$ when $\alpha \to 0$, leading to a high secret-key rate, whereas $U$ becomes independent of $X$ when $\alpha = 1$, resulting in zero secret-key rate.}

{Figures \ref{fig-rjrs}(c) and \ref{fig-rjrs}(d) respectively compare the secret-key and privacy-leakage rates between the GS and CS models for Case~2, under the same values of storage rates. In the low storage rate regime, the GS model results in a higher secret-key rate than the CS model, but at the cost of greater privacy leakage, highlighting a trade-off between these two security metrics. In the high storage rate regime, both models achieve the same secret-key rate, but the GS model still incurs greater privacy leakage than the CS model with the difference equal to the secret-key rate. This occurs because, in the CS model, the concealed data (with rate equal to the secret-key rate) reveals no information about the source identifier. These results suggest that in practical system designs, where the storage space is fixed, the GS model may be preferred in the low storage rate regime when maximizing the secret-key rate is important, while the CS model is preferable for minimizing privacy leakage. In the high storage rate regime, the CS model becomes the preferred option as it can achieve the same secret-key rate as the GS model but with lower privacy leakage}.

\begin{figure*}[t]
  \centering
  \begin{minipage}[t]{0.245\linewidth}
  \centering
    \includegraphics[width=\linewidth]{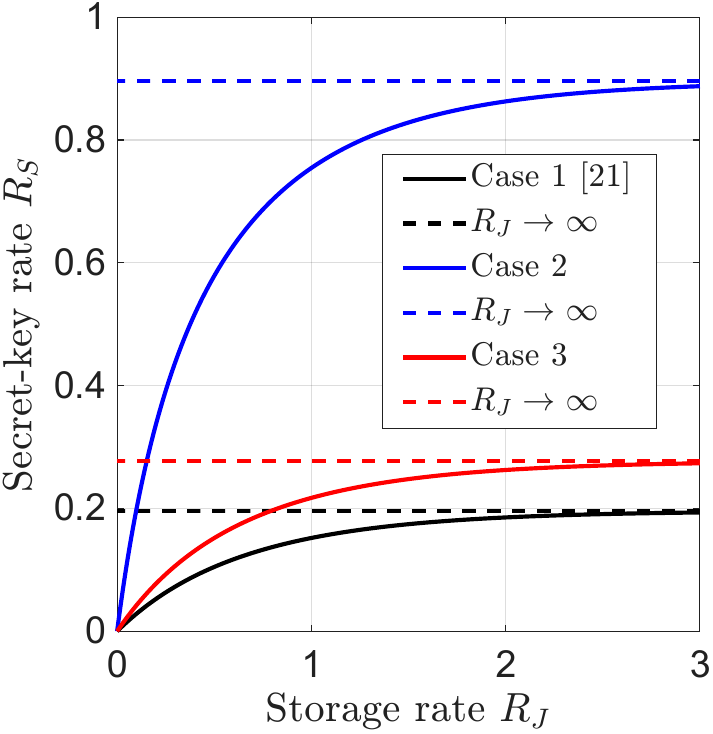}
    \small (a)
    \label{fig-a}
  \end{minipage}
  \hfill
  \begin{minipage}[t]{0.24\linewidth}
  \centering
    \includegraphics[width=\linewidth]{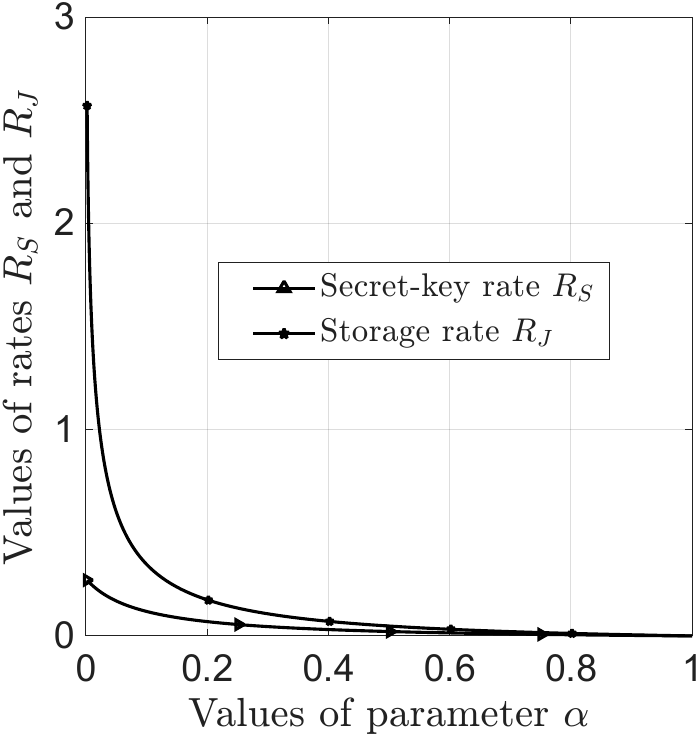}
    \small (b)
    \label{fig-b}
  \end{minipage}
  \hfill
  \begin{minipage}[t]{0.245\linewidth}
  \centering
    \includegraphics[width=\linewidth]{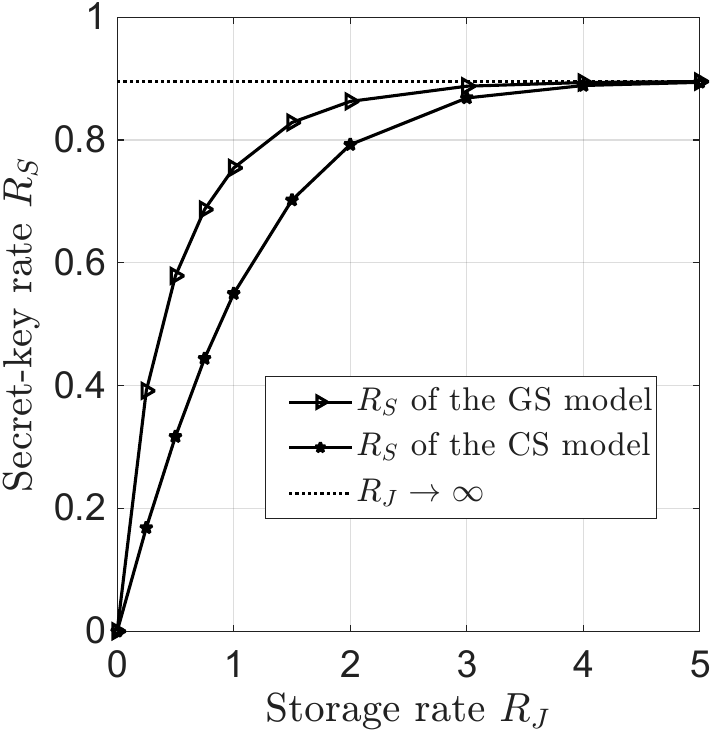}
    \small (c)
    \label{fig-c}
  \end{minipage}
  \hfill
  \begin{minipage}[t]{0.243\linewidth}
  \centering
    \includegraphics[width=\linewidth]{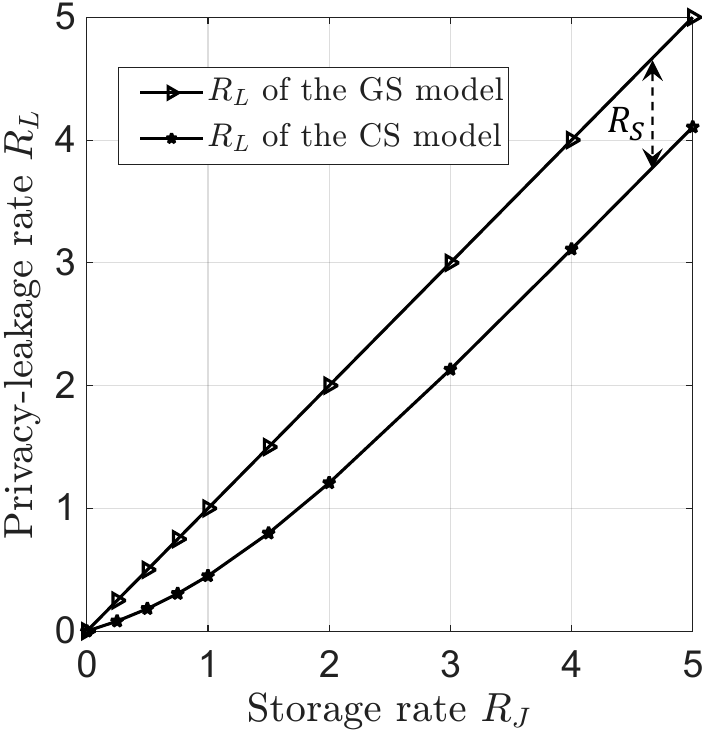}
    \small (d)
    \label{fig-d}
  \end{minipage}
  \caption{{(a) the relation of storage and secret-key rates in the GS model, (b) the secret-key and storage rates for a given value of $\alpha$ in the GS model, and for given a storage rate, a comparison of the secret-key and privacy-leakage rates in the GS and CS models are shown in (c) and (d), respectively}.}
  \label{fig-rjrs}
\end{figure*}

\section{Concluding Remarks and Future Directions} \label{sect4}

We studied secret-key generation from private identifiers under channel uncertainty and multiple-output settings. This setup addresses authentication robustness against {eavesdroppers in scenarios} where the legitimate terminals lack exact CSI and Eve may use multiple antennas to improve signal reception. {We} derived inner and outer bounds for discrete sources and provided a full capacity characterization for Gaussian sources. The main technical contributions lie in proving the inner bound for discrete memoryless sources and the outer bound for the Gaussian case.

To prove the inner bound for discrete sources, we first extend the technique used in \cite{liang-compound} for compound wiretap channels to ensure that the generated secret key is uniform and remains secret from Eve’s observation for any channel state. For the outer bound in the Gaussian case, we first employ sufficient statistics to convert the vector problem into a scalar one, so that we can use the degraded property of the scalar Gaussian random variables to derive a single-letter characterization of the outer region. Then, we apply the single-letter characterization to derive the parametric expression for the Gaussian case with Fisher information-based techniques playing a crucial role in the derivation. 

We also performed numerical evaluations for the Gaussian case to illustrate how changes in the number of antennas at the legitimate terminals and the eavesdropper affect the trade-offs between secret-key and storage rates, and to compare the secret-key and privacy-leakage rates of the GS and CS models under the same storage rate. The first set of results indicates that increasing the number of antennas at the decoder leads to a higher secret-key rate, while adding antennas at Eve reduces the secret-key rate. Nevertheless, even if Eve has more antennas, a positive secret-key rate remains achievable as long as the worst-case channel power gain at the decoder exceeds the best-case channel power gain at Eve. The second set of results shows that in the low storage-rate regime, the GS model achieves a higher secret-key rate, whereas the CS model offers better privacy-leakage performance. In contrast, in the high storage-rate regime, the CS model proves to be the more favorable choice as it provides the same secret-key rate as the GS model with lower privacy leakage.

A natural extension of this work is to characterize the capacity region for Gaussian sources under noisy enrollment channels. As noted in Lemma \ref{lemma-txneqx}, since we may not be able to model the covariance matrices of the independent noises as identity matrices, the analysis will become more involved compared to that of Theorem \ref{th1-gauss}. This arises because the scalar problem obtained by transforming the original vector problem using sufficient statistics results in more complicated forms than the expressions in Lemma \ref{bar-relation}.
Extending the scenario to the case of vector Gaussian sources is also an interesting topic. Another possible avenue is to include user identification as studied in \cite{itw-tit-2015,kitti-2016,vy2020} and see how the identification rate influences the capacity region.

\appendices
\section{Proof of Propositions \ref{inner-bound} and \ref{discrete-result-outer}}
\subsection{Proof of Proposition \ref{inner-bound}} \label{proof-inner}
We  only  prove \eqref{inner-gs} since \eqref{inner-cs} follows similarly with an extra procedure, a one-time pad procedure to conceal the chosen secret key. As a result, an extra rate equal to the secret-key rate is needed for storing the concealed key information in the database, which appears in the constraint of the storage rate of the CS model.

Fix the test channels $P_{U|\tX}$ and $P_{V|U}$ and let $\delta > 0$. In the following, we show that these rates are achievable
\begin{align}
    R_S&~\triangleq~\min_kI(\Y_k;U|V) - \max_lI(\Z_l;U|V) - \delta, \\
    R_J &~{\triangleq}~ \max_kI(\tX;U|V,\Y_k) + \max_kI(\tX;V|\Y_k) + 5\delta, \displaybreak[0]\\
    R_L &~{\triangleq}~ \max_kI(\tX;U|V,\Y_k) + \max_kI(\tX;V|\Y_k) \nonumber \\
    &~~~-I(\tX;U|X)+ \min_kI(\Y_k;V) -\min_lI(\Z_l;V) + 4\delta.
\end{align}
For the random codebook construction, we also define
\begin{align}
R_v&{\triangleq}~I(\tX;V) + \delta,~~{R_{J_{v_1}}}{\triangleq}\max_kI(\tX;V|\Y_k) + 2\delta, \\
 R_u&{\triangleq}~I(\tX;U|V) + \delta,~~{R_{J_{u_1}}}{\triangleq}\max_kI(\tX;U|\Y_k,V) + 3\delta,
\end{align}
and the sets $\mathcal{J}_{v_1}{\triangleq}[1:2^{n{R_{J_{v_1}}}}]$, $\mathcal{J}_{v_2}{\triangleq}[1:2^{n(R_v-{R_{J_{v_1}}})}]$, $\mathcal{J}_{u_1}{\triangleq}[1:2^{n{R_{J_{u_1}}}}]$, $\mathcal{J}_{u_2}{\triangleq}[1:2^{nR_S}]$, $\mathcal{J}_{u_3}{\triangleq}[1:2^{n(\max_lI(\Z_l;U|V)-\delta)}]$. Note that $R_u = {R_{J_{u_1}}} + R_S + \max_lI(\Z_l;U|V)-\delta$.

\smallskip
\noindent{\em \bf Random Codebook}: Generate i.i.d.\ sequences $v^n(j_{v_1},j_{v_2})$ from $P_{V^n}$, where $(j_{v_1},j_{v_2}) \in \mathcal{J}_{v_1}\times \mathcal{J}_{v_2}$.
For every $(j_{v_1},j_{v_2})$, generate i.i.d.\ sequences $u^n(j_{u_1},{j_{u_2},j_{u_3}},j_{v_1},j_{v_2})$, where $(j_{u_1},j_{u_2},j_{u_3}) \in \mathcal{J}_{u_1}\times  \mathcal{J}_{u_2}\times \mathcal{J}_{u_3}$, according to $P_{U^n|V^n=v^n(j_{v_1},j_{v_2})}$. All the generated sequences $(V^n(j_{v_1},j_{v_2}),U^n(j_{u_1},j_{u_2},j_{u_3},j_{v_1},j_{v_2}))$ form the~ codebook~$\mathcal{C}_n$. 

\smallskip
\noindent{\em \bf Encoding}: Observing $\tx^n{}$, the encoder first finds $(j_{v_1},j_{v_2})$ such that $(\tx^n,v^n(j_{v_1},j_{v_2})) \in \mathcal{T}^n_{\delta}$. Then, it looks for $(j_{u_1},j_{u_2},j_{u_3})$ such that $(\tx^n,u^n(j_{u_1},j_{u_2},j_{u_3},j_{v_1},j_{v_2})) \in \mathcal{T}^n_{\delta}(\tX U|v^n(j_{v_1},j_{v_2}))$. 
 If a unique tuple $(j_{u_1},j_{u_2},j_{u_3},j_{v_1},j_{v_2})$ is found, the encoder assigns the helper data  $j = (j_{v_1},j_{u_1})$ and the secret key $s=j_{u_2}$. If multiple such tuples are found, the encoder selects one tuple uniformly at random and assigns $j = (j_{v_1},j_{u_1})$ and $s=j_{u_2}$. In case no such tuple exists, the encoder sets all $j_{v_1}$, $j_{v_2}$, $j_{u_1}$, $j_{u_2}$, and $j_{u_3}$ to be one and assigns $j = (1,1)$ and $s=1$.

\smallskip
\noindent{\em \bf Decoding}: From $\y^n_k$ and $(j_{u_1},j_{v_1})$, the decoder first looks for the unique index $\hat{j}_{v_2}$ such that $(\y^n_k,v^n(j_{v_1},\hat{j}_{v_2})) \in \mathcal{T}^n_{\delta}$. Then, it looks for the unique pair $(\hat{j}_{u_2}, \hat{j}_{u_3})$ such that $(\y^n_k,u^n(j_{u_1},\hat{j}_{u_2},\hat{j}_{u_3},j_{v_1},\hat{j}_{v_2})) \in \mathcal{T}^n_{\delta}(\Y_kU|v^n(j_{v_1},\hat{j}_{v_2}))$. If the indices $\hat{j}_{u_2}, \hat{j}_{u_3}$, and $\hat{j}_{v_2}$ are uniquely determined, then the decoder estimates $\hat{s}=\hat{j}_{u_2}$; otherwise, it sets $\hat{s}=1$ and declares an error.

\smallskip
In the following, we write $V^n(J_{v_1},J_{v_2})$ and $U^n(J_{u_1},J_{u_2},J_{u_3},J_{v_1},J_{v_2})$ as $V^n$ and $U^n$ for convenience.

\smallskip
{\em Analysis of Error Probability}:~~~Possible error events at the encoder are
\begin{itemize}
    \item[$\mathcal{E}_1$]: $\{(\tX^n{},V^n(j_{v_1},j_{v_2}))~{\notin}~\mathcal{T}^n_{\delta}, \forall (j_{v_1},j_{v_2}) \in \mathcal{J}_{v_1}\times \mathcal{J}_{v_2}\}$,
    \item[$\mathcal{E}_2$]: $\{(\tX^n,U^n(j_{u_1},j_{u_2},j_{u_3},J_{v_1},J_{v_2}))\notin\mathcal{T}^n_{\delta}(\tX U|V^n)$, $ \forall(j_{u_1},j_{u_2},j_{u_3}) \in \mathcal{J}_{u_1}\times  \mathcal{J}_{u_2}\times \mathcal{J}_{u_3}\}$,
\end{itemize}
and those at the decoder are
\begin{itemize}
    \item[$\mathcal{E}_3$]: $\{(\Y^n_k,U^n,V^n) \notin \mathcal{T}^n_{\delta}\}$,
    \item[$\mathcal{E}_4$]: $\{ \exists j'_{v_2} \in \mathcal{J}_{v_2},j'_{v_2} \neq J_{v_2}$ and $(\Y^n_k,V^n(J_{v_1},j'_{v_2})) \in \mathcal{T}^n_{\delta}\}$,
    \item[$\mathcal{E}_5$]: $\{\exists (j'_{u_2},j'_{u_3}) \in \mathcal{J}_{u_2}\times \mathcal{J}_{u_3}, (j'_{u_2},j'_{u_3})\neq (J_{u_2},  J_{u_3})$ and $(\Y^n_k,U^n(J_{u_1},j'_{u_2},j'_{u_3},J_{v_1},J_{v_2}))\in \mathcal{T}^n_{\delta}(\Y_kU|V^n)\}$.
\end{itemize}
Then, we have
\begin{align}
    &\max_k{\mathbb{P}}\{\hat{S}_k \neq S\} = \mathbb{P}\{\cup_{i=1}^5\mathcal{E}_i\} \nonumber \\
    &\le \mathbb{P}\{\mathcal{E}_1\}+\mathbb{P}\{\mathcal{E}_2\} + \mathbb{P}\{\mathcal{E}_3\cap(\mathcal{E}_1\cup\mathcal{E}_2)^c\} + \mathbb{P}\{\mathcal{E}_4\} + \mathbb{P}\{\mathcal{E}_5\}.
\end{align}

The first and second terms vanish by the covering lemma \cite[Lemma 3.3]{GK} since $R_v > I(\tX;V)$ and $R_u > I(\tX;U|V)$, respectively. The third term vanishes by  Markov lemma \cite[Lemma 15.8.1]{cover}. The last two terms vanish by the packing lemma \cite[Lemma 3.1]{GK}, since the rate of index $\hat{j}_{v_2}$ is less than $\min_kI(\Y_k;V)$ and that of index pair $(\hat{j}_{u_2},\hat{j}_{u_3})$ is less than $\min_kI(\Y_k;U|V)$, respectively. Hence, we have
\vspace{-2mm}
\begin{align}
    \lim_{n \rightarrow \infty}\max_k{\mathbb{P}}\{\hat{S}_k \neq S\} \rightarrow 0. \label{error-achie}
\end{align}

Before we analyze the constraints \eqref{secretk}, \eqref{storage}, \eqref{secrecy}, and \eqref{privacy} in Definition \ref{def1}, we state two lemmas. The first one, Lemma \ref{enhanced-Eve}, is an extended version of \cite[Lemma A.1]{liang-compound} to incorporate conditional mutual information.
\begin{Lemma} \label{enhanced-Eve}
If the inequality $I(\Z_l;U|V) < I(\Z_{l'};U|V)$ holds for $l,l' \in \mathcal{L}$, then there exists a vector random variable $\A$ such that the equality $I(\Z_l,\A;U|V) = I(\Z_{l'};U|V)$ and the Markov chain $V-U-\tX-(\Z_l,\Z_{l'})-\A$ are satisfied.
\end{Lemma}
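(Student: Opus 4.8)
The plan is to construct the auxiliary random variable $\A$ explicitly via a convex-combination (time-sharing) argument, following the structure of \cite[Lemma A.1]{liang-compound} but carrying the conditioning on $V$ throughout. First I would introduce an auxiliary "switch" variable: let $Q$ be a Bernoulli random variable, independent of $(V,U,\tX,\Z_l,\Z_{l'})$, with $\mathbb{P}\{Q=1\} = \lambda$ for a parameter $\lambda \in [0,1]$ to be fixed later. Define $\A$ to be a degraded version of $(\Z_l,\Z_{l'})$ through a channel that depends on $Q$: when $Q=0$, $\A$ is a constant (carries no information), and when $Q=1$, $\A$ equals $\Z_{l'}$ (passed through the identity channel, or more precisely a reproduction of $\Z_{l'}$ generated from $\Z_{l'}$ alone). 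One then sets $\A \triangleq (Q, \A')$ where $\A'$ is this switched output. The key structural point is that $\A$ is generated solely from $(\Z_l,\Z_{l'})$ (and external randomness $Q$), so the Markov chain $V-U-\tX-(\Z_l,\Z_{l'})-\A$ holds by construction, since in the system \eqref{jointd} we already have $V-U-\tX-X-(\Y_k,\Z_l)$, hence $V-U-\tX-(\Z_l,\Z_{l'})$, and appending a variable that is a stochastic function of $(\Z_l,\Z_{l'})$ alone preserves the chain.

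Next I would verify the mutual-information equality by a continuity/intermediate-value argument. Define the function
\begin{align}
g(\lambda) \triangleq I(\Z_l,\A;U|V) = I(\Z_l;U|V) + I(\A;U|V,\Z_l),
\end{align}
where the decomposition uses the chain rule. At $\lambda = 0$, $\A$ is (conditionally) constant, so $g(0) = I(\Z_l;U|V)$, which by hypothesis is strictly less than $I(\Z_{l'};U|V)$. At $\lambda = 1$, $\A$ reproduces $\Z_{l'}$, so $I(\Z_l,\A;U|V) = I(\Z_l,\Z_{l'};U|V) \ge I(\Z_{l'};U|V)$, giving $g(1) \ge I(\Z_{l'};U|V)$. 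Since $g$ is continuous in $\lambda$ — it is an affine-in-$\lambda$ combination of conditional mutual informations because $Q$ is an independent time-sharing variable — the intermediate value theorem yields some $\lambda^\star \in (0,1]$ with $g(\lambda^\star) = I(\Z_{l'};U|V)$. Fixing $\lambda = \lambda^\star$ defines the desired $\A$. I would also remark that all the random variables here are the per-letter (single-letter) versions consistent with the Markov chain hypothesis in Propositions \ref{inner-bound} and \ref{discrete-result-outer}, and that when they are vector-valued (multiple antennas) the construction is unchanged since $\Z_l,\Z_{l'}$ are simply vectors.

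The main obstacle I anticipate is the continuity/monotonicity bookkeeping: one must make sure that the time-sharing variable $Q$ is set up so that $g(\lambda)$ is genuinely continuous (indeed affine) in $\lambda$ and that the conditioning on $V$ does not break the affine structure — this is where the extension beyond \cite[Lemma A.1]{liang-compound} lives. The cleanest way is to write $I(\A;U|V,\Z_l) = \lambda\, I(\Z_{l'};U|V,\Z_l) + (1-\lambda)\cdot 0$ using the definition of $\A$ as a $Q$-switched copy of $\Z_{l'}$, so that $g(\lambda) = I(\Z_l;U|V) + \lambda\, I(\Z_{l'};U|V,\Z_l)$ is manifestly affine and nondecreasing, with $g(1) = I(\Z_l,\Z_{l'};U|V) \ge I(\Z_{l'};U|V)$; the strict inequality at $\lambda=0$ then guarantees the crossing point is attained. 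A secondary check is that including $Q$ inside $\A$ (rather than leaving it as hidden common randomness) is harmless for the Markov chain, which it is, since $Q$ is independent of everything; alternatively one keeps $Q$ external and notes the equality holds on average, then fixes a good realization. I would close by noting this $\A$ is exactly the object used in the codebook analysis to replace Eve's channel $\Z_l$ by an "enhanced" channel matched to the worst-case state $l'$, which is what makes the secrecy and uniformity analyses go through uniformly over $l \in \mathcal{L}$.
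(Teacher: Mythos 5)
Your proposal is correct and follows essentially the same route as the paper: the paper defines $\A \triangleq (\Z_B,B)$ with a Bernoulli switch $B$ independent of all other variables, obtains the affine function $\Gamma(p)=pI(\Z_l;U|V)+(1-p)I(\Z_l,\Z_{l'};U|V)$, and applies the intermediate value theorem, which is the same time-sharing construction as your $(Q,\A')$ up to relabeling $p=1-\lambda$. Your verification of the Markov chain (that $\A$ is a stochastic function of $(\Z_l,\Z_{l'})$ and independent randomness) and of the endpoint inequalities matches the paper's argument.
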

\begin{proof} Let $B$ be a binary random variable taking values $l$ and $l'$ with probabilities $p$ and $1-p$, respectively, where $ 0 \le p \le 1$, and assume that $B$ is independent of all other random variables. Define  $\A \triangleq (\Z_B,B)$ and  $\Gamma(p) = I(\Z_l,\A;U|V)$. Due to the independence of $B$, we have
\begin{align}
    \Gamma(p)= pI(\Z_l;U|V) + (1-p)I(\Z_l,\Z_{l'};U|V).
\end{align}
Now observe that $\Gamma(1) < I(\Z_{l'};U|V) \le \Gamma(0)$, where the first inequality follows by the assumption $I(\Z_l;U|V) < I(\Z_{l'};U|V)$. Due to the continuity of the function $\Gamma(p)$ for all $p \in [0,1]$, there exists a $p^* \in [0,1]$ such that $\Gamma(p^*) = I(\Z_{l'};U|V)$, and thus the equality $I(\Z_l,\A;U|V) = I(\Z_{l'};U|V)$ is satisfied with $\A = (\Z_{B^*},B^*)$ and $B^*$ taking the values $l$ and $l'$ with probabilities $p^*$ and $1-p^*$, respectively. Also, this choice of $\A$ ensures that the Markov chain $V-U-\tX-(\Z_l,\Z_{l'})-\A$ is satisfied.
\end{proof}

Lemma \ref{enhanced-Eve} is used to show the existence of a random variable that achieves $\max_{l}I(\Z_l;U|V)$ and forms a Markov chain with $(V,U,\tX)$, as detailed in the following intermediate step.

\smallskip
\emph{Intermediate Step}: For any $l \in \mathcal{L}$, by Lemma \ref{enhanced-Eve}, there exists $\A$ such that for
\begin{align}
    \tZ \triangleq (\Z_l,\A) ,\label{en-Eve}
\end{align}
we have $I(\tZ;U|V) = \max_{l}I(\Z_l;U|V)$ and
\begin{align}
V-U-\tX-\tZ. \label{ztile-mk}
\end{align}
Moreover, define a binary random variable $T$, which takes $1$ if $(U^n,\tX^n,\tZ^n) \in \mathcal{T}^n_{\delta}$ and $0$ otherwise. For large enough $n$, it holds that
\vspace{-2mm}
\begin{align}
P_T(1) \ge 1 - \Tilde{\delta}_n \label{pt-1}
\end{align}
with $\Tilde{\delta}_n \downarrow 0$ as $\delta \downarrow 0$ and $n \rightarrow \infty$. This follows because the pair $(U^n,\tX^n)$ is jointly typical with probability approaching one, as shown in \eqref{error-achie}, and $\tZ^n$ is i.i.d. generated according to $\prod_{t=1}^nP_{\tZ_t|\tX_t}$ from \eqref{ztile-mk}, and thus \eqref{pt-1} follows by applying the Markov lemma \cite[Lemma 15.8.1]{cover}. Similarly, we  have  joint typicality of $(V^n,\tZ^n)$ as $(V^n,\tX^n)$ is jointly typical with high probability. These properties are applied in proving the next lemma, which plays a key role in the analyses of the secret-key uniformity and secrecy-leakage.

\begin{Lemma} \label{huvzl}
For an arbitrary index $l \in \mathcal{L}$, we have
\begin{align}
H(J_{u_2}|J_{u_1},J_{v_1},\Z^n_l,\mathcal{C}_n) &\ge n(R_S - \xi_n), \label{hmzl}
\end{align}
where $\xi_n$ goes to zero as $\delta \downarrow 0$ and $n \rightarrow \infty$.
\end{Lemma}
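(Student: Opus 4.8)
The plan is to lower-bound $H(J_{u_2}\mid J_{u_1},J_{v_1},\Z^n_l,\mathcal{C}_n)$ by exhibiting that, conditioned on the helper data $(J_{u_1},J_{v_1})$ and Eve's observation $\Z^n_l$, there remain roughly $2^{nR_S}$ equally likely candidates for the index $J_{u_2}$. The first step is to replace $\Z^n_l$ by the enhanced observation $\tZ^n=(\Z^n_l,\A^n)$ constructed in the Intermediate Step: since conditioning reduces entropy, $H(J_{u_2}\mid J_{u_1},J_{v_1},\Z^n_l,\mathcal{C}_n)\ge H(J_{u_2}\mid J_{u_1},J_{v_1},\tZ^n,\mathcal{C}_n)$, and now we may exploit the equality $I(\tZ;U|V)=\max_l I(\Z_l;U|V)$ together with the Markov chain \eqref{ztile-mk}. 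The sizes were chosen precisely so that $R_u = R_{J_{u_1}} + R_S + (\max_l I(\Z_l;U|V)-\delta)$, i.e. the $U$-layer carries a sub-bin of rate $R_S + \max_l I(\Z_l;U|V) - \delta$ on top of the helper index $J_{u_1}$, and the enhanced eavesdropper can resolve only a sub-sub-bin of rate $\max_l I(\Z_l;U|V)-\delta$ of it.

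The technical heart is a conditional-typicality counting argument in the spirit of \cite[Lemma A.1]{liang-compound} and the wiretap soft-covering / channel-resolvability estimates. Conditioned on the codebook, on $(J_{v_1},J_{v_2})$ and hence on $V^n$, and on $J_{u_1}$, the sequences $U^n(J_{u_1},j_{u_2},j_{u_3},J_{v_1},J_{v_2})$ with $(j_{u_2},j_{u_3})\in\mathcal{J}_{u_2}\times\mathcal{J}_{u_3}$ are i.i.d.\ from $P_{U^n|V^n}$; there are $2^{n(R_S+\max_l I(\Z_l;U|V)-\delta)}$ of them. Given $\tZ^n$ jointly typical with $V^n$ (which holds with probability $\ge 1-\tilde\delta_n$ by \eqref{pt-1} and the remark following it), the number of these $U^n$ sequences that are jointly typical with $(\tZ^n,V^n)$ concentrates around $2^{n(R_S+\max_l I(\Z_l;U|V)-\delta - I(\tZ;U|V))} = 2^{n(R_S-\delta)}$, with an exponentially small probability of deviation — this is where the Markov chain $V-U-\tX-\tZ$ and the choice $I(\tZ;U|V)=\max_l I(\Z_l;U|V)$ enter. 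Moreover, by the structure of the $(j_{u_2},j_{u_3})$ indexing and the packing-type argument, these surviving sequences are spread over $\approx 2^{n(R_S-\delta)}$ distinct values of $j_{u_2}$, each roughly equally likely a posteriori. Feeding this into the entropy via a standard ``entropy of a near-uniform list'' bound (e.g.\ Fano-type or a direct $\log$-cardinality estimate combined with \eqref{pt-1} to absorb the atypical event into a vanishing term) yields $H(J_{u_2}\mid J_{u_1},J_{v_1},\tZ^n,\mathcal{C}_n)\ge n(R_S-\xi_n)$ with $\xi_n\downarrow 0$.

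The main obstacle I anticipate is making the counting estimate uniform over \emph{all} $l\in\mathcal{L}$ and over the random codebook simultaneously: the enhanced variable $\tZ$ depends on $l$ through Lemma \ref{enhanced-Eve}, and the concentration of the number of jointly typical $U^n$ sequences must hold with probability tending to one over $\mathcal{C}_n$ for the worst-case $l$. The remedy is to prove the concentration with an exponentially small failure probability for each fixed $l$ (the second-moment / Chernoff bound on the count is exponential in $n$), then take a union bound over the finite set $\mathcal{L}$, which is harmless since $|\mathcal{L}|$ is a constant; the atypical-codebook event and the event $T=0$ are then both swept into $\xi_n$. A secondary subtlety is ensuring the surviving $U^n$ sequences genuinely correspond to distinct $j_{u_2}$ values rather than collapsing onto few $j_{u_2}$'s with many $j_{u_3}$'s — this is handled by the rate split $|\mathcal{J}_{u_3}| = 2^{n(\max_l I(\Z_l;U|V)-\delta)}$ being matched to exactly the resolvability rate of $\tZ$, so that on average one $(j_{u_2})$-value contributes one surviving sequence, and a variance bound controls the fluctuation.
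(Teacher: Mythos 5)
Your plan correctly identifies the two load-bearing ingredients: passing from $\Z^n_l$ to the enhanced observation $\tZ^n=(\Z^n_l,\A^n)$ supplied by Lemma \ref{enhanced-Eve} (legitimate, since conditioning reduces entropy), and the rate bookkeeping $R_u = R_{J_{u_1}} + R_S + \max_l I(\Z_l;U|V)-\delta$, which is exactly what makes the residual uncertainty come out to $R_S$. But your route is genuinely different from the paper's, and as written it has a gap at its central step. You reduce the claim to the assertion that, given $(J_{u_1},J_{v_1},J_{v_2},\tZ^n,\mathcal{C}_n)$, the posterior of $J_{u_2}$ is nearly uniform over a list of roughly $2^{n(R_S-\delta)}$ surviving candidates. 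The second-moment/Chernoff concentration of the \emph{list size} that you invoke does not deliver this: a list of the right cardinality can still carry a badly skewed posterior. The posterior weight of a candidate $(j_{u_2},j_{u_3})$ is obtained by marginalizing over the source sequences $\tx^n$ that are both consistent with $\tz^n$ under the memoryless channel and cause the encoder to select that codeword; bounding its maximum over the list within a factor $2^{nO(\epsilon_\delta)}$ of uniform is precisely the hard part of the lemma, and in your sketch it is asserted ("each roughly equally likely a posteriori") rather than proved. Until that max-posterior bound is in place, the "entropy of a near-uniform list" step has nothing to feed on.

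The paper's proof avoids any posterior analysis by an equivocation computation in the style of \cite{liang-compound}: it writes $H(J_{u_2}|J_{u_1},J_{v_1},J_{v_2},\tZ^n,\mathcal{C}_n)$ as $H(J_{u_1},J_{u_2},J_{u_3},J_{v_1},J_{v_2},\tZ^n|\mathcal{C}_n) - H(J_{u_3}|J_{u_1},J_{u_2},J_{v_1},J_{v_2},\tZ^n,\mathcal{C}_n) - H(J_{u_1},J_{v_1},J_{v_2},\tZ^n|\mathcal{C}_n)$; it lower-bounds the first term by $H(U^n,\tZ^n|T=1,\mathcal{C}_n)$ via \eqref{pt-1} together with the single pointwise bound $P_{\tZ^nU^n}(\tz^n,u^n)\le 2^{-n(I(\tX;U)+H(\tZ|U)-2\epsilon_\delta)}$ on the typical set; it disposes of the $J_{u_3}$ term by Fano's inequality, since $\frac{1}{n}\log|\mathcal{J}_{u_3}|<I(\tZ;U|V)$ makes $J_{u_3}$ decodable from the remaining indices and $\tZ^n$ (this plays the role of your "one surviving $j_{u_3}$ per $j_{u_2}$" remark); and it upper-bounds the last term by $H(\tZ^n|V^n,\mathcal{C}_n)\le n(H(\tZ|V)+\gamma_n)$ plus the log-cardinalities of the indices, using the Markov chain \eqref{ztile-mk} to recombine everything into $\min_k I(\Y_k;U|V)-\max_l I(\Z_l;U|V)$. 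All the "uniformity" is thus extracted from one max-probability bound on the joint law of $(U^n,\tZ^n)$ rather than from the conditional law of $J_{u_2}$. To complete your counting route you would need to add the likelihood-uniformity argument for the posterior; otherwise the entropy-difference decomposition is the cleaner path. Your concern about uniformity over $l$ is not an obstacle: the lemma is proved for each fixed $l$, the maximum over $l$ is taken only in \eqref{secrecy-rinal}, and the finitely many constraints are met by a single codebook through the standard selection argument at the end of the achievability proof.
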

\begin{proof} We have
\begin{align}
&H(J_{u_2}|J_{u_1},J_{v_1},\Z^n_l,\mathcal{C}_n) \nonumber \displaybreak[0]\\
    &\ge H(J_{u_2}|J_{u_1},J_{v_1},J_{v_2},\Z^n_l,\A^n,\mathcal{C}_n) \nonumber\displaybreak[0] \\
    &\overset{\rm (a)}= H(J_{u_2}|J_{u_1},J_{v_1},J_{v_2},\tZ^n,\mathcal{C}_n) \nonumber\displaybreak[0] \\
    &= H(J_{u_1},J_{u_2},J_{u_3},J_{v_1},J_{v_2},\tZ^n|\mathcal{C}_n)\nonumber \displaybreak[0]\\
    &~~~ - H(J_{u_3}|J_{u_1},J_{u_2},J_{v_1},J_{v_2},\tZ^n,\mathcal{C}_n) \nonumber\displaybreak[0] \\
    &~~~-H(J_{u_1},J_{v_1},J_{v_2},\tZ^n|\mathcal{C}_n) \nonumber \displaybreak[0]\\
    &\overset{\rm (b)}\ge H(J_{u_1},J_{u_2},J_{u_3},J_{v_1},J_{v_2},\tZ^n|\mathcal{C}_n) \nonumber \\
    &~~~-H(J_{u_1},J_{v_1},J_{v_2},\tZ^n|\mathcal{C}_n) -n\delta_n \nonumber \displaybreak[0] \\
    &\overset{\rm (c)}\ge H(U^n,\tZ^n|\mathcal{C}_n)-H(\tZ^n|V^n,\mathcal{C}_n) \nonumber \\
    &~~~- H(J_{u_1},J_{v_1},J_{v_2}|\mathcal{C}_n)  -n\delta_n \nonumber \displaybreak[0]\\
    &\ge~P_T(1)H(U^n,\tZ^n|T=1,\mathcal{C}_n)-H(\tZ^n|V^n,\mathcal{C}_n) \nonumber \\
    &~~~{-H(J_{u_1}|\mathcal{C}_n) -H(J_{v_1}|\mathcal{C}_n)-H(J_{v_2}|\mathcal{C}_n)-n\delta_n} \nonumber \displaybreak[0]\\
    &\overset{\rm (d)}\ge~(1- \Tilde{\delta}_n)H(U^n,\tZ^n|T=1,\mathcal{C}_n)-H(\tZ^n|V^n,\mathcal{C}_n) \nonumber \\
    &~~~{-H(J_{u_1}|\mathcal{C}_n) -H(J_{v_1}|\mathcal{C}_n)-H(J_{v_2}|\mathcal{C}_n)-n\delta_n} \nonumber \displaybreak[0]\\
    &\overset{\rm (e)}\ge n(1- \Tilde{\delta}_n)(I(\tX;U)+H(\tZ|U)-2\epsilon_{\delta})-H(\tZ^n|V^n,\mathcal{C}_n) \nonumber \\
    &~~~ - H(J_{u_1}|\mathcal{C}_n) - H(J_{v_1}|\mathcal{C}_n)-H(J_{v_2}|\mathcal{C}_n)  -n\delta_n \nonumber \displaybreak[0]\\
    &\overset{\rm (f)}\ge n(I(\tX;U)+H(\tZ|U)-H(\tZ|V)-\gamma_n) \nonumber \\
    &~~~ - H(J_{u_1}|\mathcal{C}_n) - H(J_{v_1}|\mathcal{C}_n)-H(J_{v_2}|\mathcal{C}_n)  - {n\delta'_n} \nonumber \displaybreak[0]\\
    &\overset{\rm (g)}\ge n(I(\tX;U)-I(\tZ;U|V)-\gamma_n) \nonumber \\ 
    &~~~- n(\max_kI(\tX;U|V,\Y_k) + 3\delta) \nonumber \\
    &~~~- n(\max_kI(\tX;V|\Y_k) + 2\delta) \nonumber \\
    &~~~-n(\min_kI(\Y_k;V) - \delta) - n\delta'_n \nonumber \displaybreak[0]\\
    &\overset{\rm (h)}= n(I(\tX;U)-I(\tZ;U|V)) \nonumber \\ 
    &~~~- n(\max_k\{I(\tX;U|V) - I(\Y_k;U|V)\}) \nonumber \\
    &~~~- n(\max_k\{I(\tX;V)-I(\Y_k;V)\})-n\min_kI(\Y_k;V)  \nonumber \\
    &~~~ -n(4\delta +\gamma_n + \delta'_n) \nonumber \displaybreak[0]\\
    &\overset{\rm (i)}=n(\min_kI(\Y_k;U|V) - \max_lI(\Z_l;U|V) -\delta - \xi_n) \nonumber \displaybreak[0]\\
    &=n(R_S - \xi_n),
\end{align}
where (a) holds from \eqref{en-Eve}, (b) follows because the index $J_{u_3}$ can be reliably estimated from $(J_{u_1},J_{u_2},J_{v_1},J_{v_2},\tZ^n)$, as  $\frac{1}{n}\log|\mathcal{J}_3| < \max_lI(\Z_l;U|V)=I(\tZ;U|V)$, (c) holds because $U^n$ and $V^n$ are determined by the tuple $(J_{u_1},J_{u_2},J_{u_3},J_{v_1},J_{v_2})$ and the pair $(J_{v_1},J_{v_2})$, respectively, (d) follows from \eqref{pt-1}, and (e) follows from
\begin{align}
    P_{\tZ^nU^n}(\tz^n,u^n) &\le \sum_{\tx^n \in \mathcal{T}^n_{\delta}(\tX|\tz^n,u^n)}P_{\tX^n\tZ^n}(\tx^n,\tz^n) \nonumber \displaybreak[0] \\
    &\le 2^{n(H(\tX|\tZ,U)-\epsilon_\delta)}\cdot 2^{-n(H(\tX,\tZ)-\epsilon_\delta)} \nonumber \displaybreak[0] \\
    &= 2^{-n(I(\tX;U) + H(\tZ|U)-2\epsilon_\delta)},
\end{align}
(f) follows because, as shown in the intermediate step, $(V^n,\tZ^n)$ is jointly typical with high probability and thus  $H(\tZ^n|V^n,\mathcal{C}_n) \le n(H(\tZ|V) + \gamma_n)$ (cf. \cite[eq. (16)]{remi-2014}) and $\delta'_n \triangleq \Tilde{\delta}_n(I(\tX;U)+H(\tZ|U))+2(1 -\Tilde{\delta}_n)\epsilon_{\delta} + \delta_n$, (g) is due to the Markov chain \eqref{ztile-mk} and $H(J_{u_1}|\mathcal{C}_n) \le n{R_{J_{u_1}}}$, $H(J_{v_1}|\mathcal{C}_n) \le~n{R_{J_{v_1}}}$, $H(J_{v_2}|\mathcal{C}_n) \le n(R_v-R_{J_{v_1}}) = n(\min_kI(\Y_k;V) - \delta)$, (h) is due to the Markov chain $V-U-\tX-\Y_k$,  (i) follows from $I(\tZ;U|V) = \max_lI(\Z_l;U|V)$ and $\xi_n~\triangleq~3\delta +\gamma_n + {\delta'_n}$.
\end{proof}

{\em Analyses of Uniformity and Secrecy-leakage}:~~~The constraints of \eqref{secretk} and \eqref{secrecy} can be evaluated as
\begin{align}
    H(S|\mathcal{C}_n) &= H(J_{u_2}|\mathcal{C}_n) \nonumber \\
    &\ge H(J_{u_2}|J_{u_1},J_{v_1},\Z^n_l,\mathcal{C}_n) \nonumber \displaybreak[0]\\
    &\ge n(R_S - \xi_n), \label{rs-finall}
\end{align}
and
\begin{align}
\max_l&I(S;J,\Z^n_l|\mathcal{C}_n)=\max_lI(J_{u_2};J_{u_1},J_{v_1},\Z^n_l|\mathcal{C}_n) \nonumber  \displaybreak[0]\\
    &= \max_l\{H(J_{u_2}|\mathcal{C}_n)-H(J_{u_2}|J_{u_1},J_{v_1},\Z^n_l,\mathcal{C}_n)\} \nonumber \displaybreak[0]\\
    &\le \max_l\{nR_S -n(R_S - \xi_n)\} = n\xi_n, \label{secrecy-rinal}
\end{align}
where \eqref{rs-finall} and \eqref{secrecy-rinal} follow from Lemma \ref{huvzl}.

\smallskip
{\em Analysis of Storage Rate}:~~~The helper data is $J = (J_{v_1},J_{u_1})$, and thus the total storage rate is 
$
 \frac{1}{n}\log|\mathcal{J}_{v_1}||\mathcal{J}_{u_1}| = R_{J_{v_1}} + R_{J_{u_1}} = R_J.
$

\smallskip
{\em Analysis of Privacy-Leakage Rate}:
We have
\begin{align}
&\max_lI(X^n;J|\Z^n_l,\mathcal{C}_n)=\max_lI(X^n;J_{u_1},J_{v_1}|\Z^n_l,\mathcal{C}_n)\nonumber \\ 
&= I(X^n;J_{u_1},J_{v_1}|\mathcal{C}_n) -\min_lI(\Z^n_l;J_{u_1},J_{v_1}|\mathcal{C}_n) \nonumber \displaybreak[0]\\
    &\le H(J_{u_1}|\mathcal{C}_n) + H(J_{v_1}|\mathcal{C}_n) - H(J_{u_1},J_{v_1}|X^n,\mathcal{C}_n) \nonumber \\
    &~~~-\min_lI(\Z^n_l;J_{u_1},J_{v_1}|\mathcal{C}_n) \nonumber \displaybreak[0]\\
    &= H(J_{u_1}|\mathcal{C}_n) + H(J_{v_1}|\mathcal{C}_n) - H(\tX^n,J_{u_1},J_{v_1}|X^n,\mathcal{C}_n) \nonumber \\
    &~~~+ H(\tX^n|X^n,J_{u_1},J_{v_1},\mathcal{C}_n) -\min_lI(\Z^n_l;J_{u_1},J_{v_1}|\mathcal{C}_n) \nonumber \displaybreak[0]\\
   &\overset{\rm (a)}\le H(J_{u_1}|\mathcal{C}_n) + H(J_{v_1}|\mathcal{C}_n) - nH(\tX|X)\nonumber \\
    &~~~+n(H(\tX|X,U) + \epsilon'_{n})-\min_lI(\Z^n_l;J_{u_1},J_{v_1}|\mathcal{C}_n) \nonumber \displaybreak[0]\\
    &\le H(J_{u_1}|\mathcal{C}_n) + H(J_{v_1}|\mathcal{C}_n) - nI(\tX;U|X) \nonumber \\
    &~~~-\min_lI(\Z^n_l;J_{v_1}|\mathcal{C}_n) + n\epsilon'_n \nonumber \displaybreak[0]\\
     &\le H(J_{u_1}|\mathcal{C}_n) + H(J_{v_1}|\mathcal{C}_n) - nI(\tX;U|X) \nonumber \displaybreak[0]\\
    &~~~-{\min_l\{H(\Z^n_l)-H(\Z^n_l|J_{v_1},J_{v_2},\mathcal{C}_n)\}} \nonumber \\ 
    &~~~+\max_lI(J_{v_2};\Z^n_l|J_{v_1},\mathcal{C}_n) + n\epsilon'_n \nonumber \displaybreak[0]\\
   &\overset{\rm (b)}\le H(J_{u_1}|\mathcal{C}_n) + H(J_{v_1}|\mathcal{C}_n) - nI(\tX;U|X) \nonumber \\
    &~~~-{\min_l\{H(\Z^n_l)-H(\Z^n_l|V^n,\mathcal{C}_n)\}}+H(J_{v_2}|\mathcal{C}_n) + n\epsilon'_n \nonumber \displaybreak[0]\\
    &\overset{\rm (c)}\le H(J_{u_1}|\mathcal{C}_n) + H(J_{v_1}|\mathcal{C}_n) -nI(\tX;U|X)\nonumber \\
    &~~~-n(\min_l\{H(\Z_l) - H(\Z_l|V)\}) + H(J_{v_2}|\mathcal{C}_n)+ n\epsilon''_n \nonumber \displaybreak[0]\\
    &\le n(\max_k I(\tX;V|\Y_k) + \max_kI(\tX;U|V,\Y_k) + 4\delta \nonumber \\
    &~~~- I(\tX;U|X)+ \min_kI(\Y_k;V) - \min_lI(\Z_l;V) + \epsilon''_n) \nonumber \displaybreak[0]\\
    &= n(R_L + \epsilon''_n), \label{rl-final-achie}
\end{align}
where (a) follows from \eqref{goal-test}, shown below, and the codebook $\mathcal{C}_n$ is independent of $(\tX^n,X^n,\Y^n_k,\Z^n_l)$, (b) holds because $V^n$ is a function of $(J_{v_1},J_{v_2})$, and (c) follows because $H(\Z^n_l|V^n,\mathcal{C}_n) \le n(H(\Z_l|V) + \gamma'_n)$ and $\epsilon''_n ~\triangleq~ \gamma'_n + \epsilon'_n$. For brevity, define $E~\triangleq~(J_{u_2},J_{u_3},J_{v_2})$, where the decoder can reliably estimate the index $E$ for given $(\Y^n_k,J_{u_1},J_{v_1})$. Observe that
\begin{align}
&H(\tX^n|X^n,J_{u_1},J_{v_1},\mathcal{C}_n) \nonumber \displaybreak[0]\\ 
    &= H(\tX^n|X^n,J_{u_1},J_{v_1},E,\mathcal{C}_n) + I(E;\tX^n|X^n,J_{u_1},J_{v_1},{\mathcal{C}_n}) \nonumber \displaybreak[0]\\
    &\overset{\rm (a)}\le H(\tX^n|X^n,U^n,\mathcal{C}_n)+ H(E|X^n,J_{u_1},J_{v_1},{\mathcal{C}_n}) \nonumber \displaybreak[0]\\
    &\overset{\rm (b)}\le H(\tX^n|X^n,U^n,\mathcal{C}_n) + H(E|\Y^n_k,J_{u_1},J_{v_1},{\mathcal{C}_n}) \nonumber \displaybreak[0]\\
    &\overset{\rm (c)}\le H(\tX^n|X^n,U^n,\mathcal{C}_n) + n\epsilon_n \nonumber \displaybreak[0]\\
    &\overset{\rm (d)}\le n(H(\tX|X,U) + \epsilon'_n), \label{goal-test}
\end{align}
where (a) follows since $U^n$ is a function of $(J_{u_1},J_{u_2},J_{u_3},J_{v_1},J_{v_2})$, (b) is due to the Markov chain $E-(X^n,J_{u_1},J_{v_1})-\Y^n_k$ and conditioning reduces entropy, (c) follows from Fano's inequality with $\epsilon_n\downarrow 0$ as $\delta\downarrow 0$ and $n\rightarrow \infty$, and (d) because $H(\tX^n|X^n,U^n,\mathcal{C}_n) \le n(H(\tX|X,U) + \gamma''_n)$ for jointly typical sequences (cf. \cite[eq. (16)]{remi-2014}) and $\epsilon'_n ~{\triangleq}~ \gamma''_n+\epsilon_n$.

From \eqref{error-achie}, \eqref{rs-finall}, \eqref{secrecy-rinal}, and \eqref{rl-final-achie}, there must be at least one codebook satisfying all the conditions in Definition \ref{def1}, so that the region in \eqref{inner-gs} is achievable. 

\subsection{Proof of Proposition \ref{discrete-result-outer}} \label{proof-outer}

The cardinality bounds of the auxiliary random variables can be obtained from the support lemma \cite[Lemma 3.4]{GK}.

We show the proof for the GS model via a result derived in \cite{gksc2018} in the absence of action cost. By replacing \eqref{privacy} with $I(X^n;J,\Z^n_l) \le n(R_L + \delta)$, the outer bound for a pair $(k,l)$, denoted by $\mathcal{O}_{G_{kl}}$, is \cite[Th. 3]{gksc2018}
\begin{align}
    \mathcal{O}_{G_{kl}} &~{\triangleq}\bigcup_{P_{V|U},P_{U|\tX}}\{(R_S,R_J,R_L)\in \mathbb{R}^3_+:~\nonumber \\
    R_S &\le I(\Y_k;U|V) - I(\Z_l;U|V),~
    R_J \ge I(\tX;U|{\Y_k}), \nonumber \\
    R_L &\ge I(X;U,\Y_k) - I(X;\Y_k|V)+I(X;\Z_l|V)\},
    \label{outer-gunlu-gs}
\end{align}
where $V$ and $U$ satisfy $V-U-\tX-X-(\Y_k,\Z_l)$.

In Definition \ref{def1}, the constraints on the secret-key and storage rates are the same as in \cite[Def. 6]{gksc2018}, and the resulting bounds are given in the same form as in \eqref{outer-gunlu-gs}. For the privacy-leakage rate, we expand the right-hand side of $R_L$ in \eqref{outer-gunlu-gs}~as
\begin{align}
    &I(X;U,\Y_k)-I(X;\Y_k|V)+I(X;\Z_l|V) \nonumber \\
    & = I(X;U|\Y_k) + I(\Y_k;V)- I(\Z_l;V) + I(X;\Z_l),
    \label{rl-lower}
\end{align}
where we use the Markov chain $V-X-(\Y_k,\Z_l)$. By \eqref{rl-lower}, the privacy-leakage rate is lower bounded as
\begin{align}
    &n(R_L + \delta) \ge I(X^n;J|\Z^n_l) = I(X^n;J,\Z^n_l) - I(X^n;\Z^n_l) \nonumber \displaybreak[0] \\
    &\ge n(I(X;U|\Y_k) + I(\Y_k;V) -I(\Z_l;V)). \label{goal-222}
\end{align}
Therefore, an outer bound for a given $(k,l)$ in Definition \ref{def1} is
\begin{align}
    \mathcal{O}_{G_{kl}}
    ~{\triangleq}&\bigcup_{P_{V|U},P_{U|\tX}}\{(R_S,R_J,R_L)\in \mathbb{R}^3_+:~\nonumber \\
    R_S &\le I(\Y_k;U|V) - I(\Z_l;U|V),~
    R_J \ge I(\tX;U|\Y_k), \nonumber \\
    R_L &\ge I(X;U|\Y_k) + I(\Y_k;V) -I(\Z_l;V)\},
    \label{outer-k}
\end{align}
where $V$ and $U$ satisfy $V-U-\tX-X-(\Y_k,\Z_l)$.  Hence,
\begin{align}
    \RG \subseteq &\bigcap_{k \in \mathcal{K}}\bigcap_{l \in \mathcal{L}} \mathcal{O}_{G_{kl}} .\label{outer-for-all}
\end{align}

The proof for the CS model can be derived using the same reasoning from \cite[Th. 4]{gksc2018}.

\section{Proof of Lemma \ref{lemma-111}} \label{proof-lemma111}
Denote the covariance matrix of $(X,\Y_k,\Z_l)$ as $\bSigma$, where
\begin{align}
    \bSigma = \begin{bmatrix}
        \sigma^2_X&\bSigma_{X\Y_k} & \bSigma_{X\Z_l}\\
        \bSigma_{\Y_kX}& \bSigma_{\Y_k}& \bSigma_{\Y_k\Z_l} \\
        \bSigma_{\Z_lX} & \bSigma_{\Z_l\Y_k} & \bSigma_{\Z_l}
    \end{bmatrix}. \label{xyz-cova}
\end{align}

Note that \eqref{errorp} depends on the marginal distribution of $(X,\Y_k)$, \eqref{secretk} depends on the marginal distribution of $X$, and \eqref{secrecy} and \eqref{privacy} depend on the marginal distribution of $(X,\Z_l)$. Therefore, without loss of generality, using \cite[Th. 3.5.2]{gallager-2013} and \eqref{xyz-cova}, it suffices to consider
\begin{align}
    \Y_k &= \bSigma_{\Y_kX}\sigma^{-2}_XX + \N_{\Y_k}, \label{yk-cv} \\
    \Z_l &= \bSigma_{\Z_lX}\sigma^{-2}_XX + \N_{\Z_l}, \label{zl-cv}
\end{align}
where $\N_{\Y_k} \sim \mathcal{N}(0,\bSigma_{\N_{\Y_k}})$ with $\bSigma_{\N_{\Y_k}} = \bSigma_{\Y_k}-\bSigma_{\Y_k X}\sigma^{-2}_X\bSigma_{X\Y_k}$, and $\N_{\Z_l} \sim \mathcal{N}(0,\bSigma_{\N_{\Z_l}})$ with $\bSigma_{\N_{\Z_l}} = \bSigma_{\tX}-\bSigma_{\Z_l X}\sigma^{-2}_X\bSigma_{X\Z_l}$, independent of $X$.

Since $\bSigma$ is non-singular (positive definite), the sub-matrix $\begin{bmatrix}
\sigma^2_X&\bSigma_{X\Y_k} \\
\bSigma_{\Y_kX}& \bSigma_{\Y_k}
\end{bmatrix}$ is also positive definite. This implies that the matrix $\bSigma_{\N_{\Y_k}}$ is positive definite as it is the Schur complement of
$\sigma^2_X$ in the sub-matrix.
By Cholesky decomposition, there exists an invertible matrix ${\bf C} \in \mathbb{R}^{\Omega_Y\times\Omega_Y}$ such that $\bSigma_{\N_{\Y_k}} = {\bf C}{\bf C}^\intercal$. Then, we can reformulate \eqref{yk-cv} as
\begin{align}
    \Y'_k = {\bf A}_{\Y'_k}X + \N'_{\Y_k}, \label{tx-cv}
\end{align}
where $\Y'_k = {\bf C}^{-1}\Y_k$, ${\bf A}_{\Y'_k} = {\bf C}^{-1}\bSigma_{\Y_kX}\sigma^{-2}_X$ and $\N'_{\Y_k} \sim \mathcal{N}({\bf 0},{\bf I}_{\Omega_Y})$. Similarly, the same approach can be applied to~\eqref{zl-cv}.

\section{Proof of Theorem \ref{th1-gauss}} \label{th1-proof}
This appendix consists of two parts, that is, the achievability part in Appendix \ref{appendix-3A} and the converse part in Appendix \ref{appendix-3B}.

\subsection{Achievability Proof} \label{appendix-3A}
Note that Proposition \ref{inner-bound} was proved under finite source alphabets. However, the result can be extended to Gaussian sources as well by employing a fine quantization before encoding and decoding processes, similar to \cite{rana2022}.

By choosing $V$ as a constant in Proposition \ref{inner-bound}, the following regions are achievable.
\begin{align}
R_G~\supseteq~\bigcup_{P_{U|X}}&\Big\{(R_S,R_J,R_L)\in \mathbb{R}^3_+: \nonumber \\
    R_S &\le \min_{k}I(\Y_k;U)-\max_{l}I(\Z_l;U),\nonumber \\
    R_J &\ge I(X;U)-\min_kI(\Y_k;U), \nonumber \\
    R_L &\ge I(X;U)-\min_{l}I(\Y_k;U)\Big\},
    \label{lemma1-gs-gauss} \displaybreak[0] \\
R_C~\supseteq~\bigcup_{P_{U|X}}&\Big\{(R_S,R_J,R_L)\in \mathbb{R}^3_+: \nonumber \\
    R_S &\le \min_{k}I(\Y_k;U)-\max_{l}I(\Z_l;U), \nonumber \\
    R_J &\ge I(X;U)-\max_lI(\Z_l;U),\nonumber \\
    R_L &\ge I(X;U)-\min_{l}I(\Y_k;U)\Big\},
    \label{lemma1-cs-gauss}
\end{align}
where auxiliary random variable $U$ satisfies the Markov chain $U-X-(\Y_k,\Z_l)$ for all $k \in \mathcal{K}$ and $l \in \mathcal{L}$.

\begin{Lemma}[Weinstein–Aronszajn Identity {\cite[Appx. B]{Pozrikidis2014}}] \label{wai-lemma} For any $a \in \mathbb{R}_+$ and matrix $\H \in \mathbb{R}^{\Omega\times1}$, we have
\begin{align}
    {\det}(\H a\H^\intercal + {\bf I}_{\Omega}) = a\H^\intercal\H + 1,
\end{align}
where ${\det}(\cdot)$ denotes the determinant of a matrix.
\end{Lemma}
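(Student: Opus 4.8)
The plan is to recognize Lemma~\ref{wai-lemma} as the rank-one special case of the matrix determinant lemma (equivalently, a specialization of Sylvester's determinant identity / the Weinstein--Aronszajn identity already cited in the statement), and to give a short self-contained proof via a block-matrix computation. First I would dispose of the degenerate case $\H=\mathbf{0}$, where both sides equal $1$; otherwise assume $\H\neq\mathbf{0}$. Since $a$ is a scalar, note $\H a\H^\intercal = a\H\H^\intercal$.

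The main step is to introduce the $(\Omega+1)\times(\Omega+1)$ block matrix
\[
N \;\triangleq\; \begin{bmatrix} \mathbf{I}_\Omega & -a\H \\ \H^\intercal & 1 \end{bmatrix}
\]
and evaluate $\det N$ in two ways using the Schur-complement determinant formula $\det\!\begin{bmatrix} A & B \\ C & D\end{bmatrix} = \det(A)\det(D-CA^{-1}B)$ when $A$ is invertible, and $=\det(D)\det(A-BD^{-1}C)$ when $D$ is invertible. Expanding about the invertible top-left block $\mathbf{I}_\Omega$ gives $\det N = \det(\mathbf{I}_\Omega)\cdot\bigl(1-\H^\intercal\mathbf{I}_\Omega^{-1}(-a\H)\bigr) = a\H^\intercal\H+1$, where the Schur complement is $1\times1$, so its determinant is just that scalar. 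Expanding instead about the invertible bottom-right block (the scalar $1$) gives $\det N = 1\cdot\det\bigl(\mathbf{I}_\Omega-(-a\H)(1)^{-1}\H^\intercal\bigr) = \det(\H a\H^\intercal+\mathbf{I}_\Omega)$. Equating the two expressions yields the claim.

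An equally short alternative would be a spectral argument: the rank-one matrix $a\H\H^\intercal$ has eigenvalue $a\H^\intercal\H$ on the line spanned by $\H$ (since $(a\H\H^\intercal)\H=(a\H^\intercal\H)\H$) and eigenvalue $0$ on the $(\Omega-1)$-dimensional orthogonal complement of $\H$; hence $\H a\H^\intercal+\mathbf{I}_\Omega$ has eigenvalues $a\H^\intercal\H+1$ (once) and $1$ (with multiplicity $\Omega-1$), and its determinant is the product of these.

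There is essentially no hard step here --- the identity is classical, and the only things to watch are bookkeeping: treating $\H^\intercal\H$ correctly as a scalar (so a $1\times1$ determinant is the entry itself), verifying invertibility of the blocks used in the Schur-complement expansions, and handling the trivial case $\H=\mathbf{0}$ separately. Since the statement already cites \cite[Appx.~B]{Pozrikidis2014}, one could alternatively just invoke the general Weinstein--Aronszajn identity and specialize it; the block-matrix proof above is included for completeness.
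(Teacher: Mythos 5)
Your proof is correct. Note that the paper does not actually prove Lemma~\ref{wai-lemma} at all --- it simply cites \cite[Appx.~B]{Pozrikidis2014} and uses the identity as a known fact --- so there is no in-paper argument to compare against. Your Schur-complement computation on the block matrix $\begin{bmatrix} \mathbf{I}_\Omega & -a\H \\ \H^\intercal & 1 \end{bmatrix}$ is the standard self-contained derivation of the rank-one matrix determinant lemma, and both expansions are carried out correctly (the top-left block $\mathbf{I}_\Omega$ and the bottom-right scalar $1$ are always invertible, so in fact even the case split on $\H=\mathbf{0}$ is unnecessary for that route, though it does no harm; the spectral alternative genuinely needs it, and you supply it). The only substantive thing the lemma requires --- recognizing that $\H^\intercal\H$ is a scalar because $\H$ is a column vector, so the $1\times 1$ Schur complement's determinant is just that entry --- you handle explicitly. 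Either of your two arguments would serve as a valid replacement for the paper's bare citation.
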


\begin{Lemma} \label{lemma-fx}
For given $\alpha \in (0,1]$, the function
\begin{align}
    f(\H^\intercal\H) = \log\left(\frac{\sigma^2_X\H^\intercal\H + 1}{\alpha\sigma^2_X\H^\intercal\H + 1}\right)
\end{align}
is monotonically increasing with respect to $\H^\intercal\H$.
\end{Lemma}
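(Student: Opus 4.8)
The plan is to reduce the claim to an elementary one-variable calculus statement. Write $t \triangleq \H^\intercal\H \in \mathbb{R}_+$ and $c \triangleq \sigma^2_X > 0$, so that the quantity of interest becomes $f(t) = \log\frac{ct+1}{\alpha ct+1}$, and the goal is to show $f$ is non-decreasing on $[0,\infty)$.

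First I would differentiate $f$ with respect to $t$. Applying $\frac{d}{dt}\log g(t) = g'(t)/g(t)$ to each factor gives $f'(t) = \frac{c}{ct+1} - \frac{\alpha c}{\alpha ct+1}$. Next I would place the two terms over the common denominator $(ct+1)(\alpha ct+1)$; the numerator simplifies to $c(\alpha ct+1) - \alpha c(ct+1) = c(1-\alpha)$, so that
\[
f'(t) = \frac{c(1-\alpha)}{(ct+1)(\alpha ct+1)}.
\]
Finally I would observe that $c = \sigma^2_X > 0$, that $\alpha \in (0,1]$ implies $1-\alpha \ge 0$, and that $ct+1>0$ and $\alpha ct+1>0$ for every $t \ge 0$; hence $f'(t) \ge 0$ on $[0,\infty)$, which establishes that $f$ is monotonically increasing in $t = \H^\intercal\H$ (strictly increasing when $\alpha<1$, and identically zero when $\alpha=1$).

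There is essentially no obstacle in this argument: the only points requiring a moment of care are confirming positivity of the denominator over the entire domain $\mathbb{R}_+$ and noting that the monotonicity is weak in the degenerate case $\alpha = 1$. If one prefers to avoid differentiation, an alternative route is to rewrite $\frac{ct+1}{\alpha ct+1} = \frac{1}{\alpha}\bigl(1 - \frac{(1-\alpha)/\alpha}{\,ct + 1/\alpha\,}\bigr)$ and note that $1/(ct+1/\alpha)$ is decreasing in $t$ while the coefficient $(1-\alpha)/\alpha \ge 0$, so the bracketed expression — and therefore its logarithm — is non-decreasing in $t$.
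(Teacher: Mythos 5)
Your proof is correct: the derivative computation $f'(t) = c(1-\alpha)/\bigl((ct+1)(\alpha ct+1)\bigr) \ge 0$ is exactly the elementary argument needed, and your remark that monotonicity is only weak at $\alpha=1$ is an accurate refinement of the statement. The paper states Lemma~\ref{lemma-fx} without any proof, so your calculus argument (or the equivalent algebraic rewriting you sketch) simply supplies the routine verification the authors omitted.
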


Lemma \ref{wai-lemma} is applied in the calculation of mutual information with vector random variables for given $k$ and $l$, and the role of Lemma \ref{lemma-fx} is to find the minimum and maximum values of the mutual information among all possible $k \in \mathcal{K}$ and $l \in \mathcal{L}$.

For $0< \alpha \le 1$, consider
\begin{align}
   X ~{\triangleq}~ U + \Theta, \label{x-u-phi}
\end{align}
where $U \sim \mathcal{N}(0,(1-\alpha)\sigma^2_X))$ and $\Theta \sim \mathcal{N}(0,\alpha\sigma^2_X)$. This relation implies that
\begin{align}
    I(X;U) = \frac{1}{2}\log\left(\frac{1}{\alpha}\right). \label{itxuuu}
\end{align}
From \eqref{channel-eq1} and \eqref{x-u-phi}, it follows that
\begin{align}
    \Y_k &= \H_k U + \H_k\Theta + \N_{\Y_k}, \\
    \Z_l &= \tH_l U + \tH_l\Theta + \N_{\Z_l}.
\end{align}
Using Lemma \ref{wai-lemma}, we have
\begin{align}
    I(\Y_k;U) &=\frac{1}{2}\log \frac{\sigma^2_X\H_k^\intercal\H_k + 1}{\alpha\sigma^2_X\H_k^\intercal\H_k + 1}, \displaybreak[0] \\
    I(\Z_l;U) &= \frac{1}{2}\log \frac{\sigma^2_X\tH^\intercal_l\tH_l + 1}{\alpha\sigma^2_X\tH^\intercal_l\tH_l + 1}
\end{align}
for a fixed pair $(k,l)$, and invoking Lemma \ref{lemma-fx} gives
\begin{align}
    \min_k I(\Y_k;U) &= \frac{1}{2}\log \left(\frac{\sigma^2_X\H^\intercal_{k^*}\H_{k^*} + 1}{\alpha\sigma^2_X\H^\intercal_{k^*}\H_{k^*} + 1}\right), \label{min-iyu-achie} \displaybreak[0]\\
    \max_l I(\Z_l;U) &= \frac{1}{2}\log \left(\frac{\sigma^2_X\tH^\intercal_{l^*}\tH_{l^*} + 1}{\alpha\sigma^2_X\tH^\intercal_{l^*}\tH_{l^*} + 1}\right). \label{max-izu-achie}
\end{align}
Finally, substituting \eqref{itxuuu}, \eqref{min-iyu-achie}, \eqref{max-izu-achie} into \eqref{lemma1-gs-gauss} and \eqref{lemma1-cs-gauss}, gives \eqref{gs-cp-gauss} and \eqref{cs-cp-gauss}.

\subsection{Converse Proof} \label{appendix-3B}

We will need the following lemmas. These lemmas convert vector observations in \eqref{channel-eq1} into scalar Gaussian random variables using sufficient statistics \cite[Sect. 2.9]{cover}. This transformation plays an important role in deriving the outer bound of Gaussian sources.
\begin{Lemma}[\hspace{-0.1mm}{\cite[Lemma 3.1]{parada-isit2005}}] \label{scalar-lemma-CB} Consider a channel with input $W$ and output $\Tilde{\bf W}$, namely,
$
\Tilde{\bf W}~{\triangleq}~{\bf A}W + {\bf N}_{\Tilde{\bf W}},
$
where $\A$ is a matrix and ${\bf N}_{\Tilde{\bf W}} \sim \mathcal{N}({\bf 0},\bSigma_{\Tilde{\bf W}})$. A sufficient statistic to correctly determine $W$ from $\Tilde{\bf W}$ is the following scalar
\begin{align}
    \bar{W}~{\triangleq}~{\bf A}^\intercal{\bf \Sigma}^{-1}_{\Tilde{\bf W}}\Tilde{\bf W}.
\end{align}
\end{Lemma}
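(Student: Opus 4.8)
The statement is a standard fact, and the cleanest route is to verify the Fisher--Neyman factorization criterion, exploiting that $\Tilde{\bf W}\mid W$ is Gaussian with an explicitly known form. First I would recall the two equivalent characterizations of sufficiency that will be used: a deterministic statistic $T(\Tilde{\bf W})$ is sufficient for $W$ iff the conditional law factors as $p_{\Tilde{\bf W}\mid W}(\Tilde{\bf w}\mid w)=g\big(T(\Tilde{\bf w}),w\big)\,h(\Tilde{\bf w})$, equivalently iff the Markov chain $W-T(\Tilde{\bf W})-\Tilde{\bf W}$ holds; combined with the data-processing inequality (and the fact that $T(\Tilde{\bf W})$ is a function of $\Tilde{\bf W}$), the latter yields $I(W;\Tilde{\bf W})=I(W;T(\Tilde{\bf W}))$, which is the property actually invoked in Appendix~\ref{appendix-3B} when the vector outputs $\Y_k,\Z_l$ are replaced by scalars. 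Throughout I take $\bSigma_{\Tilde{\bf W}}\succ 0$ (non-degenerate noise), which is implicit since $\bSigma_{\Tilde{\bf W}}^{-1}$ appears in the definition of $\bar W$ and holds in the paper's application where the noise covariances are identity.

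The core computation is short. Writing the Gaussian likelihood, $p_{\Tilde{\bf W}\mid W}(\Tilde{\bf w}\mid w)=(2\pi)^{-\Omega/2}|\bSigma_{\Tilde{\bf W}}|^{-1/2}\exp\!\big(-\tfrac12(\Tilde{\bf w}-{\bf A}w)^\intercal\bSigma_{\Tilde{\bf W}}^{-1}(\Tilde{\bf w}-{\bf A}w)\big)$, I expand the quadratic form into three pieces: the purely $\Tilde{\bf w}$-dependent term $-\tfrac12\,\Tilde{\bf w}^\intercal\bSigma_{\Tilde{\bf W}}^{-1}\Tilde{\bf w}$, the purely $w$-dependent term $-\tfrac12\,w^2\,{\bf A}^\intercal\bSigma_{\Tilde{\bf W}}^{-1}{\bf A}$, and the single cross term $w\,\big({\bf A}^\intercal\bSigma_{\Tilde{\bf W}}^{-1}\Tilde{\bf w}\big)=w\,\bar w$ (here $W$ is scalar and ${\bf A}$ a column, so $\bar W$ is genuinely one-dimensional, which is the whole point). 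Collecting, $h(\Tilde{\bf w})\triangleq(2\pi)^{-\Omega/2}|\bSigma_{\Tilde{\bf W}}|^{-1/2}\exp\!\big(-\tfrac12\Tilde{\bf w}^\intercal\bSigma_{\Tilde{\bf W}}^{-1}\Tilde{\bf w}\big)$ depends only on $\Tilde{\bf w}$, and $g(\bar w,w)\triangleq\exp\!\big(w\bar w-\tfrac12 w^2{\bf A}^\intercal\bSigma_{\Tilde{\bf W}}^{-1}{\bf A}\big)$ depends on $\Tilde{\bf w}$ only through $\bar w$; the factorization $p_{\Tilde{\bf W}\mid W}=g\,h$ establishes sufficiency.

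As a cross-check that also exhibits the Markov chain directly, I would give the explicit Gaussian decomposition: with $c\triangleq{\bf A}^\intercal\bSigma_{\Tilde{\bf W}}^{-1}{\bf A}>0$, write $\Tilde{\bf W}=c^{-1}{\bf A}\,\bar W+{\bf E}$ where ${\bf E}\triangleq({\bf I}-c^{-1}{\bf A}{\bf A}^\intercal\bSigma_{\Tilde{\bf W}}^{-1})\,{\bf N}_{\Tilde{\bf W}}$; a one-line covariance computation gives ${\rm Cov}({\bf E},\bar W\mid W)={\bf 0}$, so by joint Gaussianity ${\bf E}\perp\!\!\!\perp\bar W$ given $W$, and since ${\bf E}$ is a fixed linear image of ${\bf N}_{\Tilde{\bf W}}$ its law does not depend on $W$; hence the conditional distribution of $\Tilde{\bf W}$ given $\bar W$ is free of $W$, i.e.\ $W-\bar W-\Tilde{\bf W}$. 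I do not anticipate a genuine obstacle: the only points that need care are the invertibility of $\bSigma_{\Tilde{\bf W}}$ and the scalar-versus-vector bookkeeping (so that $\bar W$ really is the claimed scalar), while the quadratic-form expansion and the covariance check are routine algebra.
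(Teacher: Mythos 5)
Your proof is correct. The paper itself offers no proof of this lemma---it is imported directly from \cite[Lemma 3.1]{parada-isit2005}---so there is no in-paper argument to compare against; your Fisher--Neyman factorization computation (the Gaussian log-likelihood depends on $\Tilde{\bf w}$ only through the scalar ${\bf A}^\intercal\bSigma_{\Tilde{\bf W}}^{-1}\Tilde{\bf w}$ once the purely $\Tilde{\bf w}$-dependent quadratic and normalization are factored out) is the standard derivation of this fact, and your covariance cross-check establishing the Markov chain $W-\bar{W}-\Tilde{\bf W}$ is also correct and is precisely the property invoked in Appendix~\ref{appendix-3B} to replace $(\Y_k,\Z_l)$ by $(\bar{Y}_k,\bar{Z}_l)$. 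The only caveat, immaterial here, is that the second argument requires $c={\bf A}^\intercal\bSigma_{\Tilde{\bf W}}^{-1}{\bf A}>0$, i.e.\ ${\bf A}\neq{\bf 0}$ (the factorization route handles that degenerate case automatically), and the invertibility of $\bSigma_{\Tilde{\bf W}}$ that you flag indeed holds in the paper's application, where the noise covariances are identity matrices by Lemma~\ref{lemma-111}.
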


\begin{Lemma} \label{bar-relation} The vector equations in \eqref{channel-eq1} can be rewritten as 
\begin{align}
    \bar{Y}_k = \nu_{\bar{Y}_k}X + N_{\bar{Y}_k}, \bar{Z}_l = \nu_{\bar{Z}_l}X + N_{\bar{Z}_l}, \label{scalar-gaussian}
\end{align}
where $\nu_{\bar{Y}_k} \triangleq \H^\intercal_k\H_k$, $\nu_{\bar{Z}_l} \triangleq \tH^\intercal_l\tH_l$, $N_{\bar{Y}_k} \sim \mathcal{N}(0,\nu_{\bar{Y}_k})$, and $N_{\bar{Z}_l} \sim \mathcal{N}(0,\nu_{\bar{Z}_l})$.
\end{Lemma}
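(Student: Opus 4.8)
The plan is to apply Lemma \ref{scalar-lemma-CB} to each of the two vector channels in \eqref{channel-eq1} separately, treating $X$ as the scalar input $W$. For the channel to the decoder, I identify $\Tilde{\mathbf{W}} = \mathbf{Y}_k$, $\mathbf{A} = \H_k$, and ${\bf N}_{\Tilde{\mathbf{W}}} = \mathbf{N}_{\Y_k} \sim \mathcal{N}({\bf 0}, \mathbf{I}_{\Omega_Y})$, so the noise covariance is the identity and its inverse is again the identity. Lemma \ref{scalar-lemma-CB} then gives the sufficient statistic $\bar{Y}_k \triangleq \H_k^\intercal \mathbf{I}_{\Omega_Y}^{-1} \mathbf{Y}_k = \H_k^\intercal \mathbf{Y}_k$. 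Substituting the channel equation $\mathbf{Y}_k = \H_k X + \mathbf{N}_{\Y_k}$ yields $\bar{Y}_k = \H_k^\intercal \H_k \, X + \H_k^\intercal \mathbf{N}_{\Y_k}$. Setting $\nu_{\bar{Y}_k} \triangleq \H_k^\intercal \H_k$ and $N_{\bar{Y}_k} \triangleq \H_k^\intercal \mathbf{N}_{\Y_k}$, the first equation in \eqref{scalar-gaussian} follows, and since $\mathbf{N}_{\Y_k} \sim \mathcal{N}({\bf 0},\mathbf{I}_{\Omega_Y})$, the scalar $N_{\bar{Y}_k}$ is zero-mean Gaussian with variance $\H_k^\intercal \mathbf{I}_{\Omega_Y} \H_k = \H_k^\intercal \H_k = \nu_{\bar{Y}_k}$, and it remains independent of $X$. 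The identical argument applied to $\mathbf{Z}_l = \tH_l X + \mathbf{N}_{\Z_l}$ with $\mathbf{N}_{\Z_l} \sim \mathcal{N}({\bf 0},\mathbf{I}_{\Omega_Z})$ produces $\bar{Z}_l \triangleq \tH_l^\intercal \mathbf{Z}_l = \tH_l^\intercal \tH_l \, X + \tH_l^\intercal \mathbf{N}_{\Z_l}$, giving $\nu_{\bar{Z}_l} = \tH_l^\intercal \tH_l$ and $N_{\bar{Z}_l} \sim \mathcal{N}(0,\nu_{\bar{Z}_l})$ independent of $X$.

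The key conceptual point — which is what makes this a genuine ``rewriting'' rather than a mere definition — is that because $\bar{Y}_k$ is a sufficient statistic of $\mathbf{Y}_k$ for $X$, we have the Markov chain $X - \bar{Y}_k - \mathbf{Y}_k$, hence $I(X; \mathbf{Y}_k) = I(X; \bar{Y}_k)$ and, more importantly, all the operational quantities in Definition~\ref{def1} that depend only on the joint law of $(X, \mathbf{Y}_k)$ (the reliability constraint) or $(X, \mathbf{Z}_l)$ (the secrecy and privacy constraints) are unaffected when the vector observations are replaced by their scalar sufficient statistics. I would state this explicitly, since it is precisely what licenses the subsequent converse to work with the scalar model \eqref{scalar-gaussian} in place of the vector model \eqref{channel-eq1}.

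I do not anticipate a serious obstacle here: the lemma is essentially a direct specialization of Lemma \ref{scalar-lemma-CB} combined with the observation that the noise covariances in \eqref{channel-eq1} have already been normalized to identity matrices by Lemma \ref{lemma-111}. The only mild subtlety worth flagging is that one should verify the joint sufficiency/independence structure is compatible across $k$ and $l$ — i.e., that conditioned on $X$ the scalar noises $N_{\bar{Y}_k}$ and $N_{\bar{Z}_l}$ inherit whatever correlation $\mathbf{N}_{\Y_k}$ and $\mathbf{N}_{\Z_l}$ had — but since the converse arguments of Theorem \ref{th1-gauss} only invoke the marginal pairs $(X,\bar{Y}_{k^*})$ and $(X,\bar{Z}_{l^*})$ and the degradedness ordering $\nu_{\bar{Y}_{k^*}} \ge \nu_{\bar{Z}_{l^*}}$ implied by \eqref{kl-star}, this does not cause difficulty.
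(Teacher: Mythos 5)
Your proposal is correct and follows essentially the same route as the paper: apply Lemma \ref{scalar-lemma-CB} with the identity noise covariances from Lemma \ref{lemma-111} to get $\bar{Y}_k = \H_k^\intercal\Y_k$ and $\bar{Z}_l = \tH_l^\intercal\Z_l$, substitute the channel equations, and read off the noise variances $\H_k^\intercal\H_k$ and $\tH_l^\intercal\tH_l$. Your additional remarks on sufficiency and the Markov chain $X-\bar{Y}_k-\Y_k$ are not part of the paper's proof of this lemma but are exactly the points the paper defers to the proof of Lemma \ref{lemma7-outer}.
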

\begin{proof}
Applying Lemma \ref{scalar-lemma-CB} to our settings in \eqref{channel-eq1}, we have
\begin{align}
\bar{Y}_k~=~\H^\intercal_k{\bf I}^{-1}_{\Omega_Y}\Y_k,~ \bar{Z}_l~=~\tH^\intercal_l{\bf I}^{-1}_{\Omega_Z}\Z_l. \label{barxyz}
\end{align}
Now substituting \eqref{channel-eq1} into \eqref{barxyz}, we have
\begin{align}
    \bar{Y}_k &= \H^\intercal_k(\H_k X + \mathbf{N}_{\Y_k}) = \nu_{\bar{Y}_k}X + N_{\bar{Y}_k}, \label{bary-gauss} \\
    \bar{Z}_l & = \tH^\intercal_l(\tH_l X + \mathbf{N}_{\Z_l}) = \nu_{\bar{Z}_l}X + N_{\bar{Z}_l}, \label{barz-gauss}
\end{align}
where we denote $\nu_{\bar{Y}_k}~{\triangleq}~\H^\intercal_k\H_k$, $\nu_{\bar{Z}_l} \triangleq \tH^\intercal_l\tH_l$, $N_{\bar{Y}_k}~{\triangleq}~\H_k\mathbf{N}_{\Y_k}$, and $N_{\bar{Z}_l}~{\triangleq}~\tH_l\mathbf{N}_{\Z_l}$. 
Note that $N_{\bar{Y}_k}$ and $N_{\bar{Z}_l}$ are Gaussian random variables, and their 
variances are ${\rm Var}[\H^\intercal_k\mathbf{N}_{\Y_k}] = \H^\intercal_k\H_k$, and ${\rm Var}[\tH^\intercal_l\mathbf{N}_{\Z_l}] = \tH^\intercal_l\tH_l$.
\end{proof}

As $(X,\bar{Y}_k,\bar{Z}_l)$ are scalar Gaussian random variables, when the squared value of the correlation coefficient of $(X,\bar{Y}_{k^*})$ is greater than that of $(X,\bar{Z}_{l^*})$, i.e., $\H^\intercal_{k^*}\H_{k^*} \ge \tH^\intercal_{l^*}\tH_{l^*}$, implying that $\H^\intercal_{k}\H_{k} \ge \tH^\intercal_{l}\tH_{l}$ for any pair $(k,l)$, there exist Gaussian random variables $(X',\bar{Y}'_k,\bar{Z}'_l)$ such that $X'-\bar{Y}'_k-\bar{Z}'_l$ is satisfied, and the marginal distributions of $(X,\bar{Y}_k)$ and $(X',\bar{Y}'_k)$ and that of $(X,\bar{Z}_l)$ and $(X',\bar{Z}'_l)$ coincide \cite[Lemma 6]{wataoha2010}. In the subsequent discussions, we denote $(X',\bar{Y}'_k,\bar{Z}'_l)$ as $(X,\bar{Y}_k,\bar{Z}_l)$ for brevity. This  property is used in the derivation of the following~lemma.
\begin{Lemma}[Outer bounds] \label{lemma7-outer}
If $\H^\intercal_{k^*}\H_{k^*} \ge \tH^\intercal_{l^*}\tH_{l^*}$, the outer bounds of the GS and CS models are provided as
\begin{align}
    \RG~&\subseteq~\bigcap_{k \in \mathcal{K}}\bigcap_{l \in \mathcal{L}}\bar{\mathcal{O}}_{G_{kl}},~~~
    \RC\subseteq~\bigcap_{k \in \mathcal{K}}\bigcap_{l \in \mathcal{L}}\bar{\mathcal{O}}_{C_{kl}},
    \label{discrete-outer-cs-gauss}
\end{align}
where $\bar{\mathcal{O}}_{G_{kl}}$ and $\bar{\mathcal{O}}_{C_{kl}}$ are outer bounds of the GS and CS models for a given pair $(k,l)$ and are defined as
\begin{align}
\bar{\mathcal{O}}_{G_{kl}} \triangleq \bigcup_{P_{U|X}}&\Big\{(R_S,R_J,R_L)\in \mathbb{R}^3_+:~\nonumber \\
    R_S &\le I(\bar{Y}_k;U)-I(\bar{Z}_l;U), \nonumber \\
    R_J &\ge I(X;U)-I(\bar{Y}_k;U), \nonumber \\
    R_L &\ge I(X;U)-I(\bar{Y}_k;U)\Big\}, \label{o-g-kl} \displaybreak[0]\\
\bar{\mathcal{O}}_{C_{kl}} \triangleq \bigcup_{P_{U|X}}&\Big\{(R_S,R_J,R_L)\in \mathbb{R}^3_+:~\nonumber \\
    R_S &\le I(\bar{Y}_k;U)-I(\bar{Z}_l;U), \nonumber \\
    R_J &\ge I(X;U)-I(\bar{Z}_l;U), \nonumber \\
    R_L &\ge I(X;U)-I(\bar{Y}_k;U)\Big\}, \label{o-c-kl}
\end{align}
and $U$ satisfies the Markov chain $U-X-\bar{Y}_k-\bar{Z}_l$. If $\H^\intercal_{k^*}\H_{k^*} < \tH^\intercal_{l^*}\tH_{l^*}$, $
\RG = \RC = \{(R_S,R_J,R_L):
    R_S = 0, R_J \ge 0, R_L \ge 0\}.
$
\end{Lemma}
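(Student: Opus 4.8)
The plan is to prove Lemma~\ref{lemma7-outer} by first establishing a single-letter outer bound for each fixed channel-state pair $(k,l)$ using the scalarized channels from Lemma~\ref{bar-relation}, and then intersecting over all $(k,l)$. For a fixed $(k,l)$, I would start from the operational definitions in Definition~\ref{def1} (or Definition~\ref{def2} for the CS model) applied to the scalar outputs $\bar{Y}_k^n$, $\bar{Z}_l^n$. The key point is that, by the sufficient-statistic property in Lemma~\ref{scalar-lemma-CB}, the reliability constraint \eqref{errorp} evaluated with the vector output $\Y_k^n$ is equivalent to the one evaluated with $\bar{Y}_k^n$, and similarly the secrecy/privacy constraints \eqref{secrecy}--\eqref{privacy} are preserved because $\bar{Z}_l^n$ is a sufficient statistic for $X^n$ from $\Z_l^n$; hence a code for the original vector problem is, without loss, a code for the scalar problem, and any outer bound for the scalar problem is an outer bound for the original one. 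I would then define the single-letter auxiliary variable in the standard way, $U_t \triangleq (J, \bar{Y}_{k,1}^{t-1}, \bar{Z}_{l,t+1}^n)$ (or the analogous Csisz\'ar-sum-based choice), verify the Markov chain $U_t - X_t - \bar{Y}_{k,t} - \bar{Z}_{l,t}$ using the degradedness established just before the lemma (the physically degraded representation $X-\bar{Y}_k-\bar{Z}_l$ from \cite[Lemma~6]{wataoha2010}, valid because $\H_{k^*}^\intercal\H_{k^*}\ge\tH_{l^*}^\intercal\tH_{l^*}$ forces $\H_k^\intercal\H_k\ge\tH_l^\intercal\tH_l$ for every pair), and carry out the chain of entropy manipulations exactly as in \cite[Th.~3~and~4]{gksc2018}, which is legitimate since for $\tX=X$ and degraded scalar Gaussians the per-state derivation reduces to theirs. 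This yields the regions $\bar{\mathcal O}_{G_{kl}}$ and $\bar{\mathcal O}_{C_{kl}}$ as written, after which \eqref{discrete-outer-cs-gauss} follows by taking the intersection over $(k,l)$, just as in the proof of Proposition~\ref{discrete-result-outer}.

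For the degenerate case $\H_{k^*}^\intercal\H_{k^*} < \tH_{l^*}^\intercal\tH_{l^*}$, I would argue that $R_S=0$ is forced. Fix the worst pair $(k^*,l^*)$: then $\H_{k^*}^\intercal\H_{k^*} < \tH_{l^*}^\intercal\tH_{l^*}$ means the scalar channel $X\to\bar{Y}_{k^*}$ is a degraded version of $X\to\bar{Z}_{l^*}$ (again invoking \cite[Lemma~6]{wataoha2010} with the roles reversed, or directly comparing correlation coefficients), so there is a Markov chain $X-\bar{Z}_{l^*}-\bar{Y}_{k^*}$. Combining this with the secrecy constraint \eqref{secrecy} and the reliability constraint \eqref{errorp}, a standard argument (Fano plus the data-processing inequality) gives $nR_S \le H(S) \le I(S;\bar{Y}_{k^*}^n, J) + n\epsilon_n \le I(S;\bar{Z}_{l^*}^n, J) + n\epsilon_n \le n\delta + n\epsilon_n$, forcing $R_S \to 0$; the constraints $R_J\ge 0$ and $R_L\ge 0$ are vacuous, and achievability of $R_S=0$ is trivial (send nothing). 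The same reasoning applies verbatim to the CS model.

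The main obstacle I anticipate is justifying, cleanly, that the per-state converse of \cite[Th.~3~and~4]{gksc2018} transfers to the scalar Gaussian setting without hidden gaps — in particular, checking that the Markov chain $U_t - X_t - \bar{Y}_{k,t} - \bar{Z}_{l,t}$ genuinely holds for the chosen auxiliary (this is where the degradedness reduction $X-\bar{Y}_k-\bar{Z}_l$ is essential and must be invoked at the right moment), and that replacing $\Y_k^n$ by $\bar{Y}_k^n$ does not enlarge the set of achievable tuples. A secondary subtlety is the privacy-leakage term: as in Proposition~\ref{discrete-result-outer}, the definition \eqref{privacy} differs from \cite[eq.~(5)]{gksc2018} by the additive constant $I(X;\bar{Z}_l)$, so I would repeat the expansion $I(X^n;J|\bar{Z}_l^n) = I(X^n;J,\bar{Z}_l^n) - I(X^n;\bar{Z}_l^n)$ used in \eqref{goal-222} to land on the form $R_L \ge I(X;U) - I(\bar{Y}_k;U)$ appearing in \eqref{o-g-kl} and \eqref{o-c-kl} (here using $\tX=X$ and the degraded chain so that $I(X;U|\bar{Y}_k)+I(\bar{Y}_k;V)-I(\bar{Z}_l;V)$ with $V$ constant collapses to $I(X;U)-I(\bar{Y}_k;U)$). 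Everything else — the cardinality bound on $U$ via the support lemma, the intersection over states — is routine and parallels the discrete proof.
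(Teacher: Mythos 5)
Your proposal is correct and follows essentially the same route as the paper's proof of Lemma \ref{lemma7-outer}: scalarization via the sufficient statistics of Lemma \ref{bar-relation}, the degraded chain $X-\bar{Y}_k-\bar{Z}_l$ from \cite[Lemma 6]{wataoha2010}, a per-state Csisz\'ar-sum converse in the style of \cite{gksc2018} with the privacy-leakage definition adjusted as in \eqref{goal-222} and the $V$-dependent terms dropped by degradedness, intersection over $(k,l)$, and a Fano-plus-data-processing argument for the degenerate case. The only slip is that your auxiliary should include the key, i.e., $U_t=(J,S,\bar{Y}^n_{k,t+1},\bar{Z}^{t-1}_l)$ as in the paper rather than $(J,\bar{Y}^{t-1}_{k,1},\bar{Z}^n_{l,t+1})$, since without $S$ the secrecy-plus-Fano bound on $H(S)$ does not single-letterize into $I(\bar{Y}_k;U)-I(\bar{Z}_l;U)$; with that standard correction (which your ``analogous Csisz\'ar-sum-based choice'' hedge already anticipates) the argument matches the paper's.
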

\begin{proof}
The proof is provided in Appendix~\ref{proof-lemma7} and follows standard converse proof techniques, where Fano's inequality and the introduction of auxiliary random variable are used. The key idea is to exploit the relationship in~\eqref{barxyz}, which shows that $(\bar{Y}_k, \bar{Z}_l)$ and $(\Y_k, \Z_l)$ are mutually deterministic. This allows the vector random variables $(\Y_k, \Z_l)$ to be removed from the rate constraints during the analysis.
\end{proof}

Next, we utilize the single-letter expressions in Lemma \ref{lemma7-outer} to derive the parametric forms for Gaussian sources. We begin with the proof of \eqref{o-g-kl}. Each rate constraint in \eqref{o-g-kl} can be expanded as
\vspace{-2mm}
\begin{align}
R_S &\le I(\bar{Y}_k;U)-I(\bar{Z}_l;U)
    \overset{\rm (a)}= \frac{1}{2}\log\frac{\H^\intercal_k\H_k(\sigma^2_X\H^\intercal_k\H_k + 1)}{\tH^\intercal_l\tH_l(\sigma^2_X\H^\intercal_l\H_l + 1)} \nonumber \\
    &~~~~+ h(\bar{Z}_l|U)- h(\bar{Y}_k|U), \label{label-rs} \\
R_J &\ge I(X;U)-I(\bar{Y}_k;U)
    \overset{\rm (b)}= \frac{1}{2}\log\frac{\sigma^2_X}{\H^\intercal_k\H_k(\sigma^2_X\H^\intercal_k\H_k + 1)} \nonumber \\ &~~~~+ h(\bar{Y}_k|U) - h(X|U), \label{label-rj} \\
R_L &\ge I(X;U)-I(\bar{Y}_k;U)
    \overset{\rm (c)}= \frac{1}{2}\log\frac{\sigma^2_X}{\H^\intercal_k\H_k(\sigma^2_X\H^\intercal_k\H_k + 1)} \nonumber \\
    &~~~~+ h(\bar{Y}_k|U) - h(X|U), \label{label-rl}
\end{align}
where (a), (b), and (c) follow from \eqref{scalar-gaussian}.

From Lemma \ref{bar-relation}, we have
\vspace{-2mm}
\begin{align}
    \frac{1}{2}&\log\frac{\sigma^2_X+1/\nu_{\bar{Z}_l}}{\sigma^2_X+1/\nu_{\bar{Y}_k}}
    = h(\frac{1}{\nu_{\bar{Z}_l}}\bar{Z}_l) - h(\frac{1}{\nu_{\bar{Y}_k}}\bar{Y}_k) \nonumber \\
    &\overset{\rm (a)}\le h(\frac{1}{\nu_{\bar{Z}_l}}\bar{Z}_l|U) - h(\frac{1}{\nu_{\bar{Y}_k}}\bar{Y}_k|U) \nonumber \\
    &\overset{\rm (b)}\le h(\frac{1}{\nu_{\bar{Z}_l}}\bar{Z}_l|X) - h(\frac{1}{\nu_{\bar{Y}_k}}\bar{Y}_k|X) = \frac{1}{2}\log\frac{1/\nu_{\bar{Z}_l}}{1/\nu_{\bar{Y}_k}},
\end{align}
where (a) and (b) follow from the fact that $I(\bar{Y}_k;U|\bar{Z}_l) \ge 0$ and $I(\bar{Y}_k;X|U,\bar{Z}_l) \ge 0$, respectively. Thus, there must exist a parameter $\alpha \in (0,1]$ such that
\begin{align}
 h(\frac{1}{\nu_{\bar{Z}_l}}\bar{Z}_l|U) - h(\frac{1}{\nu_{\bar{Y}_k}}\bar{Y}_k|U) =\frac{1}{2}\log\frac{\alpha\sigma^2_X+1/\nu_{\bar{Z}_l}}{\alpha\sigma^2_X+1/\nu_{\bar{Y}_k}}. \label{izu-iyu}
\end{align}
Equation \eqref{izu-iyu} also indicates that
\begin{align}
h(\bar{Z}_l|U) - h(\bar{Y}_k|U) &=\frac{1}{2}\log\frac{\nu_{\bar{Z}_l}(\alpha\sigma^2_X\nu_{\bar{Z}_l}+1)}{\nu_{\bar{Y}_k}(\alpha\sigma^2_X\nu_{\bar{Y}_k}+1)} \nonumber \\
&=\frac{1}{2}\log\frac{\tH^\intercal_l\tH_l(\alpha\sigma^2_X\tH^\intercal_l\tH_l+1)}{\H^\intercal_k\H_k(\alpha\sigma^2_X\H^\intercal_k\H_k+1)}.
\label{63-63-eq}
\end{align}

The conditional Fisher information of $A$ is defined by
$
\mathbb{J}(A|U) = \mathbb{E}\left[\left(\frac{\partial\log f_{A|U}(a|u)}{\partial a}\right)^2\right],
$
where the expectation is taken over $(U,A)$ \cite[Def. 1]{ekrem-2014}.

\begin{Lemma}[\hspace{-0.1mm}{\cite[Cor. 1]{ekrem-2014}}] \label{key-lemma-2} Let $W, A, B$ be random variables, and let the density for any combination of them exist. Moreover,  assume that given $W$, $A$ and $B$ are independent. Then, we have
\begin{align}
   \frac{1}{\mathbb{J}(A + B|W)} \ge \frac{1}{\mathbb{J}(A|W)} + \frac{1}{\mathbb{J}(B|W)}.
\end{align}
\end{Lemma}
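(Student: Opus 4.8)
The plan is to reduce Lemma~\ref{key-lemma-2} to a conditional adaptation of Blachman's proof of the classical Fisher information inequality, the crucial point being the order in which one conditions on $W$, averages over $W$, and optimizes over a free parameter.

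First I would fix a realization $W=w$ and work with the conditional score functions $\rho_{A|W}(a|w)\triangleq\partial_a\log f_{A|W}(a|w)$, and likewise $\rho_{B|W}$ and $\rho_{(A+B)|W}$, so that $\mathbb{J}(A\mid W=w)=\mathbb{E}[\rho_{A|W}(A|w)^2\mid W=w]$, and similarly for the other two quantities. Since $A$ and $B$ are independent given $W=w$, the conditional density of $A+B$ is the convolution of those of $A$ and $B$; differentiating this convolution in the two equivalent ways (passing the derivative onto the $A$-factor or onto the $B$-factor) and forming a convex combination with an arbitrary weight $\lambda\in[0,1]$, then dividing by $f_{(A+B)|W}(s|w)$, gives the Blachman identity which expresses $\rho_{(A+B)|W}(s|w)$ as the conditional expectation, given $\{A+B=s,\ W=w\}$, of $\lambda\rho_{A|W}(A|w)+(1-\lambda)\rho_{B|W}(B|w)$. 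Squaring, applying Jensen's inequality (conditional expectation is an $L^2$ contraction), and then taking $\mathbb{E}[\,\cdot\mid W=w]$, the cross term $\mathbb{E}[\rho_{A|W}(A|w)\rho_{B|W}(B|w)\mid W=w]$ factors by conditional independence and vanishes since $\mathbb{E}[\rho_{A|W}(A|w)\mid W=w]=\int\partial_a f_{A|W}(a|w)\,da=0$. This yields the pointwise bound
\begin{align}
\mathbb{J}(A+B\mid W=w)\le\lambda^2\,\mathbb{J}(A\mid W=w)+(1-\lambda)^2\,\mathbb{J}(B\mid W=w).\nonumber
\end{align}

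Next I would average this inequality over $W$ \emph{while $\lambda$ is still held fixed}; because the right-hand side is affine in the conditional Fisher informations, this produces $\mathbb{J}(A+B\mid W)\le\lambda^2\,\mathbb{J}(A\mid W)+(1-\lambda)^2\,\mathbb{J}(B\mid W)$. Only at this stage would I optimize, picking $\lambda=\mathbb{J}(B\mid W)/(\mathbb{J}(A\mid W)+\mathbb{J}(B\mid W))\in[0,1]$, which minimizes the right-hand side and gives $\mathbb{J}(A+B\mid W)\le\mathbb{J}(A\mid W)\,\mathbb{J}(B\mid W)/(\mathbb{J}(A\mid W)+\mathbb{J}(B\mid W))$; rearranging yields the claimed inequality. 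The degenerate cases in which $\mathbb{J}(A\mid W)$ or $\mathbb{J}(B\mid W)$ is $0$ or $\infty$ are either trivial or disposed of by a limiting argument.

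The main obstacle, which I would flag explicitly, is exactly this order of operations: it is tempting to invoke the \emph{unconditional} Fisher information inequality for each fixed $w$ and then integrate over $w$, but since $x\mapsto1/x$ is convex the averaged reciprocals do not combine in the direction we need, so one must retain the quadratic-in-$\lambda$ bound, average it while it is still affine in the Fisher informations, and optimize $\lambda$ last. Everything else is routine regularity bookkeeping --- existence and (almost everywhere) differentiability of the relevant conditional densities, legitimacy of differentiating under the convolution integral, and enough decay of the densities at infinity for the conditional scores to have conditional mean zero --- all subsumed by the hypothesis that ``the density for any combination of them exists''.
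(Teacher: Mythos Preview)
The paper does not supply its own proof of this lemma; it is quoted verbatim from \cite[Cor.~1]{ekrem-2014} and used as a black box. Your argument is correct and is precisely the conditional adaptation of Blachman's proof of the Fisher information inequality, with the essential point---averaging the quadratic-in-$\lambda$ bound over $W$ \emph{before} optimizing $\lambda$, rather than applying the unconditional inequality pointwise and then averaging reciprocals---identified and handled properly.
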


We use Lemma \ref{key-lemma-2} to establish a lower bound on the conditional Fisher information, as presented in Lemma \ref{lower-lemma}. This lemma is then used to derive a lower bound on the difference $h(\bar{Y}_k|U) - h(X|U)$ given that $h(\bar{Z}_l|U) - h(\bar{Y}_k|U)$ is fixed.

\begin{Lemma} \label{lower-lemma} For $0 \le r \le 1/\nu_{\bar{Y}_k}$, it holds that
\begin{align}
    \mathbb{J}(X + \sqrt{r}N|U) \ge \frac{1}{\alpha\sigma^2_X + r} \label{lower-eq}
\end{align}
with an independent Gaussian random variable $N \sim \mathcal{N}(0,1)$.
\end{Lemma}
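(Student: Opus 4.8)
The idea is to run the Fisher-information argument of \cite[Sect.~IV-C]{ekrem-2014}, using \eqref{izu-iyu} as the single equality that pins down $\alpha$. Put $a \triangleq 1/\nu_{\bar{Y}_k}$ and $b \triangleq 1/\nu_{\bar{Z}_l}$. After the scaling in \eqref{barxyz}, the normalized outputs read $\tfrac{1}{\nu_{\bar{Y}_k}}\bar{Y}_k = X + \sqrt{a}\,N_1$ and $\tfrac{1}{\nu_{\bar{Z}_l}}\bar{Z}_l = X + \sqrt{b}\,N_2$ with $N_1, N_2 \sim \mathcal{N}(0,1)$; the Markov chain $U-X-\bar{Y}_k-\bar{Z}_l$ makes $N_1$ independent of $(X,U)$. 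Since $\H^\intercal_k\H_k \ge \tH^\intercal_l\tH_l$ we have $a \le b$, so, invoking the degraded representation recalled before Lemma~\ref{lemma7-outer}, I may assume $X + \sqrt{b}\,N_2 = (X + \sqrt{a}\,N_1) + \sqrt{b-a}\,N_3$ with $N_3 \sim \mathcal{N}(0,1)$ independent of $(X, N_1, U)$.

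First I would integrate the conditional de Bruijn identity $\frac{d}{ds}\,h\bigl(X + \sqrt{s}\,N \,\big|\, U\bigr) = \tfrac12\,\mathbb{J}\bigl(X + \sqrt{s}\,N \,\big|\, U\bigr)$ over $s \in [a,b]$ (here $N \sim \mathcal{N}(0,1)$ is independent of $(X,U)$), which gives
\[
h\!\left(\tfrac{1}{\nu_{\bar{Z}_l}}\bar{Z}_l \,\Big|\, U\right) - h\!\left(\tfrac{1}{\nu_{\bar{Y}_k}}\bar{Y}_k \,\Big|\, U\right) = \int_a^b \tfrac12\,\mathbb{J}\bigl(X + \sqrt{s}\,N \,\big|\, U\bigr)\,ds .
\]
Matching the left-hand side with \eqref{izu-iyu} and noting $\int_a^b \tfrac12 (\alpha\sigma^2_X + s)^{-1}\,ds = \tfrac12\log\frac{\alpha\sigma^2_X + b}{\alpha\sigma^2_X + a}$, I obtain the key integral identity $\int_a^b \mathbb{J}\bigl(X + \sqrt{s}\,N \,\big|\, U\bigr)\,ds = \int_a^b (\alpha\sigma^2_X + s)^{-1}\,ds$.

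Next I would show that $\phi(s) \triangleq \mathbb{J}\bigl(X + \sqrt{s}\,N \,\big|\, U\bigr)^{-1} - s$ is nondecreasing on $[0,b]$: for $s > s' \ge 0$ write $X + \sqrt{s}\,N = (X + \sqrt{s'}\,N') + \sqrt{s-s'}\,N''$ with $N', N''$ independent standard Gaussians independent of $(X,U)$; applying Lemma~\ref{key-lemma-2} with $W = U$ and using $\mathbb{J}\bigl(\sqrt{s-s'}\,N'' \,\big|\, U\bigr) = (s-s')^{-1}$ yields $\mathbb{J}(X + \sqrt{s}\,N \,|\, U)^{-1} \ge \mathbb{J}(X + \sqrt{s'}\,N' \,|\, U)^{-1} + (s - s')$, i.e.\ $\phi(s) \ge \phi(s')$. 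I would then conclude by contradiction: if \eqref{lower-eq} failed at some $r_0 \in [0, a]$, then $\phi(r_0) > \alpha\sigma^2_X$, hence by monotonicity $\phi(s) > \alpha\sigma^2_X$ and thus $\mathbb{J}(X + \sqrt{s}\,N \,|\, U) < (\alpha\sigma^2_X + s)^{-1}$ for all $s \ge r_0$, in particular for every $s \in [a,b]$; integrating over $[a,b]$ then contradicts the key integral identity. Therefore \eqref{lower-eq} holds for all $r \in [0, 1/\nu_{\bar{Y}_k}]$.

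The step I expect to be the main obstacle is the first one: securing the physically degraded representation $X + \sqrt{b}\,N_2 = (X + \sqrt{a}\,N_1) + \sqrt{b-a}\,N_3$ with $N_3$ independent of $(X, N_1, U)$ — this is precisely what makes the single-parameter family $X + \sqrt{s}\,N$ (with $N$ independent of $(X,U)$) legitimate, so that the entropy difference in \eqref{izu-iyu} can be written as an integral of conditional Fisher informations and so that Lemma~\ref{key-lemma-2} applies — together with checking the standard regularity hypotheses of the conditional de Bruijn identity (finiteness of $\mathbb{J}(X + \sqrt{s}\,N \,|\, U)$ and smoothness of $s \mapsto h(X + \sqrt{s}\,N \,|\, U)$ on $(0,\infty)$). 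The degenerate case $a = b$, in which \eqref{izu-iyu} imposes no constraint and $R_S \le 0$ anyway, would be treated separately.
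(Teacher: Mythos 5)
Your proof is correct and follows essentially the same route as the paper's: the conditional de Bruijn identity from \cite{ekrem-2014} turns \eqref{izu-iyu} into an integral of $\mathbb{J}(X+\sqrt{s}\,N|U)$ over $[1/\nu_{\bar{Y}_k},1/\nu_{\bar{Z}_l}]$, and Lemma~\ref{key-lemma-2} applied to the decomposition $X+\sqrt{s}\,N=(X+\sqrt{r}\,N')+\sqrt{s-r}\,N''$ yields $\mathbb{J}(X+\sqrt{s}\,N|U)\le\left(\mathbb{J}(X+\sqrt{r}\,N|U)^{-1}+s-r\right)^{-1}$, which is exactly your monotonicity of $\mathbb{J}^{-1}-s$. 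The only difference is cosmetic: the paper integrates this bound to a closed-form logarithm and inverts the resulting monotone function of $\mathbb{J}(X+\sqrt{r}\,N|U)$ directly, whereas you reach the same conclusion by contradiction against the integral identity.
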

\begin{proof}
From \cite[Lemma 3]{ekrem-2014}, it follows that
\begin{align}
h(\frac{1}{\nu_{\bar{Z}_l}}\bar{Z}_l|U) - h(\frac{1}{\nu_{\bar{Y}_k}}\bar{Y}_k|U) = \frac{1}{2}\int_{1/\nu_{\bar{Y}_k}}^{1/\nu_{\bar{Z}_l}}\mathbb{J}(X + \sqrt{t}\Tilde{N}|U)dt
\end{align}
with an independent random variable $\Tilde{N} \sim \mathcal{N}(0,1)$. Then,
\begin{align}
    \frac{1}{2}&\int_{1/\nu_{\bar{Y}_k}}^{1/\nu_{\bar{Z}_l}}\mathbb{J}(X + \sqrt{t}\Tilde{N}|U)dt \nonumber \\
    &\overset{\rm (a)}=\frac{1}{2}\int_{1/\nu_{\bar{Y}_k}}^{1/\nu_{\bar{Z}_l}}\mathbb{J}(X + \sqrt{r}N + \sqrt{t-r}N'|U)dt \nonumber \\
    &\overset{\rm (b)}\le \frac{1}{2}\int_{1/\nu_{\bar{Y}_k}}^{1/\nu_{\bar{Z}_l}}\left(\mathbb{J}(X + \sqrt{r}N|U)^{-1} + t-r\right)^{-1}dt \nonumber \displaybreak[0] \\
    &= \frac{1}{2}\int_{1/\nu_{\bar{Y}_k}}^{1/\nu_{\bar{Z}_l}}\frac{\mathbb{J}(X + \sqrt{r}N|U)}{1 + \mathbb{J}(X + \sqrt{r}N|U)(t-r)}dt \nonumber \displaybreak[0] \\
    &= \frac{1}{2}\log\frac{1 + \mathbb{J}(X + \sqrt{r}N|U)( 1/\nu_{\bar{Z}_l}-r)}{1 + \mathbb{J}(X + \sqrt{r}N|U)(1/\nu_{\bar{Y}_k}-r)},
    \label{56-564}
\end{align}
where (a) follows by picking a real number $r$ in the range of $0 \le r \le 1/\nu_{\bar{Y}_k}$ and using independent Gaussian random variables $N \sim \mathcal{N}(0,1)$ and $N'\sim \mathcal{N}(0,1)$, and (b) is due to Lemma \ref{key-lemma-2}. Lastly, comparing \eqref{izu-iyu} and \eqref{56-564}, we obtain~\eqref{lower-eq}.
\end{proof}

Observe that
\begin{align}
h&(\frac{1}{\nu_{\bar{Y}_k}}\bar{Y}_k|U) - h(X|U)= \frac{1}{2}\int_{0}^{1/\nu_{\bar{Y}_k}}\mathbb{J}(X + \sqrt{r}N|U)dr \nonumber \displaybreak[0] \\
&\overset{\rm (a)}\ge \frac{1}{2}\int_{0}^{1/\nu_{\bar{Y}_k}}\frac{1}{\alpha\sigma^2_X + r}dr =\frac{1}{2}\log\frac{\alpha\sigma^2_X + 1/\nu_{\bar{Y}_k}}{\alpha\sigma^2_X},
\end{align}
where (a) follows from Lemma \ref{lower-lemma}, which implies that 
\begin{align}
    h(\bar{Y}_k|U) - h(X|U) &\ge \frac{1}{2}\log\frac{\nu_{\bar{Y}_k}(\alpha\sigma^2_X\nu_{\bar{Y}_k} + 1)}{\alpha\sigma^2_X} \nonumber \\
    &=\frac{1}{2}\log\frac{\H^\intercal_k\H_k(\alpha\sigma^2_X\H^\intercal_k\H_k + 1)}{\alpha\sigma^2_X}. \label{rj-yuxu}
\end{align}

Substituting \eqref{63-63-eq} and \eqref{rj-yuxu} into the rate constraints in \eqref{label-rs}--\eqref{label-rl}, the outer bound for a fixed pair $(k,l)$ is expressed as
\begin{align}
    \bar{\mathcal{O}}_{G_{kl}}
    &\triangleq \bigcup_{0 < \alpha \le 1}\Big\{(R_S,R_J,R_L)\in \mathbb{R}^3_+:~\nonumber \\
    R_S &\le\frac{1}{2}\log\frac{\sigma^2_X\H^\intercal_k\H_k + 1}{\alpha\sigma^2_X\H^\intercal_k\H_k + 1} - \frac{1}{2}\log\frac{\sigma^2_X\tH^\intercal_l\tH_l + 1}{\alpha\sigma^2_X\tH^\intercal_l\tH_l + 1}, \nonumber \\
    R_J &\ge \frac{1}{2}\log\frac{\alpha\sigma^2_X\H^\intercal_k\H_k + 1}{\alpha(\sigma^2_X\H^\intercal_k\H_k + 1)}, \nonumber \\
    R_L &\ge \frac{1}{2}\log\frac{\alpha\sigma^2_X\H^\intercal_k\H_k + 1}{\alpha(\sigma^2_X\H^\intercal_k\H_k + 1)}\Big\}. \label{kl-outer}
\end{align}

Applying Lemma \ref{lemma-fx} to \eqref{discrete-outer-cs-gauss} and \eqref{kl-outer}, the outer region of the GS model for all possible index pairs $(k,l)$ is
\begin{align}
    \RG
    &\subseteq \bigcup_{0 < \alpha \le 1}\Big\{(R_S,R_J,R_L)\in \mathbb{R}^3_+:~\nonumber \\
    R_S &\le\frac{1}{2}\log\frac{(\sigma^2_X\H^\intercal_{k^*}\H_{k^*} + 1)(\alpha\sigma^2_X\tH^\intercal_{l^*}\tH_{l^*} + 1)}{(\alpha\sigma^2_X\H^\intercal_{k^*}\H_{k^*} + 1)(\sigma^2_X\tH^\intercal_{l^*}\tH_{l^*} + 1)}, \nonumber \\
    R_J &\ge \frac{1}{2}\log\frac{\alpha\sigma^2_X\H^\intercal_{k^*}\H_{k^*} + 1}{\alpha(\sigma^2_X\H^\intercal_{k^*}\H_{k^*} + 1)}, \nonumber \\
    R_L &\ge \frac{1}{2}\log\frac{\alpha\sigma^2_X\H^\intercal_{k^*}\H_{k^*} + 1}{\alpha(\sigma^2_X\H^\intercal_{k^*}\H_{k^*} + 1)}\Big\}.
\end{align}

The outer region of the CS model in \eqref{cs-cp-gauss} can be shown similarly.
\qed

\section{Proof of Lemma \ref{lemma7-outer}} \label{proof-lemma7}
We only prove \eqref{o-g-kl}, the outer bound of the GS model for a given pair $(k,l)$, as the proof of \eqref{inner-cs} follows by a similar manner. Assume that a rate tuple $(R_S,R_J,R_L)$ is achievable with respect to Definition \ref{def1} for every pair $(k,l) \in \mathcal{K}\times\mathcal{L}$.

We begin by establishing the following Markov chains:
\begin{align}
    &(J,S)-X^n-\Y^n_k-\bar{Y}^n_k,~(J,S)-X^n-\bar{Y}^n_k-\Y^n_k, \label{54-54} \\
    &(J,S)-X^n-\Z^n_l-\bar{Z}^n_l,~~(J,S)-X^n-\bar{Z}^n_l-\Z^n_l, \label{55-55} \\
   &(J,S)-X^n-(\Y^n_k,\Z^n_l)-(\bar{Y}^n_k,\bar{Z}^n_l), \label{j-55-55}
\end{align}
where the left-hand sides of \eqref{54-54} and \eqref{55-55}, and \eqref{j-55-55} hold because $\bar{Y}^n_k$ and $\bar{Z}^n_l$ are functions of $\Y^n_k$ and $\Z^n_l$, respectively, by Lemma \ref{bar-relation}, and the right-hand sides of \eqref{54-54} and \eqref{55-55} are due to the sufficient statistic \cite[Sect. 2.9]{cover}. In addition, for the scalar random variables $(X,\bar{Y}_k,\bar{Z}_l)$, the Markov chain $X^n-\bar{Y}^n_k-\bar{Z}^n_l$ holds for any pair $(k,l)$ when $\H^\intercal_{k^*}\H_{k^*} \ge \tH^\intercal_{l^*}\tH_{l^*}$. Combining this with \eqref{j-55-55} gives
\begin{align}
    (J,S)-X^n-\bar{Y}^n_k-\bar{Z}^n_l. \label{long-markov-2}
\end{align}

Define auxiliary random variables
\begin{align}
V_t = (J,\bar{Y}^n_{k,t+1},\bar{Z}^{t-1}_l)~{\rm and}~U_t = (J,S,\bar{Y}^n_{k,t+1},\bar{Z}^{t-1}_l),
\end{align}
which guarantee the Markov chain
\begin{align}
    V_t-U_t-X_t-\bar{Y}_{k,t}-\bar{Z}_{l,t}. \label{60-60}
\end{align}

Also, we define
\begin{align}
    \delta_n = \frac{1}{n}\left(H_b(\delta) + \delta\log|\mathcal{S}|\right), \label{delta-n}
\end{align}
where $H_b(\cdot)$ denotes the binary entropy function, and $\delta_n \downarrow 0$ as $\delta \downarrow 0$ and $n \rightarrow \infty$.

\smallskip
{\em Analysis of Secret-Key Rate}: From \eqref{secretk},
\begin{align}
    &n(R_S - \delta) \le H(S) \nonumber \\
    &= H(S|J,\Z^n_l) + I(S;J,\Z^n_l) \nonumber \\
    &\overset{\rm (a)}\le H(S|J,\Z^n_l) - H(S|J,\Y^n_k) + n(\delta + \delta_n) \nonumber \\
    &\overset{\rm (b)}= H(S|J,\Z^n_l,\bar{Z}^n_l) - H(S|J,\Y^n_k,\bar{Y}^n_k) + n(\delta + \delta_n) \nonumber \displaybreak[0]\\
    &\overset{\rm (c)}= H(S|J,\bar{Z}^n_l) - H(S|J,\bar{Y}^n_k) + n(\delta + \delta_n) \nonumber \displaybreak[0]\\
    &= I(S;\bar{Y}^n_k|J)-I(S;\bar{Z}^n_l|J) + n(\delta + \delta_n) \nonumber \displaybreak[0]\\
    &\overset{\rm (d)} = \sum_{t=1}^n\{I(\bar{Y}_{k,t};U_t|V_t)-I(\bar{Z}_{l,t};U_t|V_t)\} + n(\delta + \delta_n) \label{8080-80} \displaybreak[0]\\
    &\overset{\rm (e)}= \sum_{t=1}^n\{I(\bar{Y}_{k,t};U_t)-I(\bar{Z}_{l,t};U_t) \nonumber \\
    &~~~-(I(\bar{Y}_{k,t};V_t)-I(\bar{Z}_{l,t};V_t))\} + n(\delta + \delta_n) \nonumber \\
    &\overset{\rm (f)}\le \sum_{t=1}^n\{I(\bar{Y}_{k,t};U_t)-I(\bar{Z}_{l,t};U_t)\} + n(\delta + \delta_n), \label{rs-64}
\end{align}
where (a) is due to \eqref{secrecy} and Fano's inequality with $\delta_n$ defined in \eqref{delta-n} as the secret key $S$ can be reliably estimated from $(J,\Y^n_k)$, (b) follows from the left-hand sides of \eqref{54-54} and \eqref{55-55}, (c) holds by the right-hand sides of \eqref{54-54} and \eqref{55-55}, (d) follows by \cite[Lemma 4.1]{ac1993}, (e) follows from the Markov chains $V_t-U_t-\bar{Y}_{k,t}$ and $V_t-U_t-\bar{Z}_{l,t}$, and (f) is due to \eqref{60-60}, which results in $I(\bar{Y}_{k,t};V_t)-I(\bar{Z}_{l,t};V_t) \ge 0$.

\smallskip
{\em Analysis of Storage Rate}: From \eqref{storage},
\begin{align}
    n&(R_J + \delta) \ge \log |\mathcal{J}| \ge H(J) = I(X^n;J) \nonumber \\
    &\overset{\rm (a)}\ge I(X^n;J|\Y^n_k) \nonumber \\
    &\ge I(X^n;J,S|\Y^n_k)-H(S|\Y^n_k,J) \nonumber \displaybreak[0]\\
    &\overset{\rm (b)}\ge I(X^n;J,S|\Y^n_k)-n\delta_n \nonumber \displaybreak[0]\\
    &= I(X^n;\bar{Y}^n_k|\Y^n_k) + I(X^n;J,S|\bar{Y}^n_k,\Y^n) \nonumber \\
    &~~~~- I(X^n;\bar{Y}^n_k|J,S,\Y^n_k)-n\delta_n \nonumber \displaybreak[0]\\
    & \overset{\rm (c)}= I(X^n;J,S|\bar{Y}^n_k,\Y^n_k) -n\delta_n \nonumber \displaybreak[0]\\
    &\overset{\rm (d)} = I(X^n;J,S|\bar{Y}^n_k)-n\delta_n \label{64-64} \displaybreak[0]\\
    &\overset{\rm (e)} = \sum_{t=1}^n\{h(X_t|\bar{Y}_{k,t})-h(X_t|J,S,X^{t-1},\bar{Y}^n_k,\bar{Z}^{t-1}_l)\}-n\delta_n \nonumber \displaybreak[0] \\
    &\overset{\rm (f)} \ge \sum_{t=1}^n\{h(X_t|\bar{Y}_{k,t})-h(X_t|U_t,\bar{Y}_{k,t})\}-n\delta_n \nonumber \displaybreak[0]\\
    &\overset{\rm (g)}= \sum_{t=1}^n\{I(X_t;U_t)-I(\bar{Y}_{k,t};U_t)\}-n\delta_n, \label{65-65}
\end{align}
where (a) is due to the Markov chain $J-X^n-\Y^n_k$, (b) follows by Fano's inequality with $\delta_n$ defined in \eqref{delta-n}, (c) follows because $\bar{Y}^n_k$ is a function of $\Y^n_k$, (d) holds from the left-hand side of \eqref{54-54}, (e) holds due to the Markov chain $X_t - (J,S,X^{t-1},\bar{Y}^n_k,)-\bar{Z}^{t-1}_l$, (f) follows because conditioning reduces entropy, and (g) is due to $U_t-X_t-\bar{Y}_{k,t}$.

\smallskip
{\em Analysis of Privacy-Leakage Rate}: For a fixed $l$, we first show that the left-hand side of \eqref{privacy} is preserved when $(X^n,\Z^n)$ is replaced with $(X^n,\bar{Z}^n)$.
\begin{align}
    I(X^n;J|\Z^n_l) &\overset{\rm (a)}= I(X^n;J) - I(\Z^n_l;J) \nonumber \\
    &\overset{\rm (b)}= I(X^n;J) - I(\Z^n_l,\bar{Z}^n_l;J) \nonumber \\
    &\overset{\rm (c)}= I(X^n;J) - I(\bar{Z}^n_l;J) \nonumber \\
    &\overset{\rm (d)}= I(X^n;J|\bar{Z}^n_l), \label{preserve-61}
\end{align}
where (a), (b), (c), and (d) follow from the Markov chains $J-X^n-\Z^n_l$, $J-\Z^n_l-\bar{Z}^n_l$, $J-\bar{Z}^n_l-\Z^n_l$, and $J-X^n_l-\bar{Z}^n_l$, respectively, all of which are obtained as special cases of \eqref{55-55}.

Therefore, we can evaluate the privacy-leakage rate as
\begin{align}
    &n(R_L + \delta) \ge I(X^n;J|\Z^n_l) \nonumber \\ 
    &\overset{\rm (a)}=I(X^n;J|\bar{Z}^n_l) \nonumber \displaybreak[0]\\
    &=I(X^n;J,S,\Y^n_k|\bar{Z}^n_l) - I(X^n;\Y^n_k|J,\bar{Z}^n_l) \nonumber\displaybreak[0] \\
    &~~- I(X^n;S|J,\Y^n_k,\bar{Z}^n_l) \nonumber \displaybreak[0]\\
    &\ge I(X^n;J,S,\Y^n_k|\bar{Z}^n_l) - I(X^n;\Y^n_k|J,\bar{Z}^n_l) - H(S|J,\Y^n_k) \nonumber \displaybreak[0] \\
    &\ge I(X^n;J,S,\Y^n_k|\bar{Z}^n_l) - I(X^n;\Y^n_k|J,\bar{Z}^n_l) -n\delta_n \nonumber \\
    &\overset{\rm (b)}= I(X^n;J,S,\bar{Y}^n_k,\Y^n_k|\bar{Z}^n_l) - I(X^n;\bar{Y}^n_k,\Y^n_k|J,\bar{Z}^n_l) -n\delta_n \nonumber \\
    &\overset{\rm (c)}= I(X^n;J,S,\bar{Y}^n_k|\bar{Z}^n_l) - I(X^n;\bar{Y}^n_k|J,\bar{Z}^n_l) -n\delta_n \nonumber \displaybreak[0]\\
    &\overset{\rm (d)}= I(X^n;J,S|\bar{Y}^n_k,\bar{Z}^n_l) + I(X^n;\bar{Y}^n_k|\bar{Z}^n_l) \nonumber \\
    &~~~- (I(X^n;\bar{Y}^n_k|\bar{Z}^n_l) - I(\bar{Y}^n_k;J|\bar{Z}^n_l))  -n\delta_n \nonumber \displaybreak[0]\\
    &\ge I(X^n;J,S|\bar{Y}^n_k,\bar{Z}^n_l) -n\delta_n \nonumber \displaybreak[0]\\
    &\overset{\rm (e)}= I(X^n;J,S|\bar{Y}^n_k)  -n\delta_n \nonumber \displaybreak[0]\\
    &\overset{\rm (f)}\ge \sum_{t=1}^n\{I(X_t;U_t)-I(\bar{Y}_{k,t};U_t)\}-n\delta_n, \label{rl-68}
\end{align}
where (a) is due to \eqref{preserve-61}, (b) follows from \eqref{barxyz}, i.e., $\bar{Y}^n_k$ is a function of $\Y^n_k$, (c) holds because the Markov chain $X^n-(J,S,\bar{Y}^n_k,\bar{Z}^n_l)-\Y^n_k$ holds, from the right-hand side of \eqref{54-54}, (d) and (e) are due to the Markov chain $(J,S)-X^n-\bar{Y}^n_k-\bar{Z}^n_l$, obtained from \eqref{long-markov-2}, and (e) follows by the same steps from \eqref{64-64} to \eqref{65-65}.

For the case where $\H^\intercal_{k^*}\H_{k^*} < \tH^\intercal_{l^*}\tH_{l^*}$, the Markov chain $V_t-U_t-X_t-\bar{Z}_{l,t}-\bar{Y}_{k,t}$ holds. The secret-key rate follows from \eqref{8080-80} since $I(\bar{Y}_{k,t};U_t|V_t) \le I(\bar{Z}_{l,t};U_t|V_t)$.

Finally, we introduce a time-sharing random variable $Q \sim {\rm Unif}[1:n]$, independent of other random variables, and define $U = (U_Q,Q)$, $X = X_Q$, $\bar{Y}_k= \bar{Y}_{k,Q}$, and $\bar{Z}_l = \bar{Z}_{l,Q}$, so that the Markov chain $U-X-\bar{Y}_k-\bar{Z}_l$ holds. By letting $n \rightarrow \infty$ and $\delta \downarrow 0$, one can see that for a given pair $(k,l)$, the outer bound of the GS model is given by \eqref{o-g-kl}. Hence, the outer bound valid for any pair $(k,l)$ is given by \eqref{discrete-outer-cs-gauss}.
\qed

\bibliographystyle{IEEEtran}

\end{document}